\newcolumntype{L}[1]{>{\raggedright\let\newline\\\arraybackslash\hspace{0pt}}m{#1}}
\newcolumntype{C}[1]{>{\centering\let\newline\\\arraybackslash\hspace{0pt}}m{#1}}
\newcolumntype{R}[1]{>{\raggedleft\let\newline\\\arraybackslash\hspace{0pt}}m{#1}}
\newmdenv{allfour}
\newmdenv[leftline=false,rightline=false]{topbot}
\newmdenv[topline=false,rightline=false]{leftbot}
\newmdenv[topline=false,leftline=false]{rightbot}
\newmdenv[topline=false,rightline=false,leftline=false]{bottom}
\newtheorem{theorem}{Theorem}[section]
\newtheorem{prop}[theorem]{Proposition}
\newtheorem{corol}[theorem]{Corollary}
\theoremstyle{remark}
\newtheorem{rem}[theorem]{Remark}
\theoremstyle{definition}
\providecommand{\keywords}[1]{\textbf{\textit{Keywords -- }} #1}
\newcounter{defcounter}
\newcommand{\R}{\mathbb{R}}
\newcommand{\peq}{\;.}
\newcommand{\veq}{\;,}
\newcommand{\ind}{\mathds{1}}
\newcommand{\bi}{[\![}
\newcommand{\ei}{]\!]}
\newcommand{\argmin}{\mathop{\mathrm{argmin}}}
\newcommand{\m}{\mathrm{m}}
\newcommand{\lgt}{\mathrm{lgt}}
\newcommand{\J}{\mathrm{Jac}}
\newcommand{\q}[1]{``#1''}
\newcommand{\dd}{\mathrm{d}}
\newcommand{\di}{\,\mathrm{d}}
\newcommand{\pv}{\,;\,}
\definecolor{myred}{HTML}{B83232}
\definecolor{myblue}{HTML}{28568A}
\definecolor{mygreen}{HTML}{288A42}
\definecolor{darkgreen}{rgb}{0,.6,0}
\begin{document}

\title[Parameter and density estimation from real-world traffic data]{Parameter and density estimation from real-world traffic data:
A kinetic compartmental approach}

	\author[M.~Pereira]{Mike Pereira} \address[Mike Pereira]{\newline Department of Mathematical Sciences \& Department of Electrical Engineering
	\newline Chalmers University of Technology \& University of Gothenburg
	\newline S--412 96 G\"oteborg, Sweden.} \email[]{mike.pereira@chalmers.se}

	\author[P.~Boyraz Baykas]{Pinar Boyraz Baykas} \address[Pinar Boyraz Baykas]{\newline Department of Mechanics and Maritime Sciences
	\newline Chalmers University of Technology
	\newline S--412 96 G\"oteborg, Sweden.}
	\author[B.~Kulcs\'ar]{Bal\'azs Kulcs\'ar} \address[Bal\'azs Kulcs\'ar]{\newline Department of Electrical Engineering
	\newline Chalmers University of Technology
	\newline S--412 96 G\"oteborg, Sweden.} \email[]{kulcsar@chalmers.se}

	\author[A.~Lang]{Annika Lang} \address[Annika Lang]{\newline Department of Mathematical Sciences
	\newline Chalmers University of Technology \& University of Gothenburg
	\newline S--412 96 G\"oteborg, Sweden.} \email[]{annika.lang@chalmers.se}

	\thanks{
	Acknowledgement. Authors thank G Szederkenyi and Gy Liptak for the inspiring discussions on TRM. 	
	The authors thank for the financial support of the Chalmers AI Research Centre (CHAIR) under the projects STONE, RITE, and SCNN and Transport Area of Advance (Chalmers University of Technology). This work has been partially supported and funded by OPNET (Swedish Energy Agency, 46365-1), the Wallenberg AI, Autonomous Systems and Software Program (WASP) funded by the Knut and Alice Wallenberg Foundation, and the Swedish Research Council (project nr.\ 2020-04170).
}

	\subjclass{65M12, 35L03, 35D40, 62G05}
\keywords{Traffic reaction model, macroscopic model, hyperbolic PDE, finite volume scheme, Lax--Friedrichs scheme, parameter estimation, viscosity solutions, CFL condition, gradient descent, highD, real traffic data.}

\begin{abstract}
The main motivation of this work is to assess the validity of a LWR traffic flow model to model measurements obtained from trajectory data, and propose extensions of this model to improve it. A formulation for a discrete dynamical system is proposed aiming at reproducing the evolution in time of the density of vehicles along a road, as observed in the measurements. This system is formulated as a chemical reaction network where road cells are interpreted as compartments, the transfer of vehicles from one cell to the other is seen as a chemical reaction between adjacent compartment and the density of vehicles is seen as a concentration of reactant. Several degrees of flexibility on the parameters of this system, which basically consist of the reaction rates between the compartments, can be considered: a constant value or a function depending on time and/or space. Density measurements coming from trajectory data are then interpreted as observations of the states of this system at consecutive times. Optimal reaction rates for the system are then obtained by minimizing the discrepancy between the output of the system and the state measurements. 
This approach was tested both on simulated and real data, proved successful in recreating the complexity of traffic flows despite the  assumptions on the flux-density relation.
\end{abstract}

\maketitle


\section{Introduction}

Modeling traffic flow to reflect macroscopic vehicular patterns becomes more and more important in the area of connectivity and autonomy \citep{Book_Treiber2013,Book_Kessel,Piccoli2006,Ambrosio2009}. In this regard, proposing  macroscopic traffic flow models capable of reproducing traffic flow patterns observed in real-world setting is a key problem. Such patterns are traditionally observed through data collected from sensors installed on a road (e.g.\ loop detectors) which collect vehicle counts or occupancy times, which in turn are aggregated and filtered to yield density, flux and speed estimates \citep{leduc2008road}. Ongoing progress in image capturing and processing capabilities have permitted to multiply and democratize the use of vehicle trajectory data, which arguably provide a more complete and faithful picture of traffic behaviors since the evolution of each vehicle  can be tracked along the road \citep{lu2007freeway,highDdataset}. We also expect that the penetration of connective vehicular technology and novel sensing and communication systems will further propel the above transition.   

Model-wise, macroscopic traffic flow models play a fundamental role \citep{Piccoli2006} to model network level behavior or management solutions. In particular, first order traffic flow models are predominant and reflect fundamental macroscopic properties such as conservation and flow property \citep{Book_Kessel}. Such models consider three main fundamental quantities defined over time and along the road (space): the density of vehicles at a given location and time, the flux (or flow) of vehicles passing a given location at a given time and the average speed at a given location and time. These three variables are linked to one another by the fundamental relationship of traffic, 
and are captured by first order hyperbolic partial differential equation \citep{Piccoli2006} linking flux and density of vehicles. In the well-studied Lighthill--Witham--Richards (LWR) model, the speed of vehicles is expressed as a function depending on the density only, thus yielding a partial differential equation (PDE) satisfied by this last quantity \citep{lighthill1955kinematic,richards}.

The question of assessing the legitimacy of the hypotheses made by the continuous models is natural, especially when working with traffic flow data obtained from real-word measurements \citep{fan2013data}. Take for instance \Cref{fig:fd_real}, which shows a fundamental diagram, i.e.\ a scatter plot of flow vs density measurements done at the same locations and time, obtained from the real-world trajectory data (cf. \Cref{sec:edie} for more details) of the highD dataset \citep{highDdataset}. Modeling the seemingly complex flux-density relationship observed in this diagram by a simple univariate function, as done in the LWR model, then becomes a questionable choice. Hence the following questions motivated this work: How valid is the choice of a univariate (only density-dependent) flux function when modeling traffic flows from real-world trajectory data? How can we enhance such traffic flow models for them to better reflect the patterns observed in the available data? Answering this questions and learning to mimick the spatial and temporal change of parameters in fundamental diagrams (flux functions) has two major benefits. First, it contributes to improve the acuracy in describing the traffic dynamics (e.g. propagation of jams). Second, it supports traffic management solution by enabling to reach better performance (i.e. traffic control oriented).   

\begin{figure}
\centering
\includegraphics[width=0.5\textwidth]{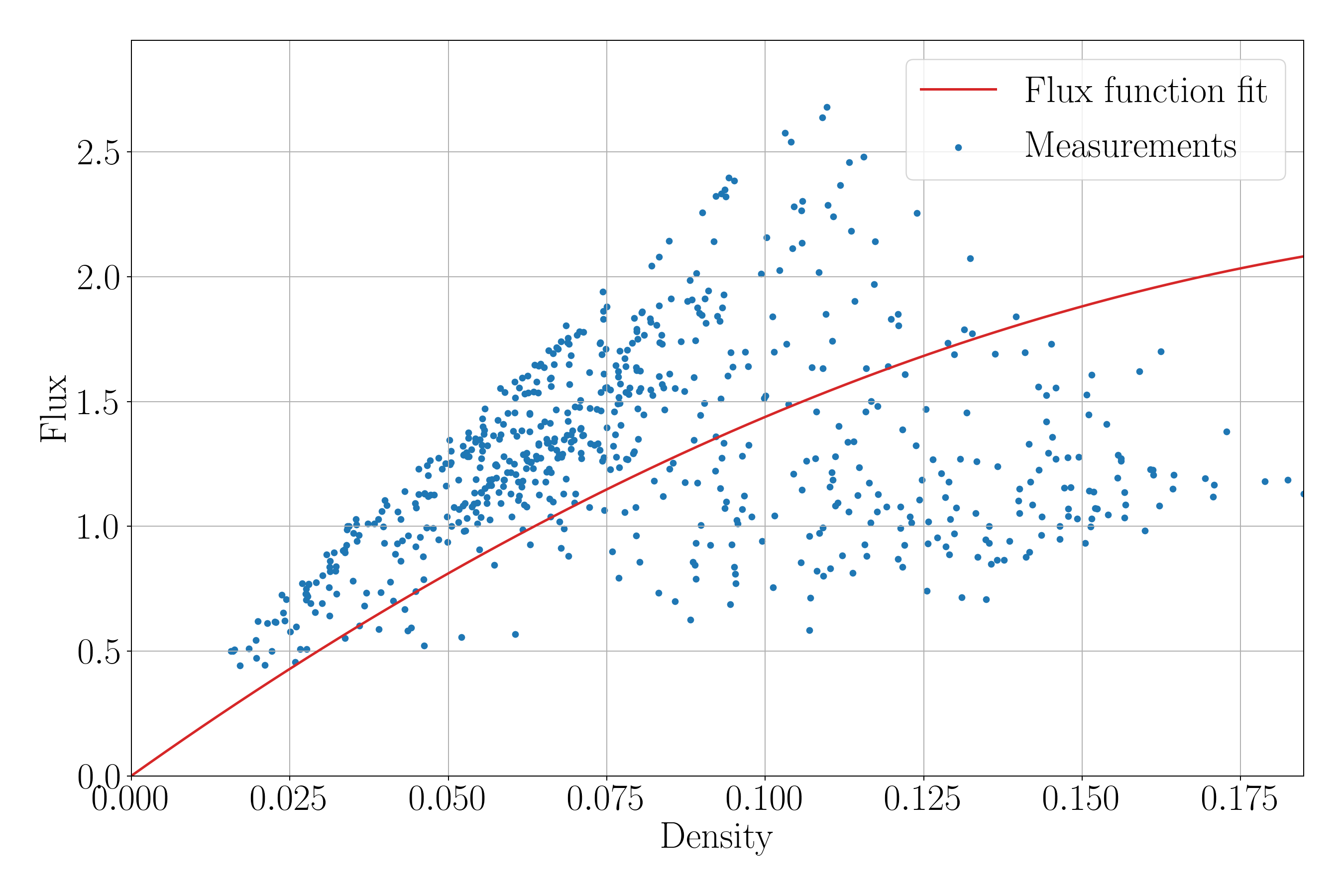}
\caption{Fundamental diagram obtained from trajectory data (highD dataset).}
\label{fig:fd_real}
\end{figure}

In this regard, we first have to bridge  the gap between continuous models and discrete measurements of traffic flows. This can be done by first fitting a flux-density function on a fundamental diagram (i.e.\ a scatter plot flux vs density measurements done at the same locations in space and time). An example of such a fit can be seen in \Cref{fig:fd_real}. Then, the resulting traffic flow PDE is solved (numerically) and finally the solution is compared to the measurements \citep{fan2013data}. In such an approach, the focus is put rather on finding an adequate family of flux functions that would best fit the fundamental diagram, and not on the underlying assumption that it depends only on the density. The first contribution of this work is to propose a different approach, which flips the two steps. In this approach, the gap between continuous models and  trajectory-based measurements is bridged by working on discretized versions of the modeling PDE obtained by finite volume methods, thus extending an approach already used to answer control-related questions linked to these traffic models \citep{karafyllis2019feedback,goatin2016speed,delle2017traffic}. The following two-step comparison is then performed: In the first step, only the density measurements are used. They are compared to the discretized PDEs through a least-square approach, thus allowing to determine the discretized models that best approximate the measurements. In the second step, the flux values resulting from the discretized PDEs are compared to the flux measurements: the fundamental diagrams and the values obtained in both cases are compared. The originality of this approach hence stems from the fact that it evaluates separately the density and flux modeling capabilities of the model.

The second contribution of this work is the interpretation of discretized PDEs as discrete dynamical systems whose parameters are directly linked to those of the original continuous PDE models, thus yielding an approach similar to the cell transmission model of \citep{ctm_daganzo}. Besides, we use in this work a particular discretization scheme, the so-called Traffic Reaction Model (TRM) \citep{liptak2020traffic}, which models flow dynamics along a discretized road as a chemical reaction network, thus allowing us to interpret the parameters of the system (and more generally of the PDE) as chemical reaction rates. Another contribution is that we propose an extension of the TRM taking the form of space and/or time dependent parameters and aiming at enhancing the modeling capabilities of the model. Finally, we validate our approach and draw conclusions on the choice of a LWR model to model real traffic flows through numerical experiments using both synthetic and real data.

The paper is organized as follows: In \Cref{sec:mdl} we recall  the derivation of (first order) continuous models for traffic flow, their discretization using finite volume methods and introduce the traffic reaction model, as well as a proposed extension. In \Cref{sec:estim}, we present the discrete dynamical system and how its parameters can be tuned with the objective of mimicking observed density measurements. Finally, we present in \Cref{sec:num} numerical experiments that were conducted on synthetic and real traffic data.


\section{Continuous traffic models  and their discretization}
\label{sec:mdl}

\subsection{First order macroscopic traffic flow model}

Let us assume that a traffic flow is studied on a one-directional stretch of road. The density of vehicles $\rho$ and the flux (or flow) of vehicles $\phi$ are two continuous quantities defined across space (i.e.\ along the road) and time routinely used to characterize the traffic flow in macroscopic models. Integrating the density function at a time~$t$, across a section $\Delta x$ of the road, gives the count of vehicles in $\Delta x$ at~$t$; and integrating the flux function at a location $x$ of the road, across a time interval $\Delta t$, gives the count of the vehicles crossing $x$ during $\Delta t$. 

For any two locations $x_1<x_2$ on the road and time~$t$, a conservation law can hence be written to express the fact that the variation of the number of vehicles between $x_1$ and $x_2$ is equal\footnote{For sake of data availability and exposition of the method, we assume in this work that no on- or off-ramps are present in the road, meaning that no additional source or sink terms need to be added to the law of conservation of vehicles.} to the difference between the number of vehicles entering this road section at $x_1$ and those leaving the section at $x_2$. This gives the integral representation
\begin{equation}
\frac{\dd}{\dd t}\left(\int_{x_1}^{x_2} \rho(t,x) \di x  \right) = \phi(t,x_1)-\phi(t,x_2), \quad t\ge 0, \quad x_1< x_2 \in\R\veq
\label{eq:cons_law}
\end{equation}
which is equivalent to the following partial differential equation
\begin{equation*}
\frac{\partial \rho}{\partial t}(t,x) + \frac{\partial \phi}{\partial x}(t,x) =0, \quad  t\ge 0, \quad x\in\R\veq
\end{equation*}
under suitable regularity conditions on the functions $\rho$ and $\phi$.

It is common to assume some additional relationship between the density $\rho$ and the flux $\phi$ based on some observed links between the two quantities. For instance, when the density is $0$ (meaning that the road is empty), so should the flux. Similarly, when the density is at its maximal value (corresponding to a bumper-to-bumper traffic), the flux should be zero as well. The so-called  Lighthill--Whitham--Richards (LWR) model \citep{lighthill1955kinematic,richards} in particular stems from these observations by expressing the flux as a function $ f$ of the density (only) as follows
\begin{equation*}
\phi(t,x)=f(\rho(t,x)), \quad t\ge 0, \quad x\in\R \veq
\end{equation*}
where $f$ is a (univariate) function satisfying $f(0)=0$ and $f(\rho_{m})=0$, for $\rho_m$ the maximal value the density can take. The simplest form $f$ can take is arguably the quadratic function defined by
\begin{equation}
f(\rho \pv v_m, \rho_m)=\rho \cdot v_m \left(1-\frac{\rho}{\rho_{\m}}\right),\quad \rho \in [0, \rho_m] \veq
\label{eq:f}
\end{equation}
where $v_m>0$ is a parameter that can be interpreted as the maximal speed achievable by vehicles on the road and the notation $f(\rho \pv v_m, \rho_m)$ is used to mark the fact that $f$ is seen as function of the density $\rho$ depending on the two parameters $v_m$ and $\rho_m$. Then, the conservation law under the LWR model becomes
\begin{equation}
\frac{\partial \rho}{\partial t}(t,x) + \frac{\partial }{\partial x}\left(f(\rho \pv v_m, \rho_m)\right)(t,x) =0, \quad  t\ge 0, \quad x\in\R\peq
\label{eq:pde_rho}
\end{equation}
We will assume that the maximal density $\rho_m>0$ is a known constant, which characterizes the capacity of the road, i.e., the maximal amount of vehicles that can fit on a road section. As such, it can be directly estimated by considering for instance the ratio between the number of lanes and the typical length of a vehicle.  We hence introduce the normalized density function $u$, which will become from now on our main variable of interest:
\begin{equation*}
u=\frac{\rho}{\rho_m} \in [0,1] \peq
\end{equation*}
Then, PDE~\eqref{eq:pde_rho} leads to the following PDE satisfied by $u$:
\begin{equation}
\frac{\partial u}{\partial t}(t,x) + \frac{\partial }{\partial x}\left(f(u \pv v_m)\right)(t,x) =0, \quad  t\ge 0, \quad x\in\R\veq
\label{eq:pde_u}
\end{equation}
where $f$ also denotes, with a slight abuse of notation, the normalized flux function defined by
\begin{equation}
f(u \pv v_m)=u \cdot v_m \left(1-u\right),\quad u \in [0, 1] \peq
\label{eq:f_u}
\end{equation}
For any given bounded initial condition $u_0 = u(0,\cdot)$, the existence and uniqueness of (entropy) solutions of PDE~\eqref{eq:pde_u} on $\R_+\times\R$, $T>0$, is guaranteed in the more general case where the maximal speed is taken to be a function $(t,x)\mapsto v_m(t,x)$ that is bounded and Lipschitz continuous \citep{karlsen2004convergence,chen2005quasilinear}.

\subsection{Finite volume approximations of PDEs}

In general, no closed-form solution of PDE~\eqref{eq:pde_u} is available, and therefore, numerical methods must be used to approximate it. In particular, finite volume schemes have been widely used to compute solutions of the (hyperbolic) PDEs  of the form~\eqref{eq:pde_u}. As we will see, the quantities computed by such schemes relate to integrals of the solution, which are more appropriate {since hyperbolic PDEs often have solutions that develop discontinuities in finite time and therefore for which point evaluations do not make sense everywhere} \citep{leveque2002finite}. 

Assume that PDE~\eqref{eq:pde_u} is {approximated on} a domain that has been discretized as follows: in time we consider equidistant time steps $t_i=i\Delta t$ for step size $\Delta t>0$ and $i\in\mathbb{N}$, in space we consider {equispaced} cells of size $\Delta x>0$ with centroids $x_j=j\Delta x$ for $j\in\mathbb{Z}$. We then introduce the cell average functions $U_j$ defined by
\begin{equation}
U_j(t) = \frac{1}{\Delta x} \int_{x_j-\Delta x/2}^{x_j+\Delta x/2} u(t, x)\di x, \quad t\ge 0, \quad j\in\mathbb{Z} \peq
\label{eq:cell_avg}
\end{equation}
Hence, $U_j$ is the cell average of the solution $u$ of the PDE on the $j$-th cell.
Then, the conservation law~\eqref{eq:cons_law} applied on the $j$-th cell yields the following differential equation satisfied by $U_j$
\begin{equation}
\frac{\dd U_j}{\dd t}(t)=\frac{1}{\Delta x}\left[f(u(t,x_{j-1/2})\pv v_m)-f(u(t,x_{j+1/2})\pv v_m) \right], \quad t\ge 0, \quad j\in\mathbb{Z} \veq
\label{eq:cons_fv}
\end{equation}
where $x_{j\pm 1/2}=x_j\pm\Delta x/2$ are the boundary points of the $j$-th cell, and $f(u(t,x_{j\pm 1/2})\pv v_m)=\phi(t,x_{j\pm 1/2})/\rho_m$ denotes the (normalized) flux of vehicles at each boundary of the cell. 

Finite volume methods propose to turn this set of equations into a system of ordinary differential equations by replacing the right-hand side of~\eqref{eq:cons_fv} by a function of the cell averages $\lbrace U_j : j\in \mathbb{Z} \rbrace$. In the particular case of (3-point) conservative schemes, the flux at a boundary point  $x_{j-1/2}$ is replaced by a so-called numerical flux $F(U_{j-1}, U_j\pv v_m)$ depending on the cell averages of the two cells $j-1$ and $j$ sharing that boundary and on the parameter $v_m$ defining the flux function. 

Hence, \Cref{eq:cons_fv} yields a system of ODEs for the finite volume approximations $\widehat{U}_j$ of the cell averages~$U_j$: 
\begin{equation}
\frac{\dd \widehat{U}_j}{\dd t}(t)=\frac{1}{\Delta x}\left[F(\widehat{U}_{j-1}, \widehat{U}_j\pv v_m)-F(\widehat{U}_{j}, \widehat{U}_{j+1}\pv v_m) \right], \quad t\ge 0, \quad j\in\mathbb{Z} \peq
\label{eq:ode_fv}
\end{equation}
We assume, for all $j$, that the initial condition $\lbrace U_j(0) : j \in \mathbb{Z}\rbrace$ is known, and use it to set the initial values of the finite volume approximation $\widehat{U}_j(t_0)=U_j(t_0)$. Then 
the numerical solution of the PDE using such schemes can be obtained at times $t_1, t_2, \dots$ by considering an Euler time discretization (of step size $\Delta t$) of the system~\eqref{eq:ode_fv}, which yields the recurrence relation
\begin{equation}
\widehat{U}_j^{i+1}=\widehat{U}_j^i+\frac{\Delta t}{\Delta x}\left[F(\widehat{U}_{j-1}^i, \widehat{U}_j^i\pv v_m)-F(\widehat{U}_{j}^i, \widehat{U}_{j+1}^i\pv v_m) \right], \quad j\in\mathbb{Z}, \quad i\in\mathbb{N} \veq
\label{eq:rec_fv}
\end{equation}
where $\widehat{U}_j^i$ is the quantity defined by $\widehat{U}_j^i\approx \widehat{U}_j(t_i)$. Different choices of numerical flux $F$ yield different schemes. Among the choices most encountered in the literature, we can cite the Lax--Friedrichs  (LxF) scheme and the Godunov scheme which are presented in~\Cref{sec:fvs} \citep{leveque2002finite,eymard2000finite,barth2018finite}.

\subsection{Traffic reaction model}
\label{sec:trm}

In this paper, we focus on a particular finite volume scheme \citep{liptak2020traffic} for solving PDE~\eqref{eq:pde_u} under the assumption that $f$ is given by~\eqref{eq:f}: the  so-called \q{Traffic Reaction Model} (TRM). The TRM is obtained by modeling the road traffic dynamic (on the discretized road) as a chemical reaction network. More precisely, the process of a vehicle passing from a cell $j$ to the next one is interpreted as a chemical reaction that \q{transforms} a unit of occupied space $O_j$ in $j$ and a unit of free space $\Phi_{j+1}$ in $j+1$ into  a unit of free space $\Phi_j$ in $j$ and a unit of occupied space $O_{j+1}$ in $j+1$ (cf. \Cref{fig:rr}). Hence, the road cells are interpreted as compartments containing two homogeneously distributed chemical reactants (Free space $\Phi_j$ and Occupied space $O_j$) and interacting with each other (through the \q{transfer} reaction). 

\begin{figure}
     \centering
\resizebox{0.8\textwidth}{!}{
	\input{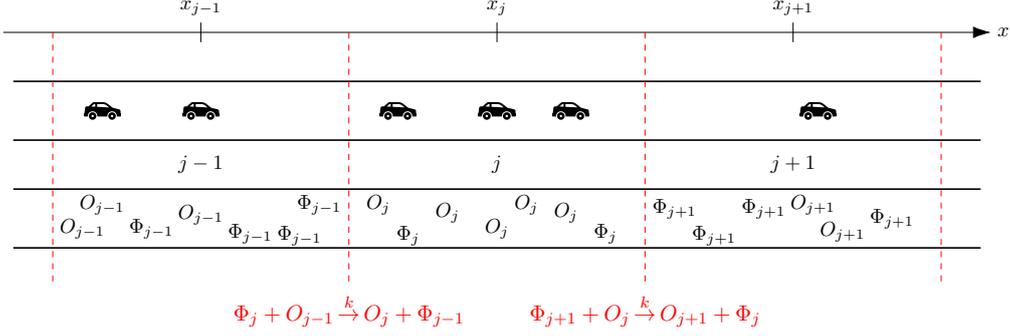}
}
\caption{Representation of the Traffic Reaction Model interpretation of traffic flow.}
\label{fig:rr}
\end{figure}

The law of mass action then allows to study the kinetics of this network of reactions \citep{Feinberg2019}. The rate at which a particular reaction happens is then modeled as being proportional to the product of the concentration of each one of its reactants (elevated to the power of the stoichiometric coefficient of the reactant, i.e., the number of \q{units} of this reactant consumed by a single reaction).  Denote by $o_j(t)$ (resp $\varphi_j(t)$) the concentration of Occupied space $O_j$ (resp. Free space $\Phi_j$) in the $j$-th compartment at time $t\ge 0$. The evolution of these concentrations can be expressed as the difference between the rate at which they are produced by the rate at which they are consumed by the reactions happening in the compartment, thus giving 
\begin{equation*}
\left\lbrace\begin{aligned}
\frac{\dd o_j}{\dd t} &=k_{j-1\rightarrow j}o_{j-1}\varphi_j-k_{j\rightarrow j+1}o_{j}\varphi_{j+1}  \\
\frac{\dd \varphi_j}{\dd t}&=-k_{j-1\rightarrow j}o_{j-1}\varphi_j+k_{j\rightarrow j+1}o_{j}\varphi_{j+1} 
\end{aligned}\right. , \quad j\in\mathbb{Z} \veq
\end{equation*}
where, for any $j$, $k_{j-1\rightarrow j}$ is the reaction rate (proportionality) constant of the reaction between the $(j-1)$-th compartment and the $j$-th compartment. Adding these two differential equations gives in particular that the quantity $\varphi_j+o_j$ is conserved through time.

From now on, we assume that all the reaction rates $k_{j-1\rightarrow j}$ are constant and equal to $k$. Identifying then the concentration of occupied space $o_j$ with the density of vehicles $\rho_j$ in the $j$-th road cell/compartment, and the identifying the concentration of free space $\varphi_j$ with a density of free space, we note that the conserved quantity $o_j+\varphi_j$ can be interpreted as the maximal capacity of the cell, which gives $o_j+\varphi_j=\rho_m$. Hence, we have
\begin{equation*}
\frac{\dd \rho_j}{\dd t} =k\rho_{j-1}(\rho_m-\rho_j)-k\rho_{j}(\rho_m-\rho_{j+1}), \quad j\in\mathbb{Z} \peq
\end{equation*}
Dividing this last expression by $\rho_m$ to get back to normalized densities  and using an Euler time discretization then gives
\begin{equation}
\widehat{U}_j^{i+1}=\widehat{U}_{j}^i+C\widehat{U}_{j-1}^i(1- \widehat{U}_j^i)-C\widehat{U}_{j}^i(1- \widehat{U}_{j+1}^i), \quad j\in\mathbb{Z}, \quad i\in\mathbb{N} \veq
\label{eq:trm_expl}
\end{equation}
where $C$ is the quantity defined by
\begin{equation}
C=\Delta t \rho_m k \peq
\end{equation}
Note in particular that by definition of the reaction rate $k$, the quantity $C\rho_m\Delta x= (k\rho_m\Delta t)(\rho_m\Delta x)$ corresponds to the maximal number of vehicles that can be transferred from the $(j-1)$-th to the $j$-th cell during a period~$\Delta t$. Indeed in the ideal case where the $j$-th cell is empty (i.e., $\varphi_j=\rho_m$), the transfer reaction between $(j-1)$ and $j$ happens at rate $k\varphi_jo_{j-1}=k\rho_mo_{j-1}$. Hence during $\Delta t$, the concentration of reactants decreases by $k\rho_mo_{j-1}\Delta t$ times. Hence, if the $(j-1)$-th cell is full ($o_{j-1}=\rho_m$), $(k\rho_m\Delta t)(\rho_m\Delta x)$ vehicles will be transferred. 

This quantity can also be expressed in terms of the maximal speed $v_m$ of the vehicles. Indeed, during $\Delta t$, vehicles can travel a distance of at most $v_m\Delta t$ meaning that at most $\rho_m(v_m\Delta t)$ can cross a cell interface during this period. By equating both expressions we get
\begin{equation*}
k=\frac{1}{\Delta x}\frac{v_m}{\rho_m}, \quad C=\frac{\Delta t}{\Delta x}v_m \veq
\end{equation*}
which in turn gives
\begin{equation}
\widehat{U}_j^{i+1}=\widehat{U}_{j}^i+\frac{\Delta t}{\Delta x}v_m\left[\tilde F_T(\widehat{U}_{j-1}^i, \widehat{U}_j^i)-\tilde F_T(\widehat{U}_{j}^i, \widehat{U}_{j+1}^i) \right], \quad j\in\mathbb{Z}, \quad i\in\mathbb{N} \veq
\label{eq:rec_fv_trm}
\end{equation}
with the normalized numerical flux $\tilde{F}_T$ given by
\begin{equation*}
\tilde F_T(u,v )=u(1-v) \peq
\end{equation*}

As a finite volume scheme, the TRM has several desirable properties. First, it is a consistent with the quadratic flux~\eqref{eq:f_u}, meaning that numerical flux $F_T$ satisfies $F_T(u,u \pv v_m)=f(u\pv v_m)$ for any $u\in [0, 1]$. Then, under the assumption that the discretization steps $\Delta t, \Delta x$ satisfy a so-called Courant–Friedrichs–Lewy (CFL) condition given by
\begin{equation}
\frac{\Delta t}{ \Delta x} \le \frac{1}{2 v_m} \veq
\label{eq:cfl}
\end{equation}
In particular (cf. \citep{liptak2020traffic}) the TRM is
\begin{itemize}
\item monotone: if the recurrence is initialized with two initial conditions $\lbrace \widehat{U}_j^0 : j\in\mathbb{Z}\rbrace$ and $\lbrace \widehat{V}_j^0 : j\in\mathbb{Z}\rbrace$ such that for any $j$, $\widehat{U}_j^0\le \widehat{V}_j^0$ then for any $i$ and any $j$, $\widehat{U}_j^i\le \widehat{V}_j^i$.
\item $L^{\infty}$-stable: if there exist $A, B \ge 0$ such that the initial condition satisfies for any $j$, $U_j^0 \in [A, B]$, then for any $i$ and $j$, $\widehat{U}_j^i\in [A, B]$.
\item convergent: 
the $L^1$ norm between the discrete cell-defined solutions and the true solution\footnote{The term true solution refers here to the notion of entropy solution of the PDE, which is the unique physically-relevant (weak) solution of the PDE \citep{leveque2002finite}.} converges to $0$ as $\Delta x \rightarrow 0$ (with $\Delta t/\Delta x$ kept constant). This convergence result is a consequence of the consistency and monotonicity of the scheme \citep{leveque2002finite}.
\end{itemize}

\begin{rem}
The CFL condition~\eqref{eq:cfl} can also be interpreted in the context of reaction kinetics. Indeed, starting from the quantity $C$ defined above, we have that the CFL condition is equivalent to imposing that 
\begin{equation}
C=\Delta t \rho_m k = \frac{\Delta t}{\Delta x}v_m <\frac{1}{2} \peq
\label{eq:cfl_reac}
\end{equation}
Let us then consider for instance the interface between the compartment $(j-1)$ and $j$. Recall that, according to the law of mass action, the rate at which the transfer reaction of this interface happens is $ko_{j-1}\varphi_j$, meaning that during $\Delta t$,  $N(\Delta t)=ko_{j-1}\varphi_j\Delta t \Delta x$ vehicles are transferred. Note that, since $o_{j-1}$ is upper-bounded by the maximal capacity $\rho_m$, we can upper-bound $N(\Delta t)$ by $N(\Delta t)\le k\rho_m\varphi_j\Delta t \Delta x$. Then, the CFL condition yields through~\eqref{eq:cfl_reac} an upper-bound for $k$, which in turn gives 
$$N(\Delta t)< \varphi_j\frac{\Delta x}{2} \peq$$
This means in particular that, during $\Delta t$, the number of vehicles transferred from cell $(j-1)$ to cell $j$ is lower than the number of free slots in the left-half of cell $j$. Similarly, we can prove that this same number is lower than the number of vehicles slots in the right-half of cell $j-1$ (by starting by upper-bounding $\varphi_j$). 

Hence, the CFL condition allows to decouple what is happening at the different interfaces during a time-lapse~$\Delta t$: indeed, at each interface between two cells $(j-1)$ and $j$, the reaction dynamics come down to a transfer of vehicles from the right-half of cell~$(j-1)$ to the left-half of cell~$j$, and in this sense are independent of what is happening at other interfaces (or within other half cells).

\label{rem:CFL}
\end{rem}

\subsection{Extension of the TRM to spatial and temporal variation in parameter}
\label{sec:ext}

Reverting to the traffic interpretation of traffic flow dynamics used by the TRM and presented in \Cref{sec:trm}, working with a constant parameter $v_m$ in the traffic model implies that the traffic flows without perturbation along a homogeneous road: indeed all the reactions between consecutive road compartments happen with the same rate.  A direct generalization of this model consists in considering that these reaction rates can now vary in time or across space (i.e., two pairs of consecutive compartments can have different reaction rates).

Let us first assume that the road is infinite and discretized into cells of same size. The evolution of the normalized density in a given compartment would then take the form
\begin{equation*}
\frac{\dd \widehat{U}_j}{\dd t}(t) =\rho_m k_{j-1\rightarrow j}(t)\widehat{U}_{j-1}(t)\left(1-\widehat{U}_j(t)\right)-\rho_m k_{j\rightarrow j+1}(t)\widehat{U}_{j}(t)\left(1-\widehat{U}_{j+1}(t)\right), \quad t\ge 0, \quad j \in\mathbb{Z} \veq
\end{equation*}
where, for any $j$, $k_{j-1\rightarrow j}(t)$ is the reaction rate of the reaction between the $(j-1)$-th compartment and the $j$-th compartment, at time $t$. An explicit Euler discretization of this expression then gives the recurrence
\begin{equation}
\widehat{U}_j^{n+1} =\widehat{U}_j^n+ C_j^n\widehat{U}_{j-1}^n\left(1-\widehat{U}_j^n\right)-C_{j+1}^n\widehat{U}_{j}^n\left(1-\widehat{U}_{j+1}^n\right), \quad j\in\mathbb{Z}, \quad n \in\mathbb{N} \veq
\label{eq:trm_var}
\end{equation}
where for any $j, n$, the quantity $C_j^n$ is defined, at the $n$-th time step $t_n$, by
\begin{equation*}
C_j^n=\rho_m k_{j-1\rightarrow j}(t_n)\Delta t \peq
\end{equation*}
Note that, following the same approach as in \Cref{sec:trm}, the quantity $C_j^n\rho_m\Delta x$ amounts to the  maximal amount of vehicle transfers that can happen between the $(j-1)$-th cell and the $j$-th cell  during a period $\Delta t$ starting at time $t_n$. Assuming that, around the interface $(j-1)/j$, the maximal speed of the vehicles is now a time dependent function $v_m(\cdot, x_{j-1/2}) : t\mapsto v_m(t, x_{j-1/2})$, this quantity can be equated to
\begin{equation*}
C_j^n\rho_m\Delta x=\rho_m\int_{t_n}^{t_n+\Delta t} v_m(t, x_{j-1/2})\di t \veq
\end{equation*}
by seeing the integral on the right-hand side as the limit of a Riemann sum, and using the similar result derived in \Cref{sec:trm} for the constant case. Hence, we have
\begin{equation*}
k_{j-1\rightarrow j}(t_n)=\frac{1}{\rho_m\Delta x} \left(\frac{1}{\Delta t}\int_{t_n}^{t_n+\Delta t} v_m(t, x_{j-1/2})\di t\right), \quad C_j^n=\frac{\Delta t}{\Delta x} \left(\frac{1}{\Delta t}\int_{t_n}^{t_n+\Delta t} v_m(t, x_{j-1/2})\di t\right) \peq
\end{equation*}
In particular, following the reasoning of \Cref{rem:CFL}, we will assume that the quantities $C_j^n$ satisfy the same condition as in the constant case, namely $C_j^n\in (0, 1/2)$.

The scheme defined by~\eqref{eq:trm_var} can be seen as finite volume scheme with a numerical flux consistent with the space-time dependent flux function $f$ defined by
\begin{equation}
f(t,x, \rho(t,x))=v_m(t,x)\rho\left(1-\frac{\rho}{\rho_m}\right) \peq
\end{equation}
This type of finite volume scheme was studied in the context of approximation of non-homogeneous scalar conservation laws \citep{chainais2001finite}. Under some regularity assumption on the speed parameter $v_m$ (that are not stricter than those described earlier for the existence and uniqueness of an entropy solution), this finite volume scheme converges to the entropy solution of PDE~\eqref{eq:pde_u}, hence corresponding to a LWR model with space-time varying parameter $v_m$ \citep[Theorem 1]{chainais2001finite}.


\section{A discrete dynamical system to bridge the gap between models and measurements}
\label{sec:estim}

\subsection{Constant parameter case}

Let us assume that measurements of the density and flux of vehicles along a road are available.  In particular, we assume that these measurements are made along a road discretized into $N_x$ cells of size $\Delta x$ (centered at locations $x_j=j\Delta x$, $j\in\bi 0, N_x-1\ei$) and at $N_t$ time steps spaced by $\Delta t$ (and denoted by $t_i=i\Delta t$, $i\in\bi 0, N_t-1\ei$). These measurements are collected into a density matrix $\bm D=\lbrace D_j^i : j\in\bi 0, N_x-1\ei, i\in\bi 0, N_t-1\ei\rbrace \in\R^{N_t\times N_x}$ and a flux matrix $\bm F=\lbrace F_j^i : j\in\bi 0, N_x-1\ei, i\in\bi 0, N_t-1\ei\rbrace\in\R^{N_t\times N_x}$, whose entries $D_j^i$ and $F_j^i$ are  the measurements made at time $t_i$ and location $x_j$. We will also assume that the maximal density $\rho_m$ of the road is known, and that therefore a normalized density matrix $\bm U=\lbrace U_j^i : j\in\bi 0, N_x-1\ei, i\in\bi 0, N_t-1\ei\rbrace\in\R^{N_t\times N_x}$ can be obtained from $\bm D$ by dividing its entries by $\rho_m$.  We aim at assessing whether the continuous LWR model introduced in the previous section adequately represents the traffic flow as observed through $\bm D$ (or $\bm U$) and $\bm F$.

In order to bridge the gap between continuous models of density and the discrete measurements at hand, and therefore to be able to compare them, we think of the density observations in $\bm U$ as arising from a particular solution of the (continuous) PDE~\eqref{eq:pde_u} for some unknown (but constant) value  of the parameter $\bar v_m$. We then propose to leverage the fact that finite volume schemes would naturally provide estimates for the entries of $\bm U$ (assuming $\bar v_m$ is known). The idea is then to look for the PDE parameter value $v_m^*$ which gives finite volume estimates $\widehat{\bm U}^*$ closest to the observed data $\bm U$. Then, the quality of the continuous model is assessed by comparing $\widehat{\bm U}^*$ to $\bm U$, and comparing the flux matrix $\bm F$ to flux estimates obtained by applying the flux-density relationship~\eqref{eq:f} of the model to the density estimates $\widehat{\bm U}^*$.

The optimal parameter value $v_m^*$ is obtained as follows: For any choice of parameter $v_m$, we can compute the finite volume discretization of PDE~\eqref{eq:pde_u} by applying the recurrence relation~\eqref{eq:rec_fv} $N_t-1$ times, thus giving a matrix of estimates $\widehat{\bm U}=\lbrace \widehat{U}_j^i : j\in\bi 0, N_x-1\ei, i\in\bi 0, N_t-1\ei\rbrace$. The initial state of this recurrence is set up using $\bm U$ as
\begin{equation}
\widehat{U}_j^0=U_j^0, \quad j\in\bi 0, N_x-1\ei \peq
\label{eq:ic}
\end{equation}
Observed data is only available on a road of finite length. As has been seen earlier, the scheme only takes values from the neighboring compartments into account. Therefore, we choose boundary conditions using once again $\bm U$  by imposing
\begin{equation}
\widehat{U}_0^i=U_0^i, \quad \widehat{U}_{N_x-1}^i=U_{N_x-1}^i, \quad i\in\bi 0, N_t-1\ei \peq
\label{eq:bc}
\end{equation}

This overall process is seen as computing the output of a discrete dynamical system at times $t_1, \dots, t_{N_t-1}$. Indeed, note that for the finite volume schemes~\eqref{eq:rec_fv_trm}, \eqref{eq:rec_fv_god} and~\eqref{eq:rec_fv_lxf} considered in this paper, we can introduce the (unit-free) scaling parameter $C$ by
\begin{equation}
C=\frac{\Delta t}{\Delta x}v_m
\label{eq:cvm}
\end{equation}
and then write the finite volume recurrence relation~\eqref{eq:rec_fv} as
\begin{equation}
\widehat{\bm U}^{i+1}(C)=\mathcal{H}^i(\widehat{\bm U}^i(C),  C \pv \bm U), \quad i\in\bi 0,N_t-2\ei \veq
\label{eq:rec}
\end{equation}
where  $\widehat{\bm U}^i\in\R^{N_x}$ is the vector defined by $\widehat{\bm U}^i=[\widehat U_0^i, \dots, \widehat U_{N_x-1}^i]^T$, and $\mathcal{H}^i = (\mathcal{H}_0^i, \dots,\mathcal{H}_{N_x-1}^i) : \R^{N_x} \times \R^{N_C}\rightarrow \R^{N_x}$ is the transformation defined in part by the boundary conditions~\eqref{eq:bc} as
\begin{equation}
\mathcal{H}_j^i\left(\widehat{\bm U}^i, C \pv \bm U\right)
=
\begin{cases}
 U_0^{i+1} & \text{if } j=0 \\
h(\widehat{U}_{j-1}^i,\widehat{U}_{j}^i,\widehat{U}_{j+1}^i)+C\left[\tilde F(\widehat{U}_{j-1}^i, \widehat{U}_j^i)-\tilde F(\widehat{U}_{j}^i, \widehat{U}_{j+1}^i ) \right] & \text{if } j\in\bi 1, N_x-2\ei \\
 U_{N_x-1}^{i+1} & \text{if } j=N_x-1
\end{cases}\veq
\label{eq:transfo}
\end{equation}
with $h$ and $\tilde{F}$ depending on the choice of numerical scheme (see \Cref{eq:rec_fv_trm,eq:rec_fv_god,eq:rec_fv_lxf}). The discrete dynamical system is then defined as follows:
\begin{itemize}
\item the state vector of the system contains the finite volume approximations of the PDE across the discretized road, at a given time step;
\item the initial state of the system is the vector $\widehat{\bm U}^0$, defined by the initial condition~\eqref{eq:ic} of the scheme;
\item the recurrence relation~\eqref{eq:rec} defines the successive  state updates;
\item the scaling parameter $C$ acts like a control parameter of the system.
\end{itemize}

\begin{rem}
Note that the CFL condition~\eqref{eq:cfl} actually imposes a restriction on the domain of definition of the control parameter $C$: for the schemes considered in this paper to yield approximations of the cell averages of the solution (and ensure that the recurrence does not diverge), we should only consider $C\in (0,1/2)$. 
\end{rem}

The optimal PDE parameter $v_m^*$ is then obtained by finding the value of the control parameter $C$ of the discrete dynamical system that minimizes a cost function measuring the discrepancy between the output of the system  and the data $\bm U$. In particular, we consider a least-square approach, meaning that the optimal control parameter $C^*$ will be the solution of the problem
\begin{equation}
C^*=\argmin\limits_{C\in (0, 1/2)} \sum_{i=0}^{N_t-1}\sum_{j=0}^{N_x-1} \left( \widehat{U}_j^i(C) - U_j^i\right)^2 \peq
\label{eq:min_pbm2}
\end{equation}
Following~\eqref{eq:cvm}, this gives in turn an optimal PDE parameter $v_m^*$ given by
\begin{equation*}
v_m^*=\frac{\Delta x}{\Delta t} C^* \veq
\end{equation*}
and optimal finite volume estimates given by $\widehat{\bm U}^*=\widehat{\bm U}(C^*)$.

Finally, we turn the minimization problem~\eqref{eq:min_pbm2} into an unconstrained minimization problem by introducing the parameter $\theta$ defined by
\begin{equation}
C(\theta)=\frac{1}{2}\lgt(\theta), \quad \theta \in \R \veq
\label{eq:corresp}
\end{equation}
where $\lgt : \R \rightarrow (0,1)$ is the logit function\footnote{The logit function is defined by $\lgt(\theta)=(1+e^{-\theta})^{-1}$, $\theta \in\R$, and has an inverse defined by $\lgt^{-1}(y)=-\log(y^{-1}-1)$, $y\in (0,1)$.} which defines a strictly increasing bijection between $\R$ and $(0,1)$. The strict monotonicity and smoothness of $\lgt$ then allow to cast the minimization problem~\eqref{eq:min_pbm2} into the following equivalent minimization problem
\begin{equation}
\theta^*=\argmin\limits_{\theta\in \R} \sum_{i=0}^{N_t-1}\sum_{j=0}^{N_x-1} \left( \widehat{U}_j^i(C(\theta)) - U_j^i\right)^2 \veq
\label{eq:min_pbm3}
\end{equation}
where, for any $\theta\in\R$, $C(\theta)\in(0,1/2)$ is defined by~\eqref{eq:corresp}. Then, the optimal control parameter $C^*$ of \Cref{eq:min_pbm2} is simply obtained by taking $C^*=C(\theta^*)$. 

Following from the boundary conditions~\eqref{eq:bc} and initial conditions~\eqref{eq:ic}, the minimization problem~\eqref{eq:min_pbm3} then boils down to the unconstrained minimization of the cost function $L$ defined by 
\begin{equation}
L(\theta)=\frac{1}{2}\sum_{i=1}^{N_t-1}\sum_{j=1}^{N_x-2} \left( \widehat{U}_j^i(C(\theta)) - U_j^i\right)^2, \quad \theta \in\R \peq
\label{eq:costf}
\end{equation}
This minimization task can be in particular tackled using gradient-based optimization problems since the structure of the recurrence relation~\eqref{eq:rec} can be leveraged to derive analytic expressions for the gradient of $L$ (cf. \Cref{sec:grad_comp}). That is of course if we assume that we can take the derivative of the maps $\mathcal{H}^i$ in~\eqref{eq:rec}, which in our case forces us to work with either the LxF scheme or the TRM scheme. In particular, in the applications presented in this paper, only these two schemes are considered and the conjugate gradient algorithm is used to perform the minimization \citep{nocedal2006numerical}.

\subsection{Varying parameter case}

The approach presented in the previous section naturally extends to the assumption where parameters varying in space or time are considered (as described in \Cref{sec:ext}). The discrete dynamical system is defined using the recurrence relation \eqref{eq:trm_var}, and its output is once again compared to the density data $\bm U$ to derive optimal control parameter values through  a minimization approach. In particular,
\begin{itemize}
\item the boundary and initial conditions are set in the same way as in the constant case;
\item the recurrence relation defining the system takes the form
\begin{equation*}
\widehat{U}_j^{n+1} =\widehat{U}_j^n+  C_j^n\widehat{U}_{j-1}^n\left(1-\widehat{U}_j^n\right)-C_{j+1}^n\widehat{U}_{j}^n\left(1-\widehat{U}_{j+1}^n\right), \quad j\in\bi 1, N_x-2\ei, \quad n \in \bi 0, N_t-2\ei \; ;
\end{equation*}
\item the control parameters of the system are the coefficients $\bm C=\lbrace C_j^n : j\in \bi 0, N_x\ei, n \in\bi 0, N_t-1\ei\rbrace$, and are determined by minimizing (without constraints) a cost function $L$ given as the sum of a least-square cost and a regularization term $R(\bm C)$ (clarified below):
\begin{equation*}
L(\bm \theta)= 
\frac{1}{2}\sum_{i=1}^{N_t-1}\sum_{j=1}^{N_x-2} \left( \widehat{U}_j^i(\bm C(\bm \theta)) - U_j^i\right)^2+\lambda R(\bm C(\bm\theta)), \quad \bm \theta \in \R^{(N_x+1)N_t} \veq
\end{equation*}
where $\bm C(\bm \theta)\in (0,1/2)^{(N_x+1)N_t}$ is obtained by applying the function~\eqref{eq:corresp} to each entry of $\bm \theta \in \R^{(N_x+1)N_t}$ and $\lambda>0$ is a hyperparameter balancing the importance of the least-square minimization of the regularization;
\item the following regularization term $R(\bm C)$ is considered:
\begin{equation*}
R(\bm C)=\frac{1}{2} \left( \sum_{j=0}^{N_x} \sum_{i=0}^{N_t-2}(C_j^i-C_{j}^{i+1})^2+ \sum_{i=0}^{N_t-1}\sum_{j=0}^{N_x-1} (C_j^i-C_{j+1}^{i})^2 \right), \quad \bm C \in (0,1/2)^{(N_x+1)N_t} \peq
\end{equation*}

\end{itemize}

Note that the regularization term introduced above plays two roles. On the one hand it allows to reduce the risk of overfitting: indeed the number of parameters now amounts to $N_t(N_x+1)$ which is larger than the number of terms in the least-square term, and thus increases the risk of overfitting. On the other hand it allows to ensure some kind of smoothness in space and time of the parameters, as sharp changes between consecutive coefficients in space or time are penalized. This kind of smoothness assumption of the space-time varying parameter $v_m$ is usually required to prove the existence and uniqueness of solutions of PDE~\eqref{eq:pde_u}  and explains why we try to enforce it in the estimation approach \citep{karlsen2004convergence,chen2005quasilinear}.

Finally, the minimization of the cost function is performed once again using the Conjugate gradient algorithm, while using the explicit formula of the gradient given in \Cref{sec:grad_comp}.

\subsection{Extension to a multilevel approach}
\label{sec:f_disc}

So far, the finite volume approximations $\widehat{\bm U}$ were computed on the same discretization grid as the one used to create the density matrix $\bm U$. This has two consequences. First, if the discretization steps are large with respect to the size of the domain, we can fear that the finite volume scheme will not yield satisfactory approximations of the cell averages of the solution. Second, this choice implicitly imposes a restriction on either the range of admissible parameters $v_m$ or the discretization pattern $\Delta t, \Delta x$ since we also impose that for the discretization steps used to compute $\widehat{\bm U}$, the CFL condition~\eqref{eq:cfl} should be satisfied. For instance, assuming that we upper-bound the admissible values of the critical speed $v_m$ by a speed of $180 \text{ km}\cdot\text{h}^{-1}=50 \text{ m}\cdot\text{s}^{-1}$, the CFL condition~\eqref{eq:cfl} gives that the discretization steps $\Delta t, \Delta x$ should satisfy $\Delta t / \Delta x < 0.01 \text{ s}\cdot\text{m}^{-1}$, or equivalently $\Delta t < 0.01 \Delta x \text{ s}$. Such a coupling may lead to consider very overly small time step sizes or large space step sizes, which can be limiting if one is not able to change the discretization pattern of the data.

To circumvent these limitations, we propose to use a multilevel approach where a different (and finer) discretization pattern, denoted by $\widehat{\Delta t}$, $\widehat{\Delta x}$, is used for the finite volume computations. In particular, we will take for some $P_t, P_x \in \mathbb{N}^*$, 
\begin{equation}
\widehat{\Delta t}=\frac{\Delta t}{P_t}, \quad \widehat{\Delta x}=\frac{\Delta x}{P_x} \veq
\label{eq:discr_b}
\end{equation}
meaning that the discretization pattern of the finite volume scheme will be a subdivision of the discretization pattern $\Delta t, \Delta x$ of the data (cf. \Cref{fig:mll}). Therefore, the CFL condition will become
\begin{equation}
\frac{\widehat{\Delta t}}{\widehat{\Delta x}}=\frac{{\Delta t}}{{\Delta x}}\frac{P_x}{P_t}\le \frac{1}{2v_m} \peq
\label{eq:cfl_f}
\end{equation}
Thus introducing two additional parameters $P_t$, $P_x$ can be used to enforce the CFL condition without having to impose anything on the discretization pattern of the data.

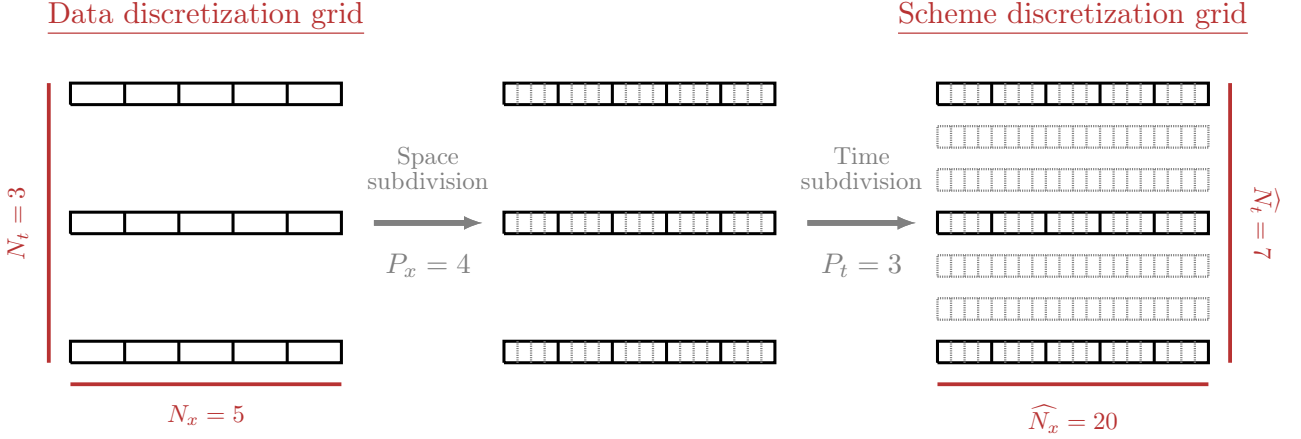
\begin{figure}
     \centering
\resizebox{\textwidth}{!}{
	\begin{tikzpicture}

\draw[draw,myred] (12.5, 32) node[scale=8] {\underline{Data discretization grid}};

\draw[draw,myred] (92.5, 32) node[scale=8] {\underline{Scheme discretization grid}};

\foreach \x in {0,12,24}{
\pgfmathtruncatemacro{\y}{\x+2}
\draw[line width=8pt] (0,\x) -- (25,\x) -- (25,\y) -- (0,\y) -- (0,\x);
\foreach \l in {5,10,15,20}{
\draw[line width=8pt] (\l,\x) -- (\l,\y);
}
}

\draw[line width=10pt,myred] (0,-2) -- (25,-2) node[midway,below,scale=6.5] (m0) {$N_x=5$};
\draw[line width=10pt,myred] (-2,0) -- (-2,26) node[midway,above,scale=6.5,rotate=90] (m0) {$N_t=3$};


\draw[-latex,line width=15pt,gray] (28,13) -- (38,13) node[midway,below,scale=7.5] (md) {$P_x=4$};

\draw[-latex,line width=15pt,gray] (28,13) -- (38,13) node[midway,above,scale=6.5] (md) {\begin{tabular}{c} Space \\ subdivision\end{tabular} };

\foreach \x in {0,12,24}{
\pgfmathtruncatemacro{\y}{\x+2}
\draw[line width=8pt] (40,\x) -- (65,\x) -- (65,\y) -- (40,\y) -- (40,\x);
\foreach \l in {45,50,55,60}{
\draw[line width=8pt] (\l,\x) -- (\l,\y);
}
}

\foreach \x in {0,12,24}{
\pgfmathtruncatemacro{\y}{\x+2}
\foreach \l in {40,45,50,55,60}{
\foreach \m in {1,...,3}{
\pgfmathsetmacro{\n}{\l+1.25*\m}
\draw[line width=6pt,gray,dashed] (\n,\x) -- (\n,\y);
}
}
}


\draw[-latex,line width=15pt,gray] (68,13) -- (78,13) node[midway,below,scale=7.5] (md) {$P_t=3$};

\draw[-latex,line width=15pt,gray] (68,13) -- (78,13) node[midway,above,scale=6.5] (md) {\begin{tabular}{c} Time \\ subdivision\end{tabular} };

\foreach \x in {0,12,24}{
\pgfmathtruncatemacro{\y}{\x+2}
\draw[line width=8pt] (80,\x) -- (105,\x) -- (105,\y) -- (80,\y) -- (80,\x);
\foreach \l in {85,90,95,100}{
\draw[line width=8pt] (\l,\x) -- (\l,\y);
}
}

\foreach \x in {4,8,16,20}{
\pgfmathtruncatemacro{\y}{\x+2}
\draw[line width=6pt,gray,dashed] (80,\x) -- (105,\x) -- (105,\y) -- (80,\y) -- (80,\x);
\foreach \l in {85,90,95,100}{
\draw[line width=6pt,gray,dashed] (\l,\x) -- (\l,\y);
}
}

\foreach \x in {0,12,24,4,8,16,20}{
\pgfmathtruncatemacro{\y}{\x+2}
\foreach \l in {80,85,90,95,100}{
\foreach \m in {1,...,3}{
\pgfmathsetmacro{\n}{\l+1.25*\m}
\draw[line width=6pt,gray,dashed] (\n,\x) -- (\n,\y);
}
}
}

\draw[line width=10pt,myred] (80,-2) -- (105,-2) node[midway,below,scale=6.5] (m0) {$\widehat{N_x}=20$};
\draw[line width=10pt,myred] (107,0) -- (107,26) node[midway,above,scale=6.5,rotate=270] (m0) {$\widehat{N_t}=7$};

\end{tikzpicture}
}
\caption{Space and time subdivisions of the multilevel approach. The discretization grid on which the data are defined (left, in black) is subdivided in space (each cell is subdivided into $P_x=4$ subcells) and in time ($P_t-1=2$ intermediate time steps are added between each consecutive time steps of the initial discretization). The resulting grid (right, in grey) is used to compute the finite volume density approximations.  }
\label{fig:mll}
\end{figure}
The choice~\eqref{eq:discr_b} yields that each road cell used to compute the density matrix~$\bm U$ is subdivided into $P_x$ subcells and each time step is subdivided into $P_t$ steps. Hence, the finite volume matrix $\widehat{\bm U}$ will have size $(P_t(N_t-1)+1)\times (P_xN_x)$. We then note that, for $ i\in\bi 0, N_x-1\ei, j\in\bi 0, N_t-1\ei$ and by definition, each entry $U_j^i$ of the density matrix $\bm U$ can be expressed as function of the solution $u$ of the PDE~\eqref{eq:pde_u} (with parameter $\bar v_m$) as
\begin{equation*}
U_j^i= \frac{1}{\Delta x} \int_{x_{j-1/2}}^{x_{j-1/2}+\Delta x} u(i\Delta t, x)\di x
=\frac{1}{P_x}\sum_{k=0}^{P_x-1}\frac{1}{\widehat{\Delta x}}\int_{x_{j-1/2}+k\widehat{\Delta x}}^{x_{j-1/2}+(k+1)\widehat{\Delta x}} u(iP_t \widehat{\Delta t}, x)\di x \veq
\end{equation*}
where the integral on the right-hand side corresponds to the cell average of the solution over the $k$-th subcell of the $j$-th road cell, at time $iP_t\widehat{\Delta t}$. Since this last quantity is approximated by the entry $(iP_t, k+jP_x)$ of the finite volume matrix $\widehat{\bm U}$, we deduce that an approximation of $U_j^i$ is obtained by taking the average of the quantities $\lbrace \widehat{U}_{k+jP_x}^{iP_t} : k\in\bi 0, P_x-1\ei\rbrace$. 

In order to use the recurrence relation~\eqref{eq:rec_fv}, initial and boundary conditions defined on the finer discretization grid of the finite volume scheme are needed. We deduce them from the data $\bm U$ by imposing an initial condition constant across all subcells of a given road cell, i.e.
\begin{equation}
\widehat{U}_{k+jP_x}^0=U_j^0, \quad j\in\bi 0, N_x-1\ei, \quad k\in\bi 0, P_x-1\ei \veq
\label{eq:ic_b}
\end{equation}
and for the boundary conditions, by considering a linear interpolation of the boundary conditions obtained from the data, i.e.
\begin{equation}
\left\lbrace\begin{aligned}
& \widehat{U}_{k+0P_x}^{l+iP_t}=U_0^{i} + \frac{l}{P_t}(U_0^{i+1}-U_0^{i}) \\
& \widehat{U}_{k+(N_x-1)P_x}^{l+iP_t}=U_{N_x-1}^i+\frac{l}{P_t}(U_{N_x-1}^{i+1}-U_{N_x-1}^i)
\end{aligned}\right.,
\quad i\in\bi 0, N_t-2\ei, \quad k\in\bi 0, P_x-1\ei, \quad l\in\bi 0, P_t\ei \peq
\label{eq:bc_b}
\end{equation}
The minimization problems introduced in the previous sections can then be readily reformulated to account for the difference in discretization steps between the data matrix and the finite volume estimates, as presented in detail in \Cref{sec:adpt_pbm}.

\section{Numerical experiments}
\label{sec:num}

\subsection{Application to parameter identification}
\label{sec:ident}

In this case study, we numerically solve PDE~\eqref{eq:pde_u} using the Godunov finite volume scheme over a space domain $[-1.5, 1.5]$ and a time frame $[0,1]$, with a parameter value $\bar v_m =1$ and a maximal density $\rho_m=1$. In particular, the space discretization step is chosen small compared to the domain extension, namely $\overline{\Delta x}=10^{-4}$, in order to guarantee that the numerical solution is close to the true solution. As for the time discretization step, it is set to $\overline{\Delta t}=0.25 \overline{\Delta x}/\bar{v}_m$ in accordance with the CFL condition~\eqref{eq:cfl}. Hence $\bar N_x=30000$ space cells and $\bar{N}_t=40001$ time steps are considered. The initial condition is taken as
\begin{equation*}
u_0(x)=0.5e^{-10x^2}+0.2\left(1+\cos(10\pi x)e^{-(3x^2+x)}\right), \quad x\in [-1.5.1.5] \veq
\end{equation*}
and therefore has a profile that is non-symmetric and has oscillations across space (cf. \Cref{fig:ic_sol}). Note that reflexive boundary conditions are considered, meaning that we set $U_{-1}^i=U_0^i$ and $U_{N_x}^i=U_{N_x-1}^i$ for any $i\ge 0$ in the recurrence~\eqref{eq:rec_fv}. The resulting numerical solution is cropped in space into the section $[-1,1]$ in order to avoid any possible boundary effect coming from the boundary conditions: this solution, represented in \Cref{fig:ic_sol}, is from now on considered as the ground truth solution.

\begin{figure}
\hfill
\begin{subfigure}{0.4\textwidth}
\includegraphics[width=\textwidth]{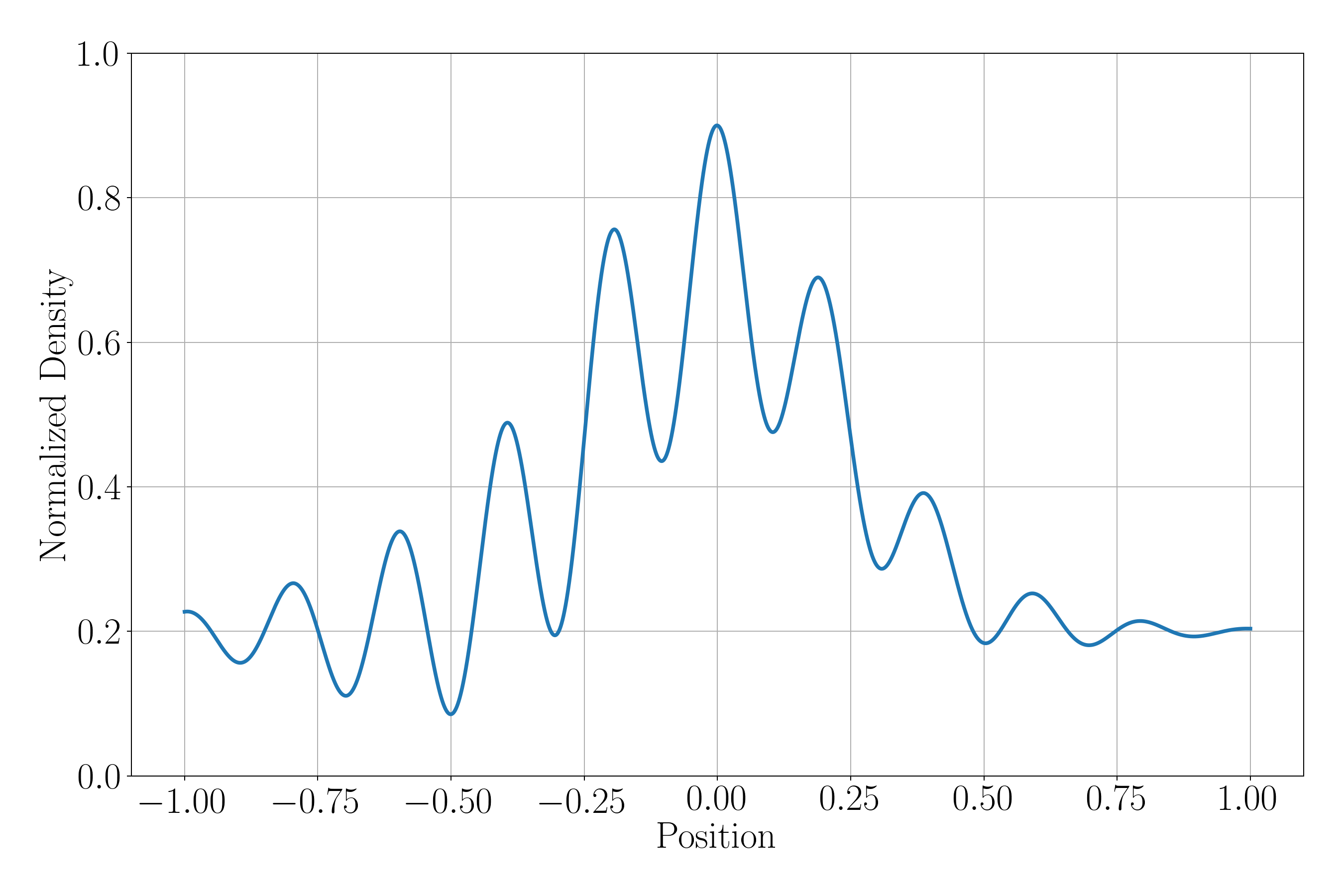}
\caption{Initial condition.}
\end{subfigure}
\hfill
\begin{subfigure}{0.4\textwidth}
\includegraphics[width=\textwidth]{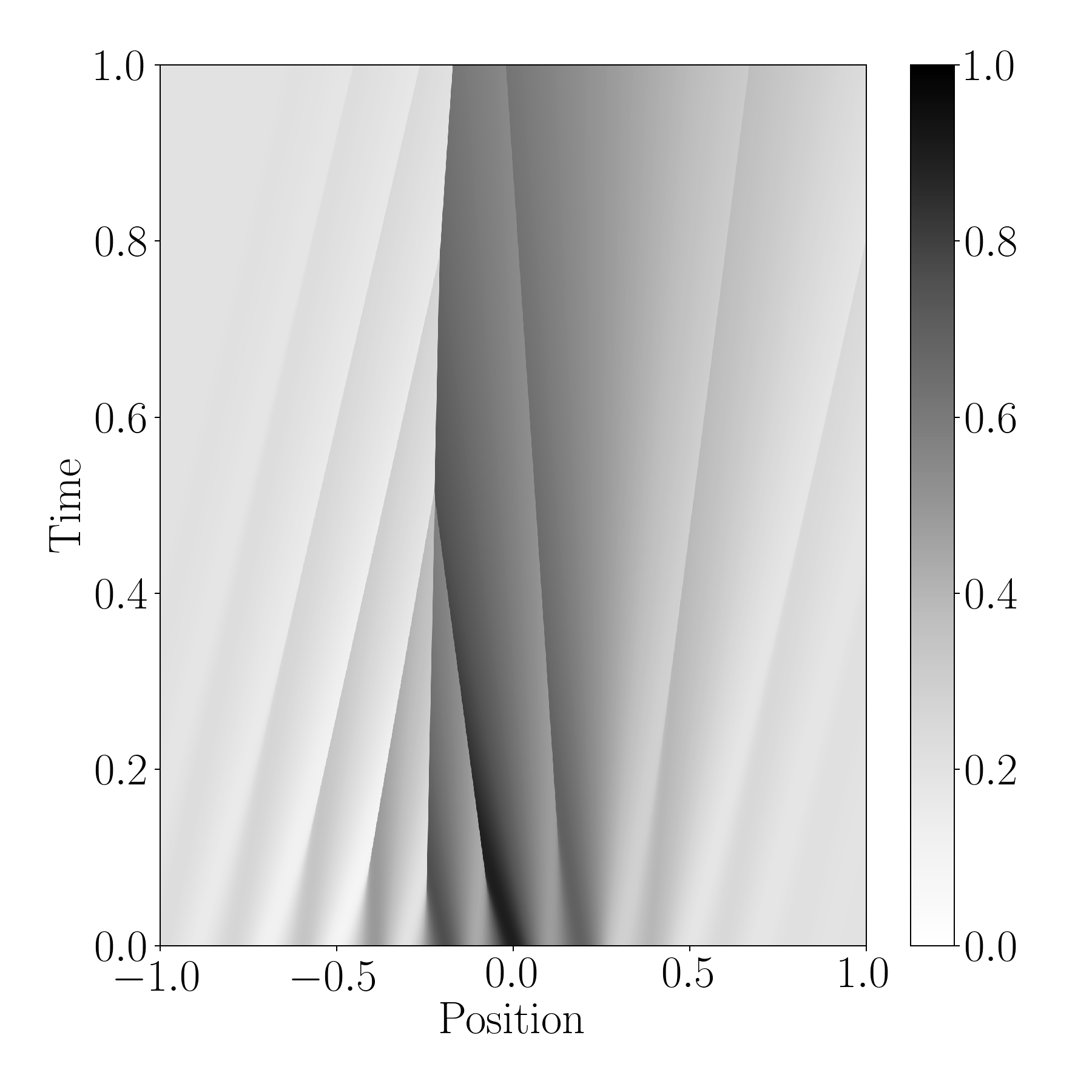}
\caption{Solution.}
\label{fig:ic_sol_mat}
\end{subfigure}
\hfill
\caption{Initial condition and associated solution of the PDE used in the synthetic case.}
\label{fig:ic_sol}
\end{figure}

Starting from the PDE solution described above, we build density data matrices $\bm U$ corresponding to different choices of discretization steps $\Delta t, \Delta x$. Examples of such density matrices are represented in \Cref{fig:ex_discr_dat}. We then estimate for each density matrix the value of the PDE parameter $v_m^*$ by solving the minimization problem~\eqref{eq:min_pbm3}, using both the TRM and the LxF scheme.  We also compute the Root Mean Square Error (RMSE) between the considered density matrix $\bm U$ and the corresponding finite volume approximations $\widehat{\bm U}^*=\widehat{\bm U}(v_m^*)$. \
The discretization steps used by these schemes are set according to \Cref{sec:f_disc}:  the finite volume approximations are computed on a grid obtained by applying $P_x\in\lbrace 1, 3, 5\rbrace$ space subdivisions and $P_t$ time subdivisions of the discretization grid of $\bm U$, where for each value $P_x$, $P_t$ is the smallest integer so that the CFL condition~\eqref{eq:cfl_f} is satisfied. 

\begin{figure}
\hfill
\begin{subfigure}{0.24\textwidth}
\includegraphics[width=\textwidth]{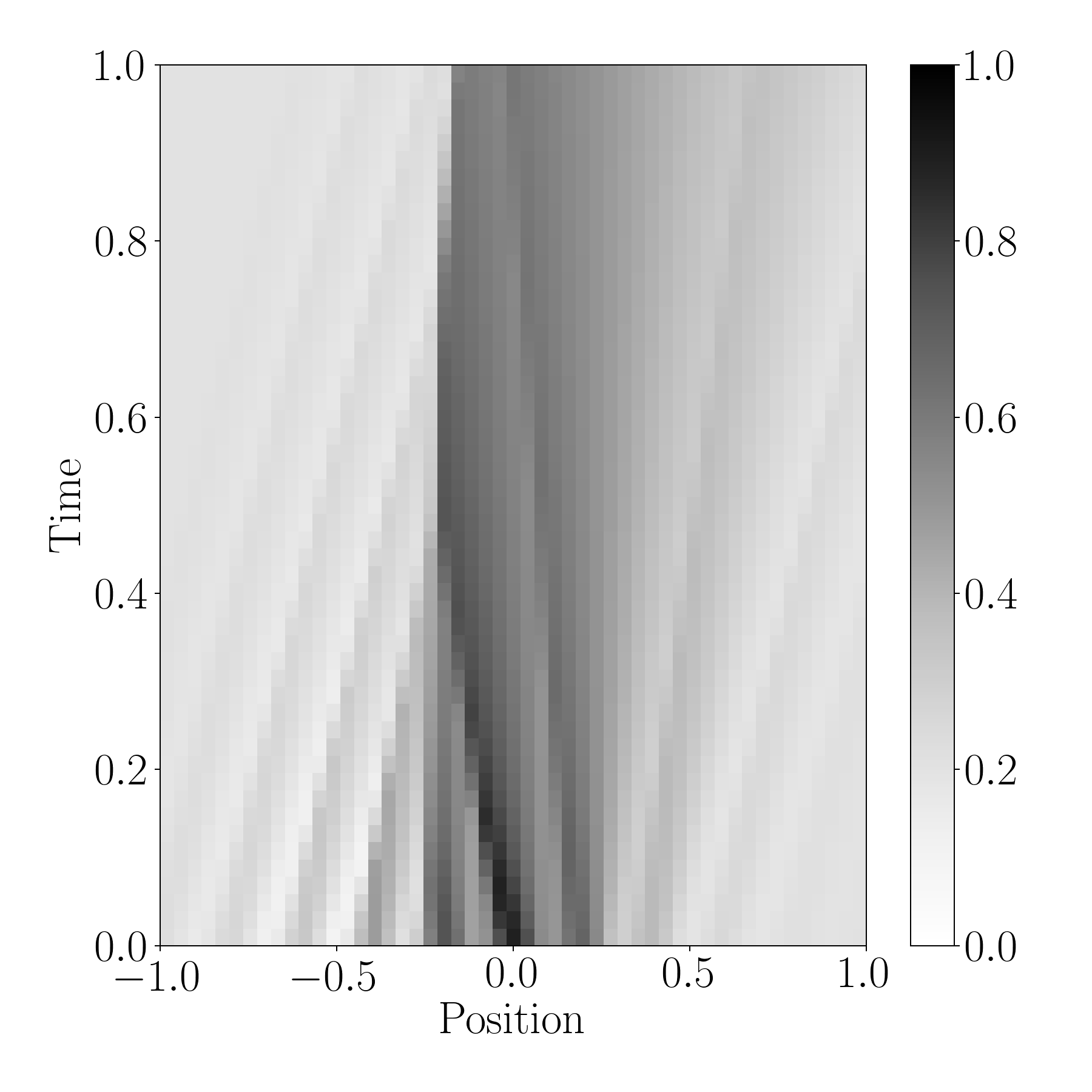}
\caption{$N_t=51, N_x=51$.}
\end{subfigure}
\hfill
\begin{subfigure}{0.24\textwidth}
\includegraphics[width=\textwidth]{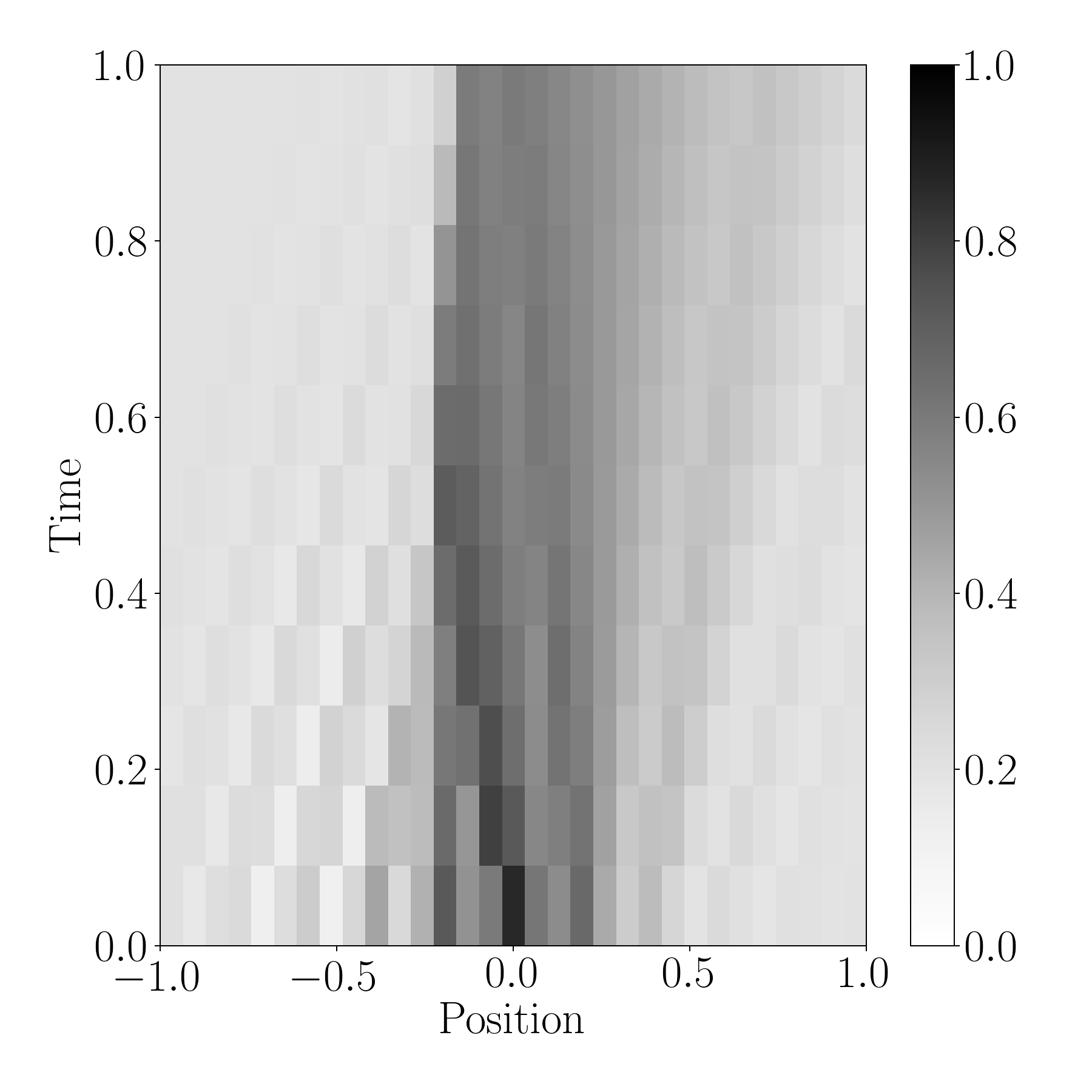}
\caption{$N_t=11, N_x=31$.}
\end{subfigure}
\hfill
\begin{subfigure}{0.24\textwidth}
\includegraphics[width=\textwidth]{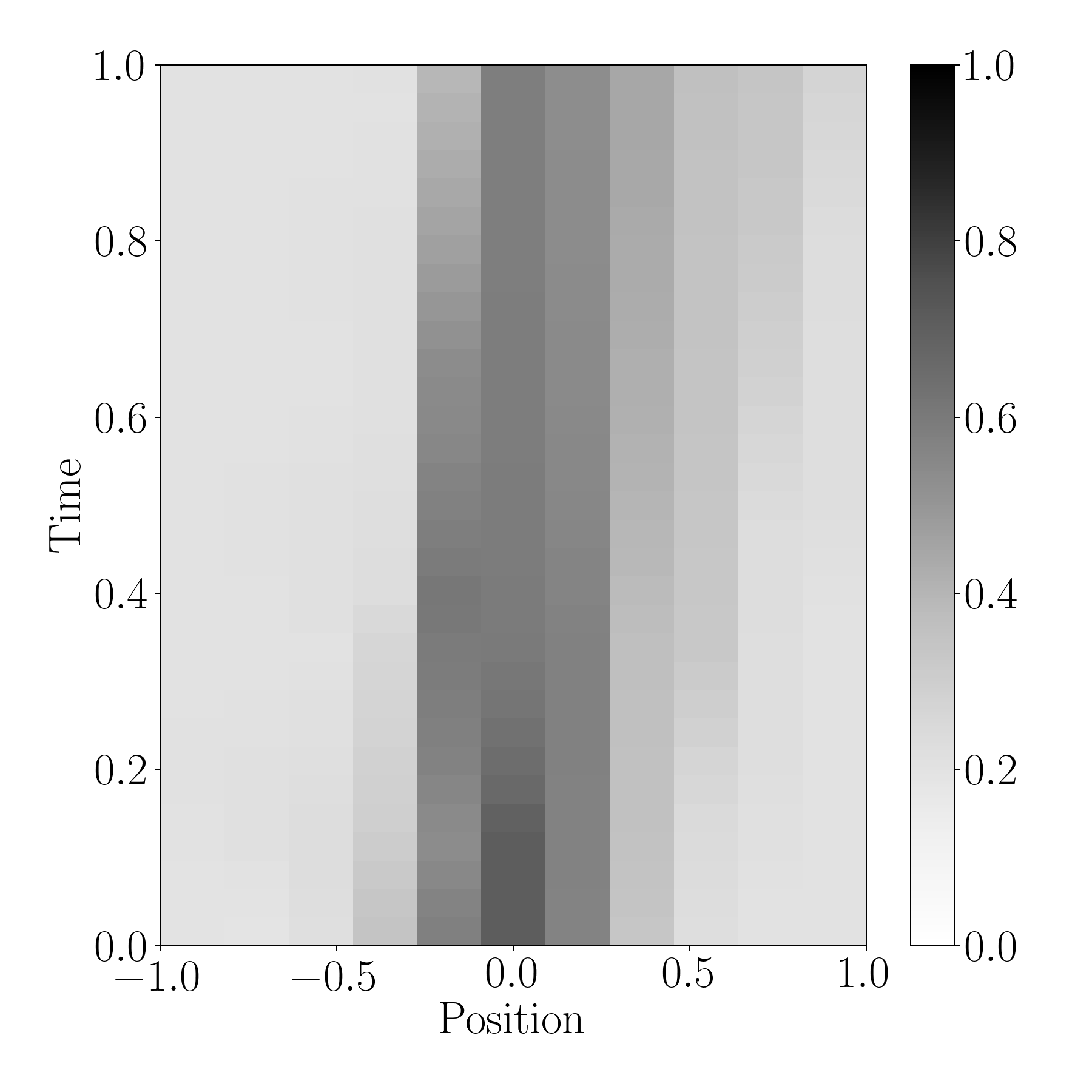}
\caption{$N_t=31, N_x=11$.}
\end{subfigure}
\hfill
\begin{subfigure}{0.24\textwidth}
\includegraphics[width=\textwidth]{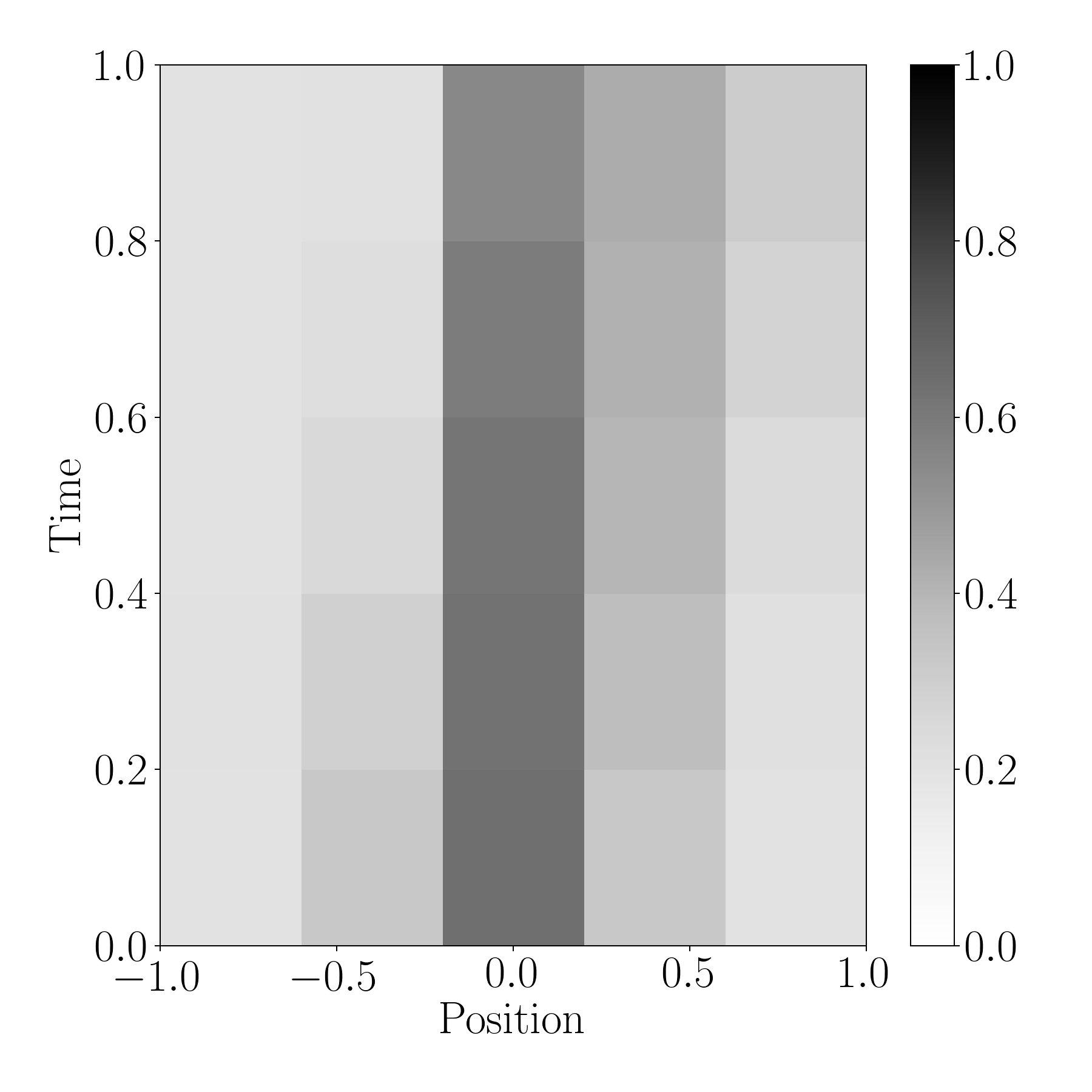}
\caption{$N_t=5, N_x=5$.}
\label{fig:coarse_dat}
\end{subfigure}
\hfill
\caption{Density matrices obtained after various choices of discretizations (into $N_x$ cells in space and $N_t$ steps in time) of the PDE solution.}
\label{fig:ex_discr_dat}
\end{figure}

The relative errors between the estimated parameters and the true value $\bar v_m =1$ are given in \Cref{tab:param_ct_all} and the RMSE values between the associated finite volume approximations and the density matrices are given in \Cref{tab:rmse_ct_all}. First, one can note that for density matrices with more than $11$ space cells, errors on the parameter estimation lower than $15\%$ (and even in some case lower than $5\%$) can be obtained. This shows that the parameter can indeed be identified from the discretized density matrices. As for the residual error on the parameter, it can be explained by the nature of the density matrices used here: indeed, comparing the PDE solution in \Cref{fig:ic_sol} and its discretizations in \Cref{fig:ex_discr_dat}, suggests that considering too coarse discretizations might smear the solution to a point where identification is no longer possible. In such cases, the features  of the original solution which could help to better identify the parameter are no longer visible in the discretized data: for instance, in \Cref{fig:coarse_dat}, the time and position where sharp changes in the PDE solution  occurred (as seen in \Cref{fig:ic_sol_mat}) are no longer identifiable.

Then, for both the TRM and the LxF scheme, the errors on the parameter and density estimations seem to only depend on the number of space discretization steps $N_x$, and not on the number of time discretization steps $N_t$. Besides, the higher the number of space subdivisions used in the scheme, the better the parameter and density estimates are. A takeaway from these results is that the quality of the parameter and density estimations can be improved independently of the time discretization of the data, by working with fine space discretization steps and by subdividing the cells in space when using the schemes.

\begin{figure}
\centering 

\begin{table}[H]
\input{fig/tab_param_all.tex}
\caption{Relative error $\vert \bar{v}_m - v_m^*\vert/\bar v_m$ on the parameter estimation for various choices of discretization steps and schemes, in case where all data is used.}
\label{tab:param_ct_all}
\end{table}
\vspace{-0.5ex}
\begin{table}[H]
\input{fig/tab_err_all.tex}
\caption{RMSE between the density data $\bm U$ and the approximated densities $\widehat{\bm U}^*$ for various choices of discretization steps and schemes, in case where all data is used.}
\label{tab:rmse_ct_all}
\end{table}

\end{figure}

Besides, when comparing the schemes, one can note that the TRM systematically and significantly outperforms the LxF scheme in terms of RMSE and generally yields better parameter estimates. To understand why, we represent in \Cref{fig:dens_ct_1151,fig:dens_ct_5151} the finite volume approximations associated with two density matrices (respectively obtained by taking $N_x=11$ and  $N_x=51$) when using both the TRM and the LxF scheme (with 5 space subdivisions). It can be observed, especially in \Cref{fig:dens_ct_5151}, that the density estimates are smoother than the original data. The LxF scheme seems to smear the solution more than the TRM, which explains the higher RMSE on the density estimates.

\begin{figure}
\centering
\begin{figure}[H]
\hfill
\begin{subfigure}{0.28\textwidth}
\includegraphics[width=\textwidth]{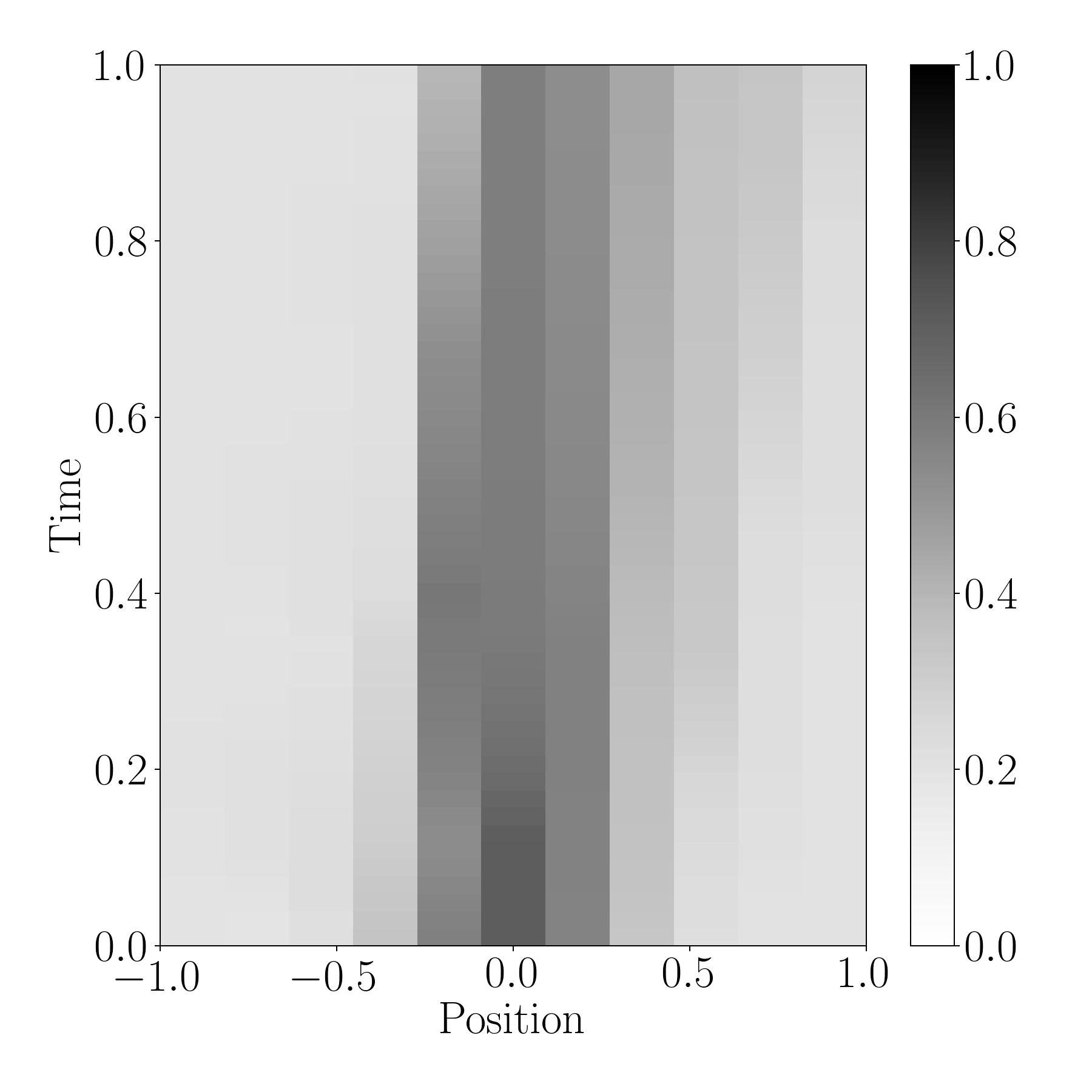}
\caption{Density data.}
\end{subfigure}
\hfill
\begin{subfigure}{0.28\textwidth}
\includegraphics[width=\textwidth]{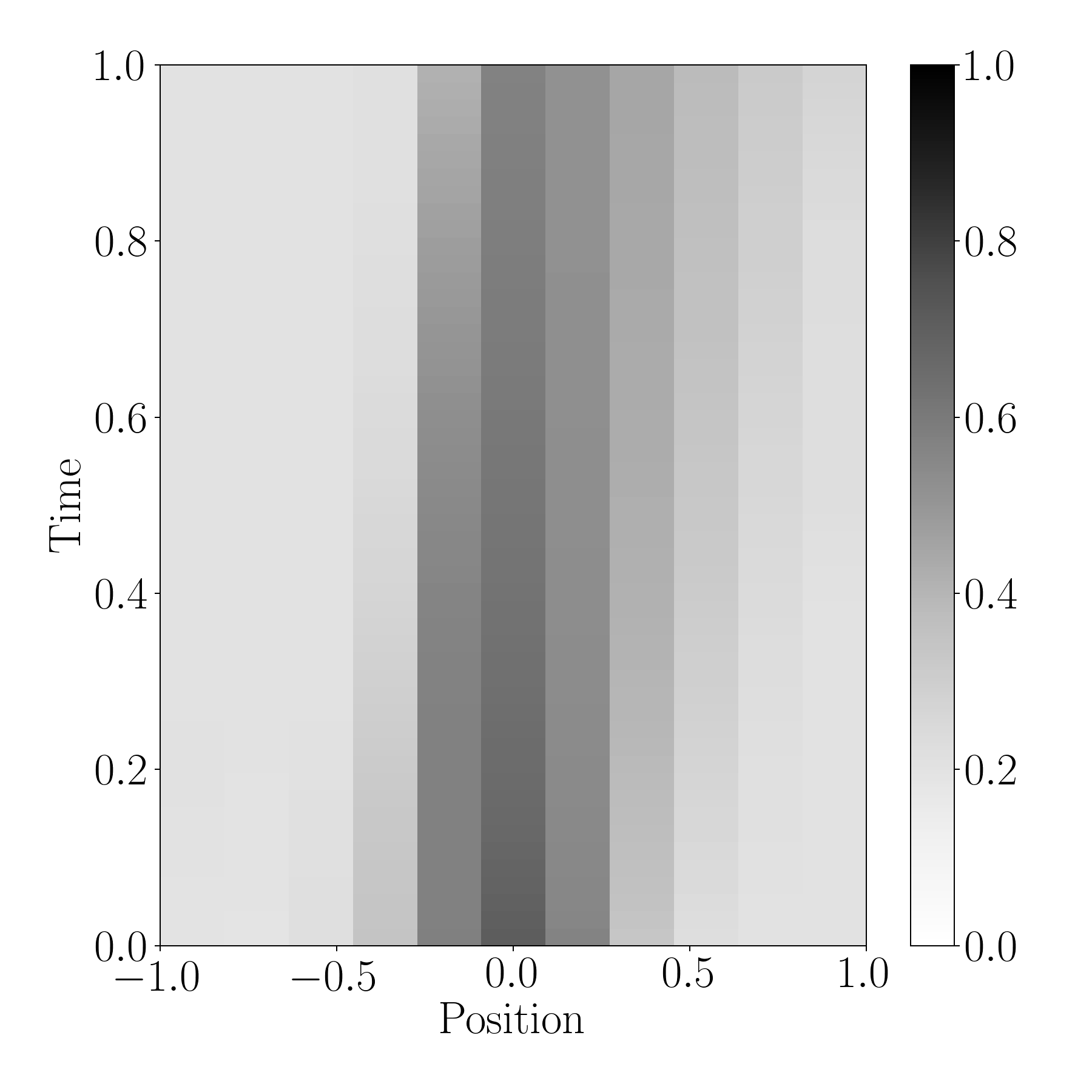}
\caption{Estimation from TRM.}
\end{subfigure}
\hfill
\begin{subfigure}{0.28\textwidth}
\includegraphics[width=\textwidth]{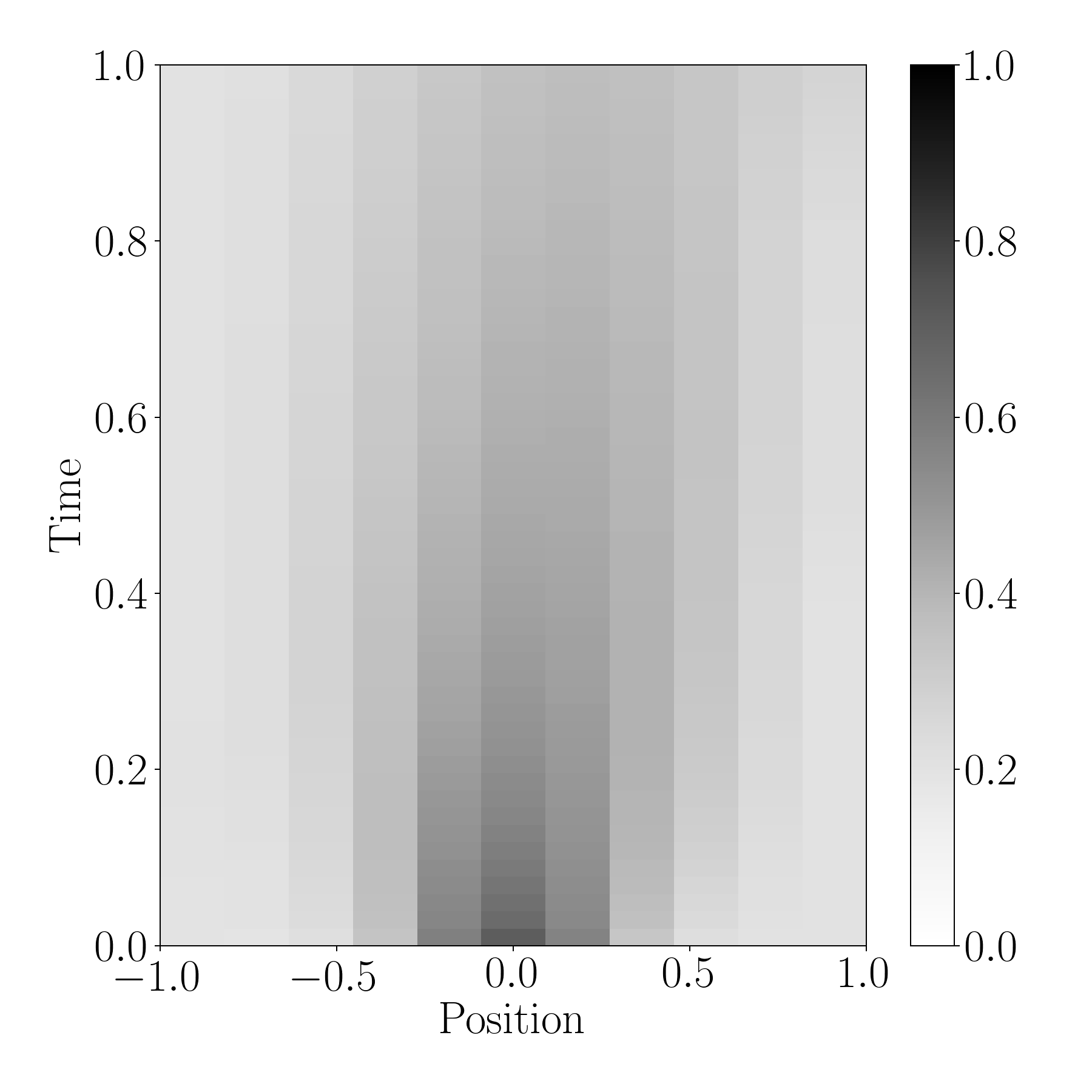}
\caption{Estimation from LxF.}
\end{subfigure}
\hfill
\caption{Estimated densities from data (a) discretized into $11$ space cells and $51$ time steps, using the TRM (b) and LxF (c) schemes with $5$ subdivisions.}
\label{fig:dens_ct_1151}
\end{figure}

\vspace{-2ex}

\begin{figure}[H]
\hfill
\begin{subfigure}{0.28\textwidth}
\includegraphics[width=\textwidth]{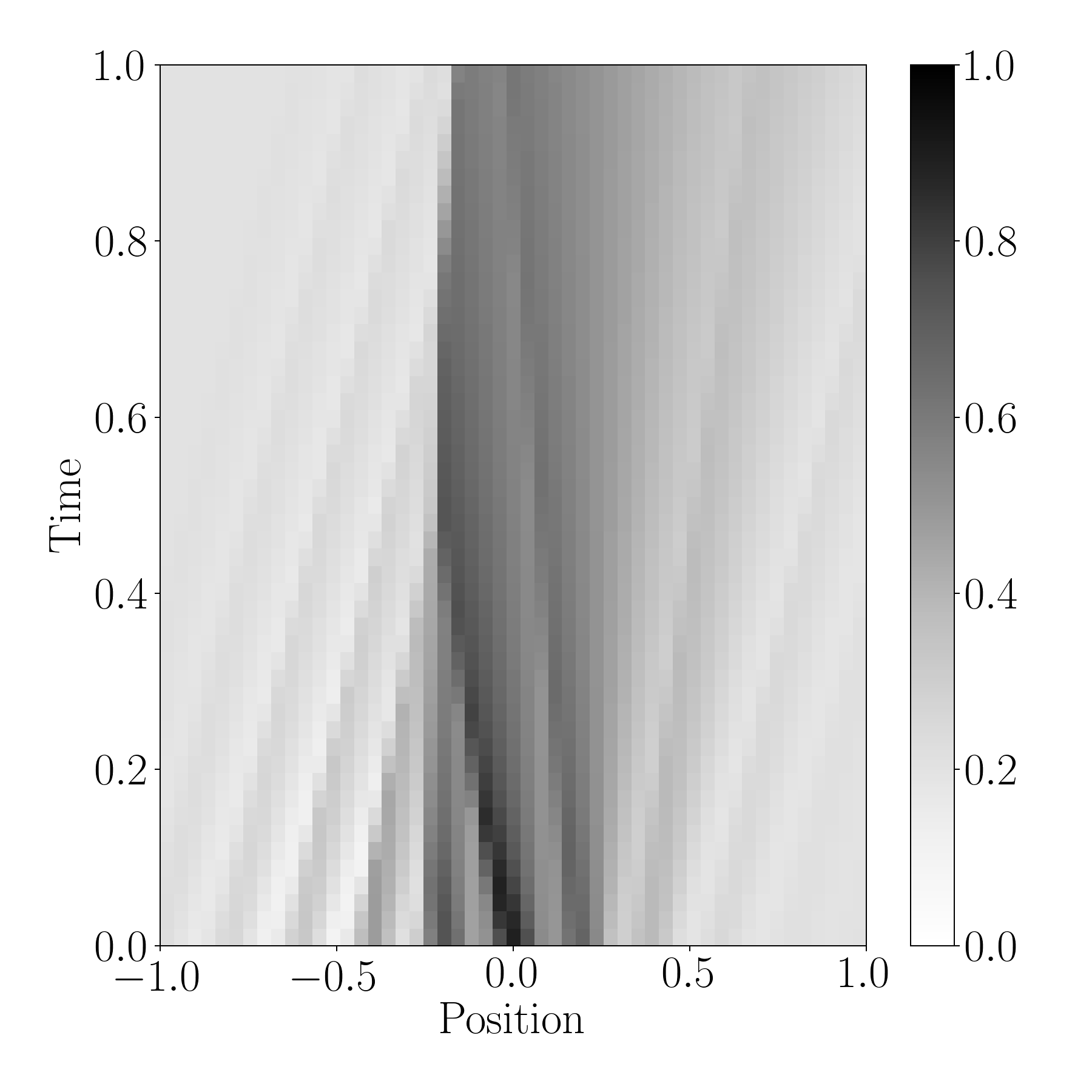}
\caption{Density data.}
\end{subfigure}
\hfill
\begin{subfigure}{0.28\textwidth}
\includegraphics[width=\textwidth]{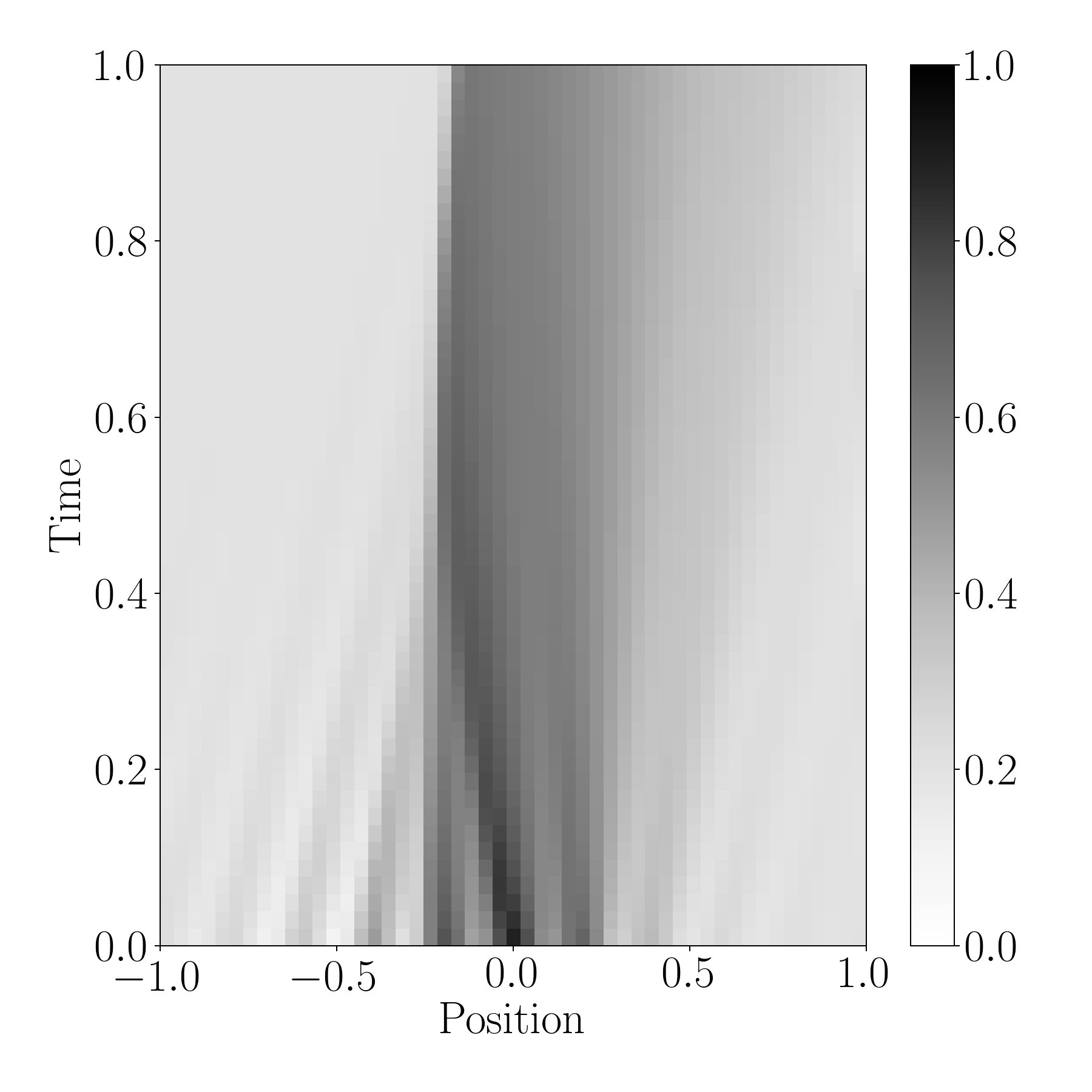}
\caption{Estimation from TRM.}
\end{subfigure}
\hfill
\begin{subfigure}{0.28\textwidth}
\includegraphics[width=\textwidth]{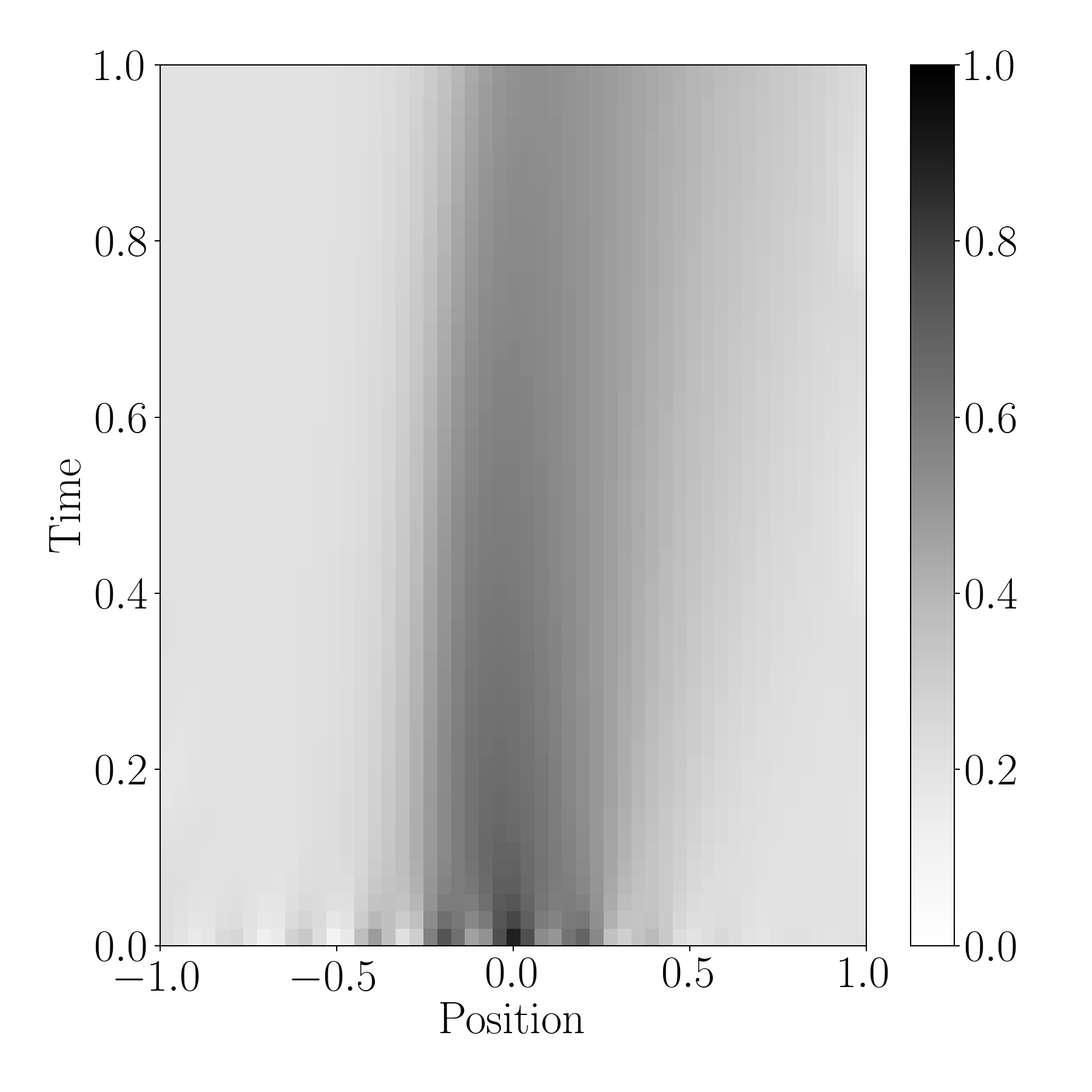}
\caption{Estimation from LxF.}
\end{subfigure}
\hfill
\caption{Estimated densities from data (a) discretized into $51$ space cells and $51$ time steps, using the TRM (b) and LxF (c) schemes with $5$ subdivisions.}
\label{fig:dens_ct_5151}
\end{figure}

\end{figure}

Finally, in order to test the robustness of the approximation, we consider the following approach. Starting from one of the previously formed density matrices, we \q{hide} some of its columns during the estimation procedure. More precisely, we carry out the parameter estimation (and density approximations) while assuming that only $3$ of the density columns are known: the first and the last one (which are used to define boundary conditions) and the center column (i.e., the $((N_x-1)/2)$-th column). However, we will always assume that the $0$-th row of $\bm U$ is observed (as it is used to define the initial state of the finite volume recurrence). 
Following \Cref{sec:adpt_pbm}, this means in particular that the cost function of the minimization problem now takes the form  
\begin{equation}
\tilde L_{[P_t,P_x]}(\theta)=\frac{1}{2}\sum_{i=1}^{N_t-1}\sum_{j_c\in I_c} \left( \frac{1}{P_x}\sum_{k=0}^{P_x-1}\widehat{U}_{k+j_cP_x}^{iP_t}(C(\theta)) - U_{j_c}^i\right)^2, \quad \theta \in\R \veq
\label{eq:costf_t}
\end{equation}
where $I_c=\left\lbrace {(N_x-1)}/{2}\right\rbrace$ is the set of observed road cells (excluding the boundary cells). Following \Cref{prop:grad_t}, the minimization of this cost function can once again be tackled using a gradient-based optimization algorithm and in particular the conjugate gradient algorithm.

The relative error between the estimated parameters and the true value $\bar v_m =1$ are then given in \Cref{tab:param_ct_mid} and the RMSE values between the associated finite volume approximations and the density matrices are given in \Cref{tab:rmse_ct_mid}.
One observes that the TRM is still able to yield good estimates of the parameter and RMSE on the density estimates that are similar to when considering the whole density matrix. However, the LxF scheme now gives poor estimates of the parameter and high-RMSE density estimates. Hence, the TRM proves to be more robust to missing input data than the LxF scheme. From now on, only the TRM will be used as finite volume scheme. In the next section, we apply the same approach as the one used in this case study to real-world measurements of density.

\begin{figure}

\begin{table}[H]
\input{fig/tab_param_mid.tex}
\caption{Relative error $\vert \bar{v}_m - v_m^*\vert/\bar v_m$ on the parameter estimation for various choices of discretization steps and schemes, in case where only the center road cell is observed through time.}
\label{tab:param_ct_mid}
\end{table}
\vspace{-0.5ex}
\begin{table}[H]
\input{fig/tab_err_mid.tex}
\caption{RMSE between the density data $\bm U$ and the approximated densities $\widehat{\bm U}^*$ for various choices of discretization steps and schemes, in case where only the center road cell is observed through time.}
\label{tab:rmse_ct_mid}
\end{table}

\end{figure}


\subsection{Mimicking real traffic dynamics}
\label{sec:app_real}

\subsubsection{Generalized density data}
\label{sec:edie}

Most tools designed to measure traffic flows and densities are based on counting the number of vehicles passing a given point of the road or present in a given section of the road. The resulting density measurements are then essentially  discrete since they depend on these discrete count variables. However, when trajectory data is available, Edie \cite{edie1963discussion} suggests a generalization of this idea that leverages the continuity of trajectories (in space and time) to yield a continuous estimation of the density. Take a region $A$ of the space-time domain on which the vehicle trajectories lie. The density of vehicles in $A$ is defined as the ratio between the time spent by all the vehicles in $A$ by the area of $A$. Similarly, the flow of vehicles in $A$ is defined as the ratio between the distance traveled by all the vehicles in $A$ by the area of $A$. Both quantities can be computed for each vehicle whose trajectory intersects $A$, using the definition represented in \Cref{fig:edie}.

\begin{figure}
    \centering
    \includegraphics[width=0.6\textwidth]{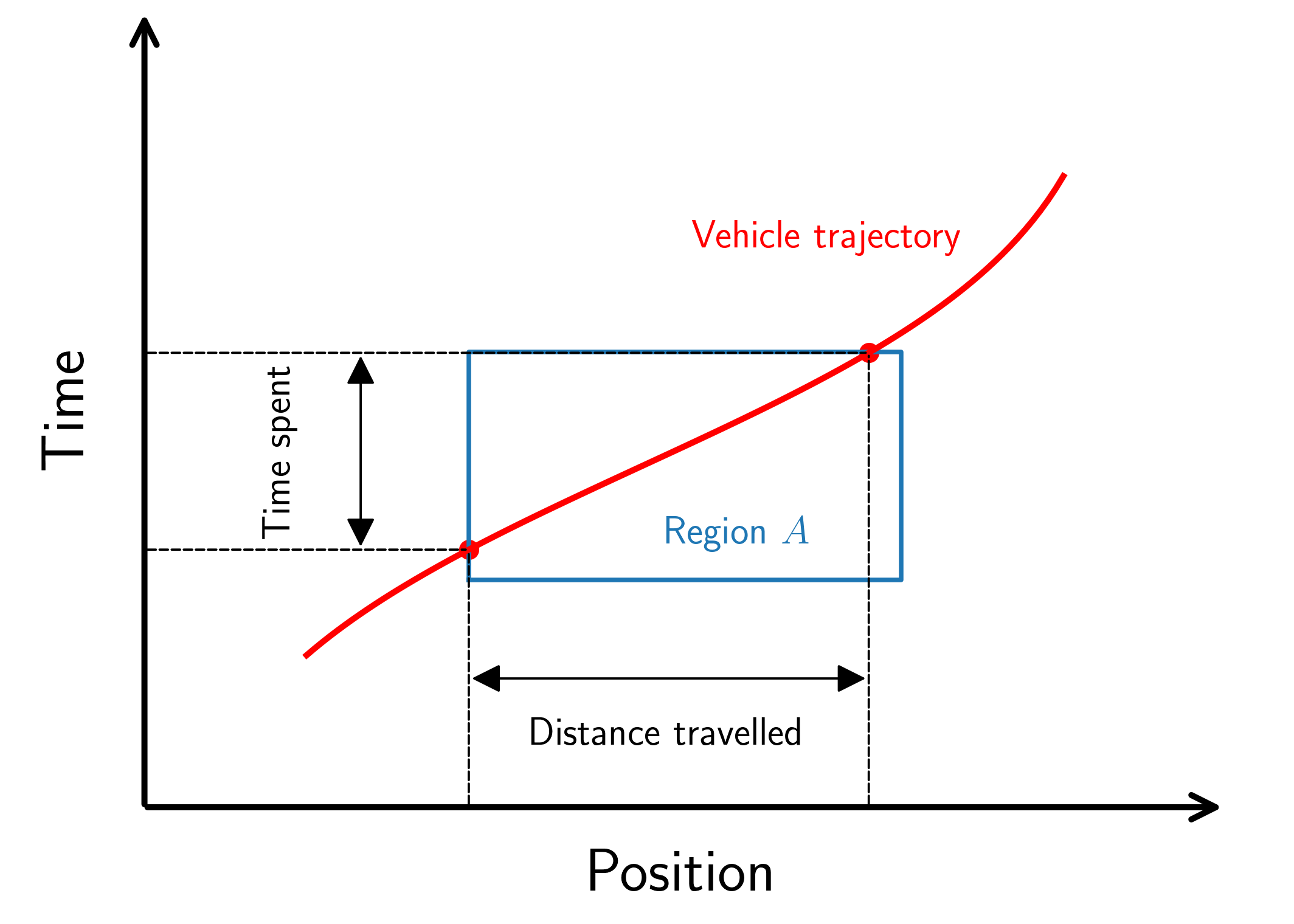}
    \caption{Quantities used in Edie's generalized definition of density and flow. }
    \label{fig:edie}
\end{figure}

Using this definition of a density measurement, it is possible to build a matrix of density measurements $\bm D$ containing the densities associated with a space-time discretization of the domain on which the trajectories lie. Indeed, we discretize this domain into a grid composed of $N_x$ cells in the space dimension and $N_t$ cells in the time dimension (see \Cref{fig:traj_edie}). Then, $\bm D$ is built as the matrix of size $N_t\times N_x$ for which the $(i, j)$-th entry, denoted by $D_j^i$, is the estimated density of the $(i,j)$-th grid cell, as obtained by applying Edie's definition on the space-time region defined by the cell.

\begin{figure}
    \centering
    \includegraphics[width=0.6\textwidth]{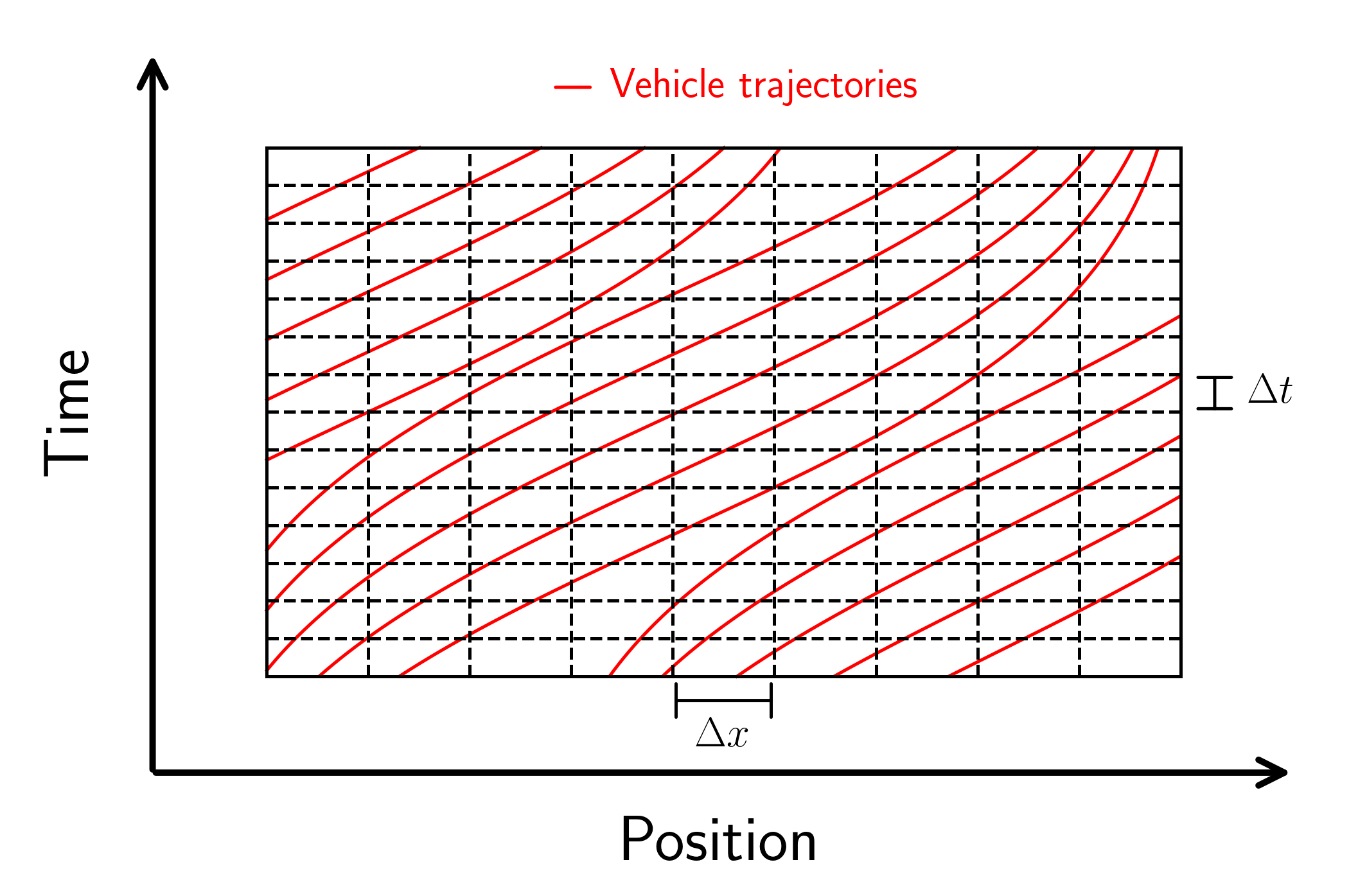}
    \caption{Space-time discretization grid used to compute the density matrix from Edie's definition, which is applied to each cell of the grid. }
    \label{fig:traj_edie}
\end{figure}

The resulting densities $D_j^i$ are assimilated to a ratio between some kind of continuous count of vehicles (given by the ratio between the total time spent by all vehicles the region $A$ and the time width of $A$) and the size of the road cells. As such, they can be considered as approximations of the cell average $\rho_j^i$ of the density function $\rho$, which is expressed as
\begin{equation*}
 D_j^i\approx \rho_j^i=\frac{1}{\Delta x} \int_{x_j-\Delta x/2}^{x_j+\Delta x/2} \rho(t_i, x) \di x \peq
\end{equation*} 
where $(t_i,x_j)$ are the coordinates of the center the $(i,j)$-th cell of the grid introduced above. The right hand side of this last equality links the cell averages of the normalized density $u$ to the density data $D_j^i$: indeed, dividing both sides of this equation by $\rho_m$ yields that the cell average $U_j^i$ of the normalized density is approximated by the ratio $D_j^i/\rho_m$.

\subsubsection{Constant parameter case}
\label{sec:real_dat_ct}

We test our approach for parameter identification (and density estimation) on real-world data. We consider in particular two density matrices, both obtained from the same trajectory data, using the approach presented in \Cref{sec:edie}. We work here with trajectory data extracted from the highD dataset, which comes from video recordings along sections of German highways \citep{highDdataset}. We consider a particular section with length of about $400\,\mathrm{m}$, which we discretize into $11$ road cells. Based on this, we build two density matrices, which we call Dataset~1 and Dataset~2, corresponding to observations of the section over time-lapses of $2\,\mathrm{min}$ taken at different times, so that Dataset~1 reflects free flow conditions only, and Dataset~2 reflects a transition between free flow conditions and a congested state. The time step used to build these density matrices is $\Delta t=2\,\mathrm{s}$. The resulting density matrices have $N_t=60$ rows and $N_x=11$ columns and can be observed in \Cref{fig:dat_real}. Finally we estimate the maximal density of the considered road by dividing the number of lanes by the mean length of the observed vehicles, which gives $\rho_m\approx 0.49\,\mathrm{m}^{-1}$.

\begin{figure}[H]
\hfill
\begin{subfigure}{0.4\textwidth}
\includegraphics[width=\textwidth]{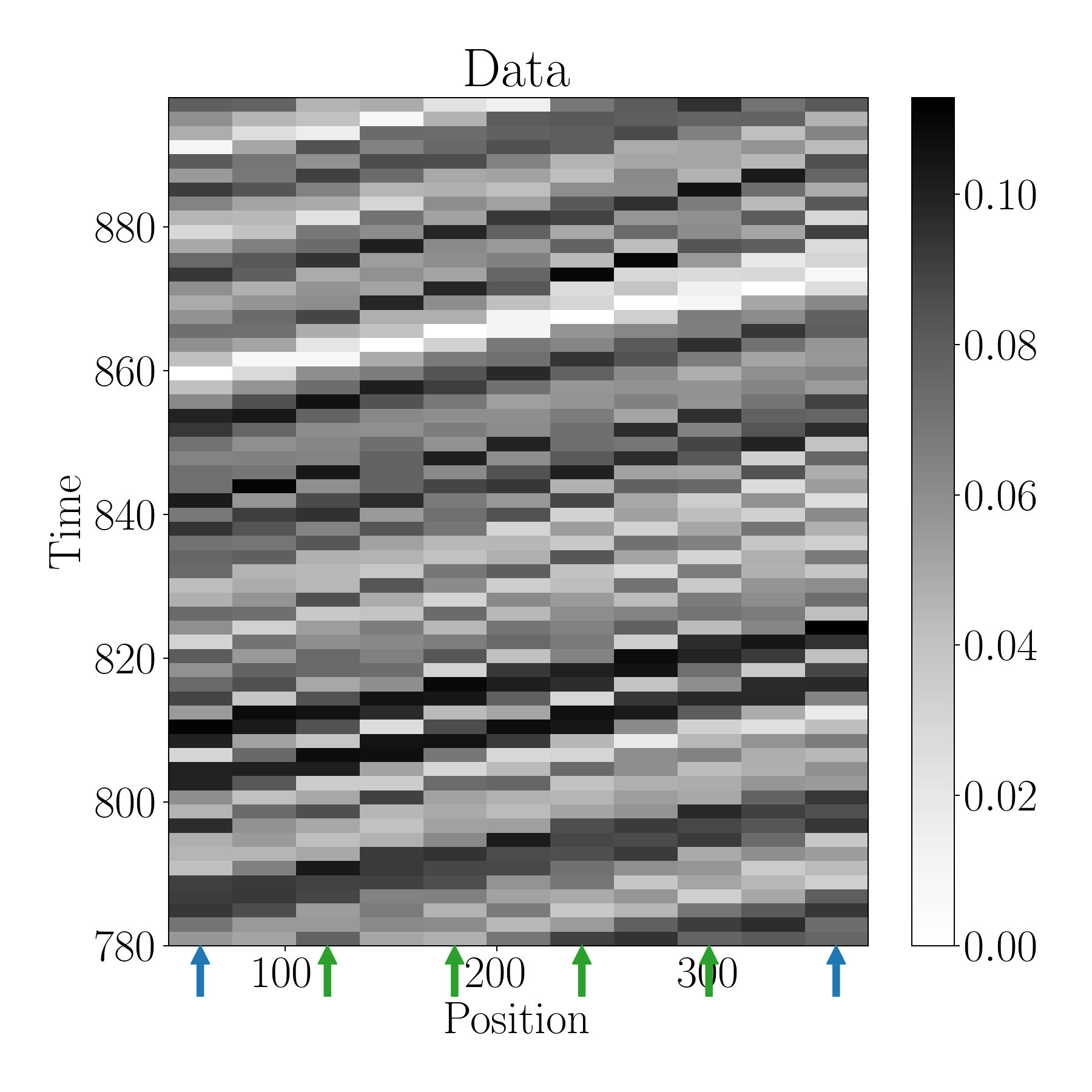}
\caption{Dataset~1.}
\end{subfigure}%
\hfill%
\begin{subfigure}{0.4\textwidth}
\includegraphics[width=\textwidth]{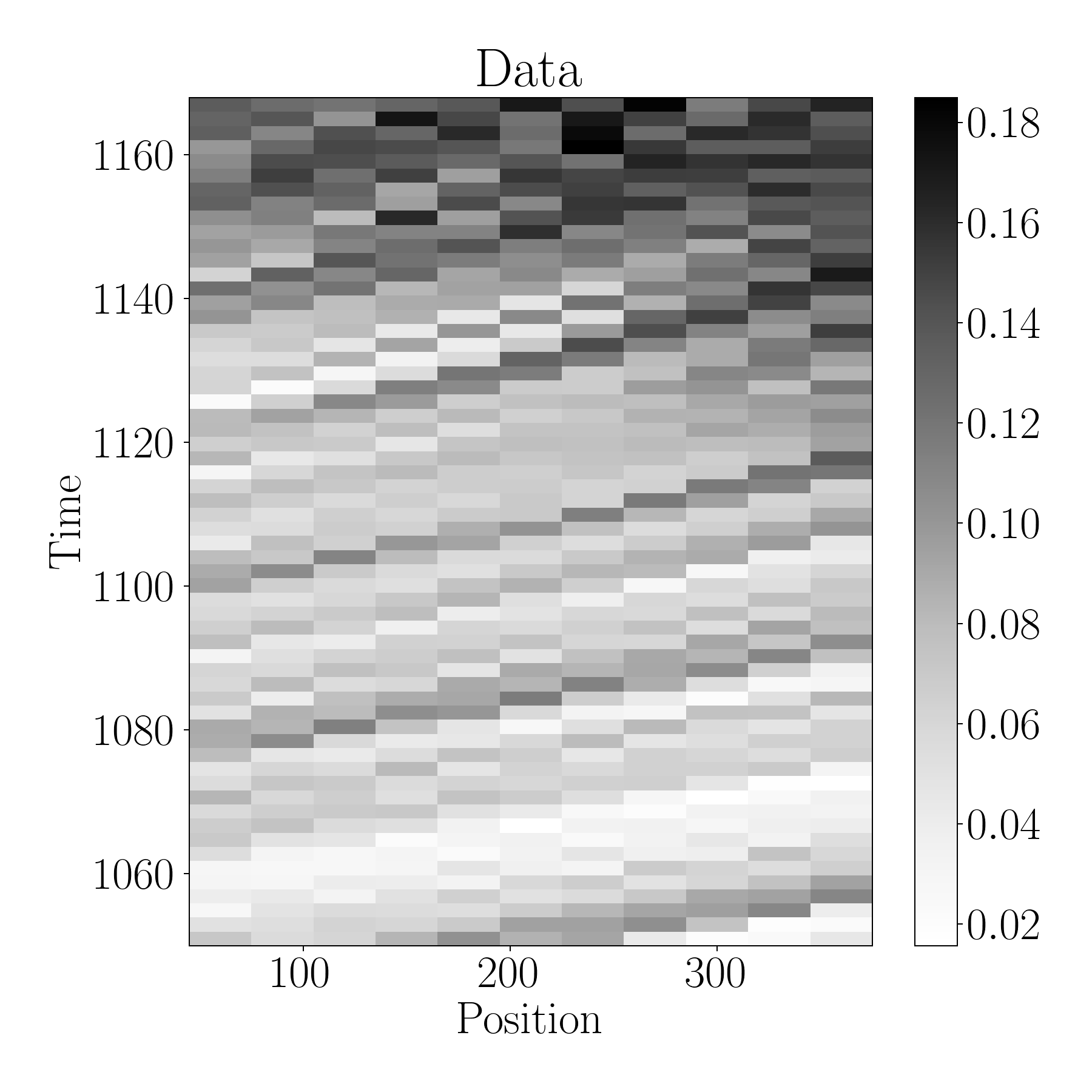}
\caption{Dataset~2.}
\end{subfigure}%
\hfill
\caption{Representation of the two datasets used in the real-data case.}
\label{fig:dat_real}
\end{figure}

For each density matrix, we compute the parameter and density estimate using the TRM scheme with the following discretization choice: the space cells are subdivided into $P_x=3$ subcells and the number of time subdivisions $P_t$ is taken to be the smallest integer so that the CFL condition~\eqref{eq:cfl_f} is satisfied for some rough estimate of the maximal speed $v_m=130 \text{ km}\cdot\text{h}^{-1}$, thus giving $P_t=15$. Once again, the first and last columns as well as the first row of the density matrix are used as boundary and initial conditions for the finite volume recurrences. Then, three cases are considered: either the whole density matrix is used and hence the cost function~\eqref{eq:costf_t} with $I_c=\lbrace 1,\dots,9\rbrace$ is minimized, or half of the columns are used and hence the cost function~\eqref{eq:costf_t} is minimized but with $I_c=\lbrace 2,4,6,8\rbrace$, or only one column is used and hence the cost function~\eqref{eq:costf_t} is minimized with $I_c=\lbrace 5\rbrace$. 

The results of these estimations are represented in \Cref{fig:real_ct_1} for Dataset~1 and in \Cref{fig:real_ct_2} for Dataset~2. In both cases we can once again notice the robustness of the estimation since removing some columns from the dataset does not affect significantly the value of the estimated parameter or the RMSE of the estimated densities. Besides, one can note that TRM seems to smooth the true evolution of the densities. When comparing the results obtained for both datasets,  the RMSE for Dataset~1 is significantly lower than that of Dataset~2, which can be explained by comparing visually the estimated densities in both cases.

For Dataset~1, the TRM was able to recreate the linear trends of density values appearing in the dataset and that are characteristic of free flow conditions: indeed, in this case, the vehicles are able to travel freely across the road and hence the vehicles can transfer from one cell to the next undisturbed. Therefore, modeling this vehicle transfer with a unique and constant reaction rate, as the TRM does, seems appropriate. For Dataset~2, however, congestion appears in the dataset and hence there is a change in the conditions with which vehicles can transfer from one cell to the other. A single reaction rate becomes now a more controversial choice, which is confirmed by the fact that the estimated densities do not depict the same congestion as in the data.

\begin{figure}

\begin{figure}[H]
\hfill
\begin{subfigure}{0.249\textwidth}
\includegraphics[width=\textwidth]{fig/Data.png}
\caption{Dataset~1.}
\end{subfigure}%
\hfill%
\begin{subfigure}{0.249\textwidth}
\includegraphics[width=\textwidth]{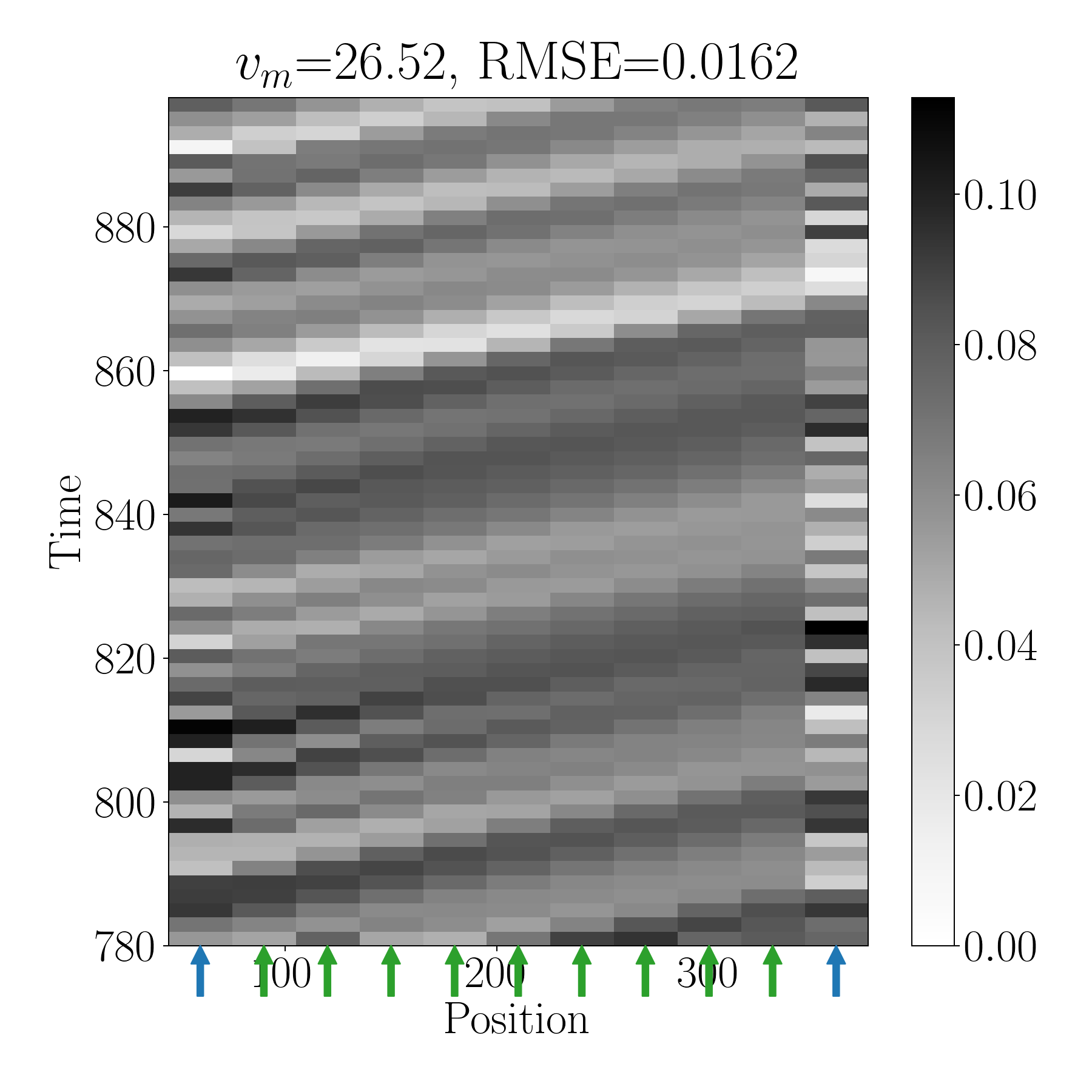}
\caption{Using all the dataset.}
\end{subfigure}%
\hfill%
\begin{subfigure}{0.249\textwidth}
\includegraphics[width=\textwidth]{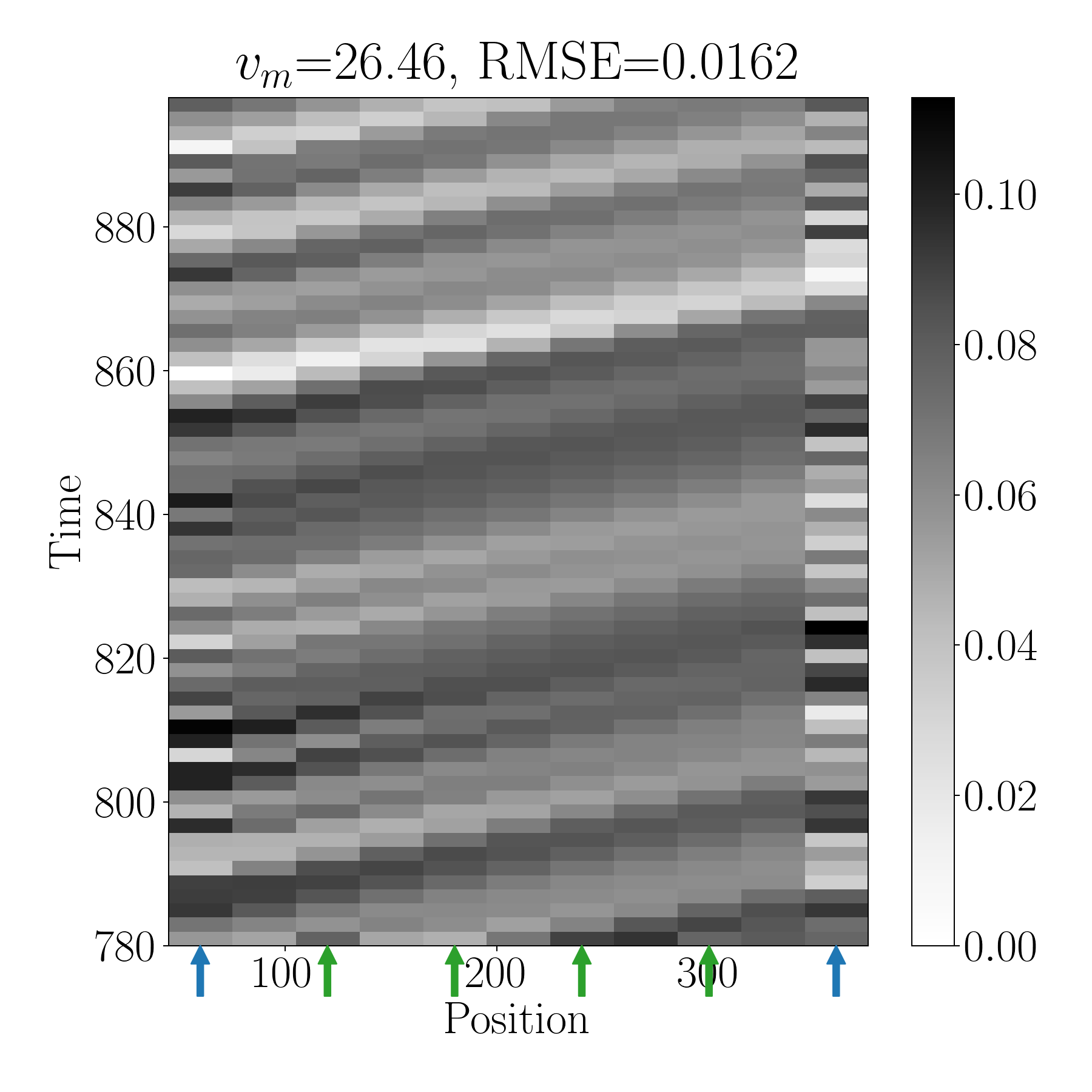}
\caption{Using 4 columns.}
\end{subfigure}%
\hfill%
\begin{subfigure}{0.249\textwidth}
\includegraphics[width=\textwidth]{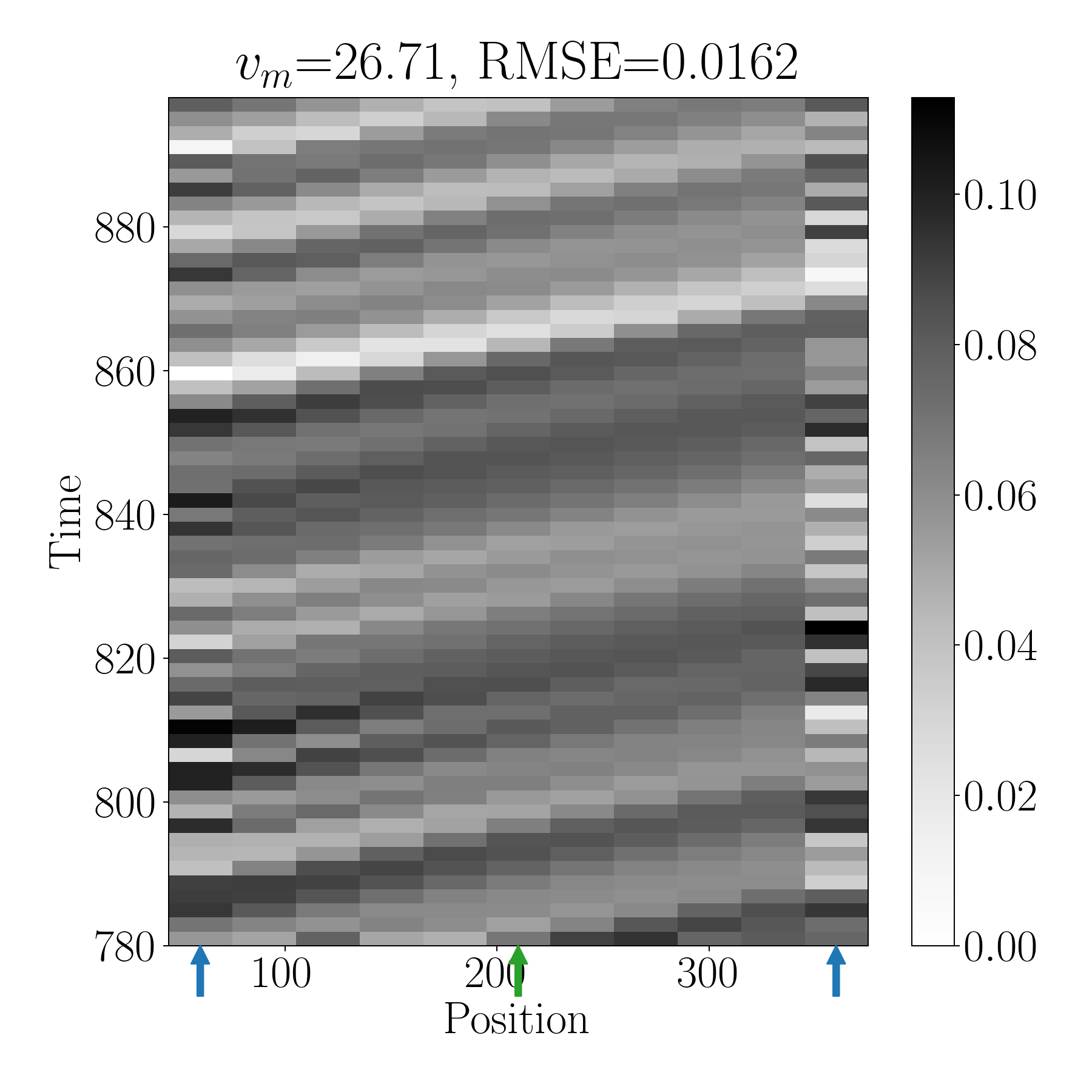}
\caption{Using 1 column.}
\end{subfigure}%
\hfill
\caption{Estimated densities using different subsets of the Dataset~1 (and the TRM), and associated estimated parameter $v_m$ and RMSE. The blue arrows point to the columns used for the boundary conditions and the green arrows to the ones used in the minimization problem.}
\label{fig:real_ct_1}
\end{figure}

\vspace{-2ex}

\begin{figure}[H]
\hfill
\begin{subfigure}{0.249\textwidth}
\includegraphics[width=\textwidth]{fig/Data_both.png}
\caption{Dataset~2.}
\end{subfigure}%
\hfill%
\begin{subfigure}{0.249\textwidth}
\includegraphics[width=\textwidth]{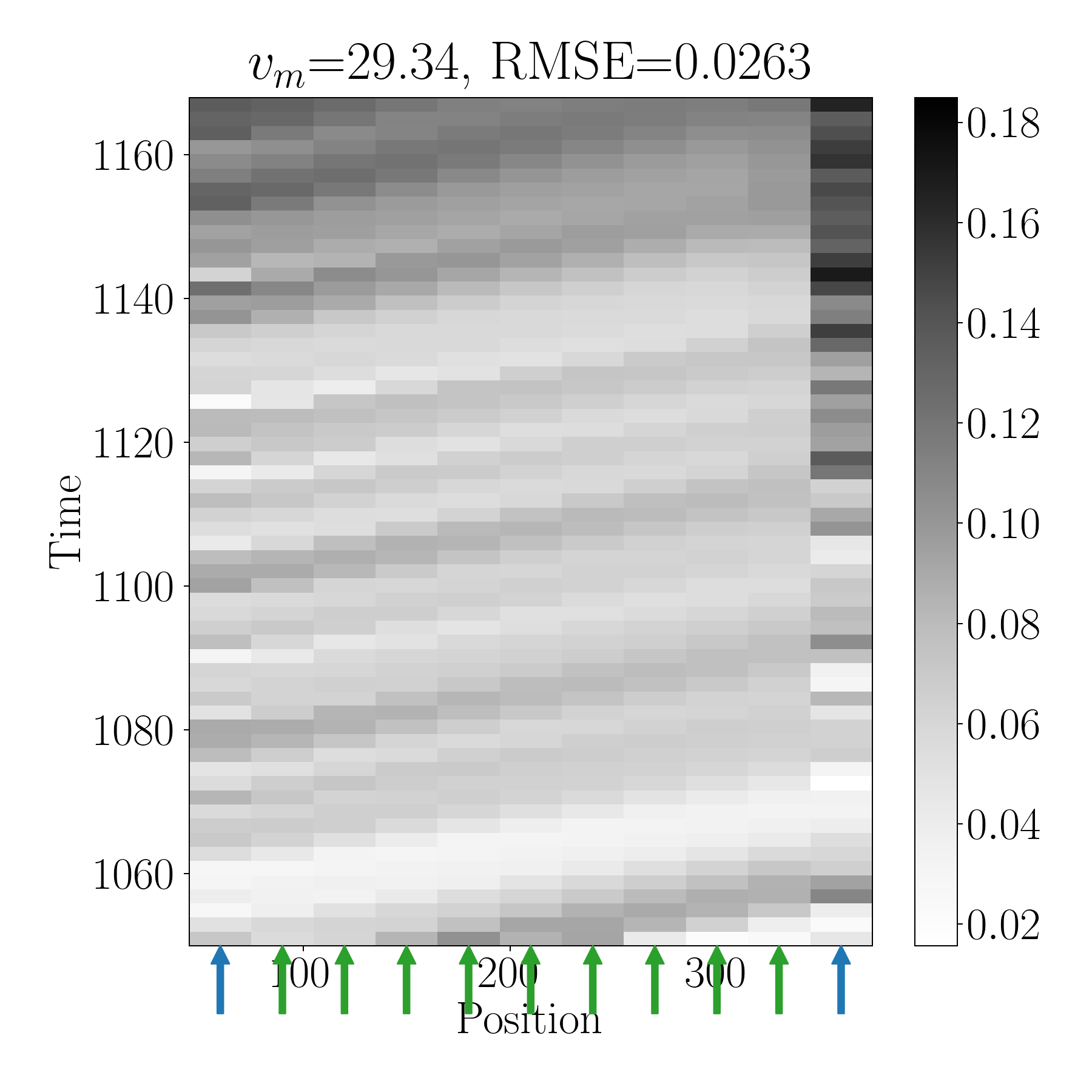}
\caption{Using all the dataset.}
\end{subfigure}%
\hfill%
\begin{subfigure}{0.249\textwidth}
\includegraphics[width=\textwidth]{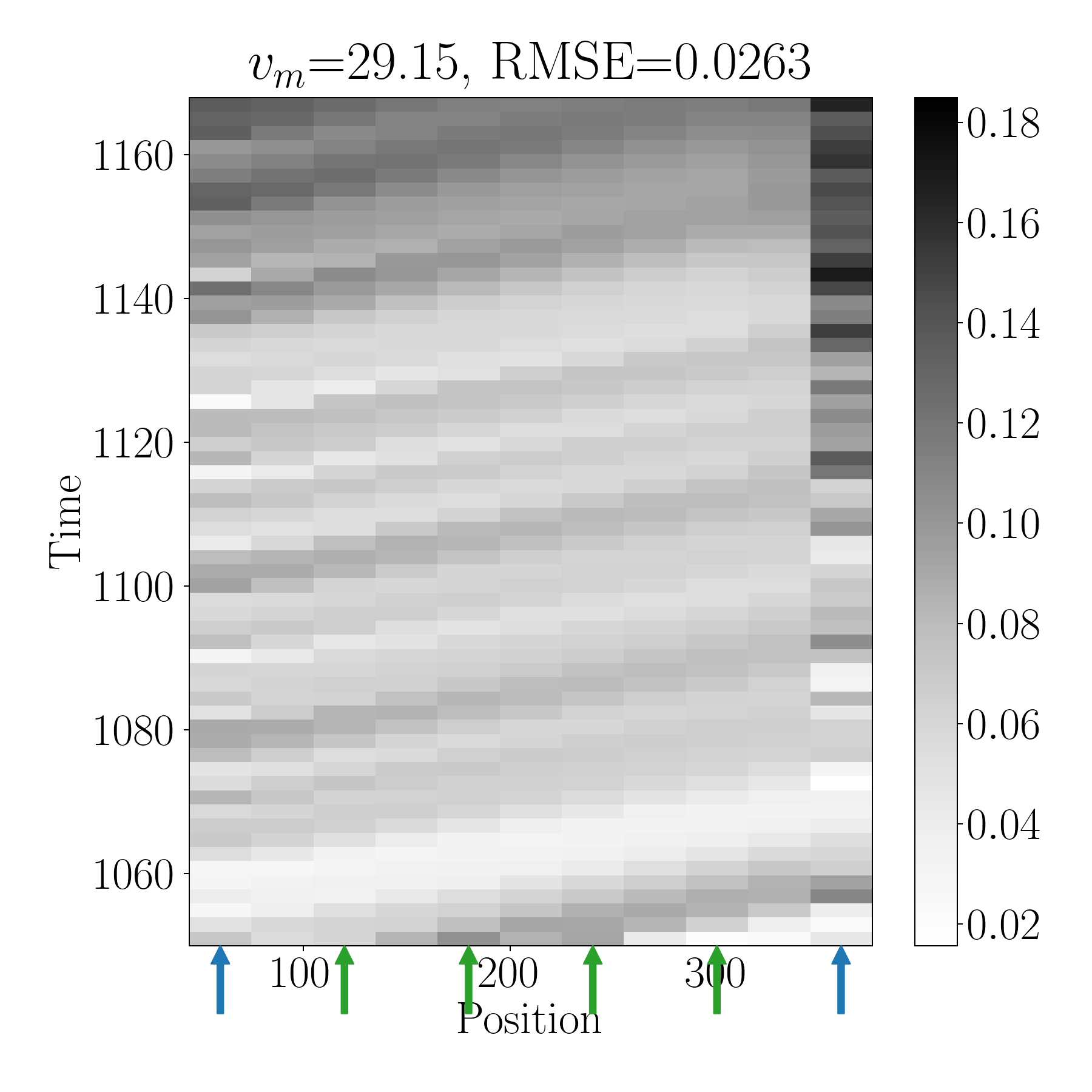}
\caption{Using 4 columns.}
\end{subfigure}%
\hfill%
\begin{subfigure}{0.249\textwidth}
\includegraphics[width=\textwidth]{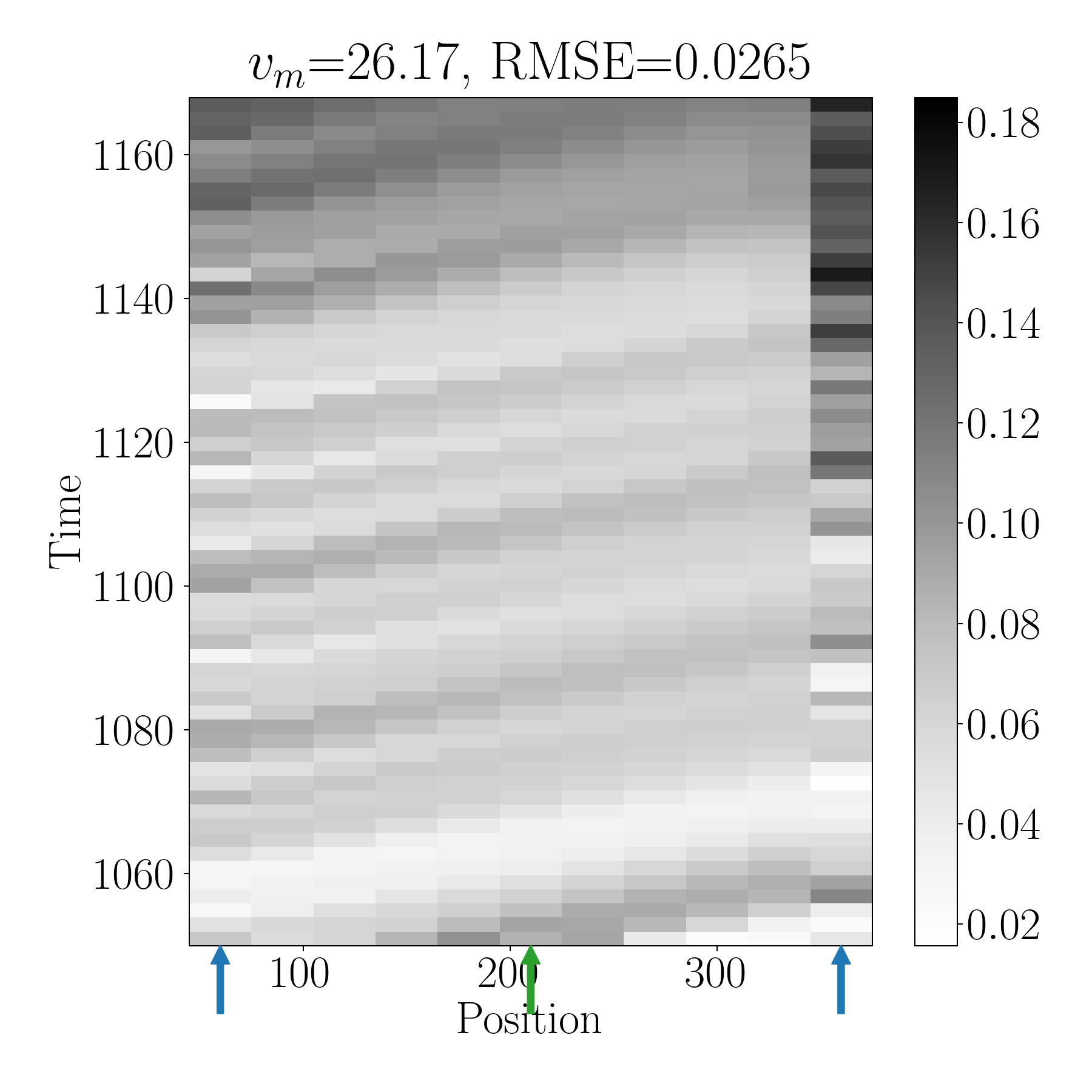}
\caption{Using 1 column.}
\end{subfigure}%
\hfill
\caption{Estimated densities using different subsets of the Dataset~2 (and the TRM), and associated estimated parameter $v_m$ and RMSE. The blue arrows point to the columns used for the boundary conditions and the green arrows to the ones used in the minimization problem.}
\label{fig:real_ct_2}
\end{figure}

\end{figure}

Another way to understand the difference of approximation quality between both datasets is to compare their fundamental diagrams. A fundamental diagram is a scatter-plot representing density measurements against flux measurements done at the same time and space locations. In our case, flux measurements associated to our datasets can be computed from the trajectory data using once gain the approach in \Cref{sec:edie}. As for the flux \q{measurements} associated with the estimated densities, we use the quadratic flux-density relationship~\eqref{eq:f} assumed by the LWR model, and plug in the estimated parameter $v_m^*$. The resulting fundamental diagrams are shown in \Cref{fig:fd_ct}. For Dataset~1, the true fundamental diagram looks quite linear, as expected for free flow conditions, and the quadratic flux of the estimation process then yields an adequate approximation. However, for Dataset~2, the quadratic flux fails to give a good approximation of the  true fundamental diagram, which now shows a mix of linear trend and more diffuse point pattern. In order to improve these estimations, we propose to offer more flexibility to the models by adding new (and physically meaningful) parameters. This is the purpose of the next section.

\begin{figure}
\hfill
\begin{subfigure}{0.4\textwidth}
\includegraphics[width=\textwidth]{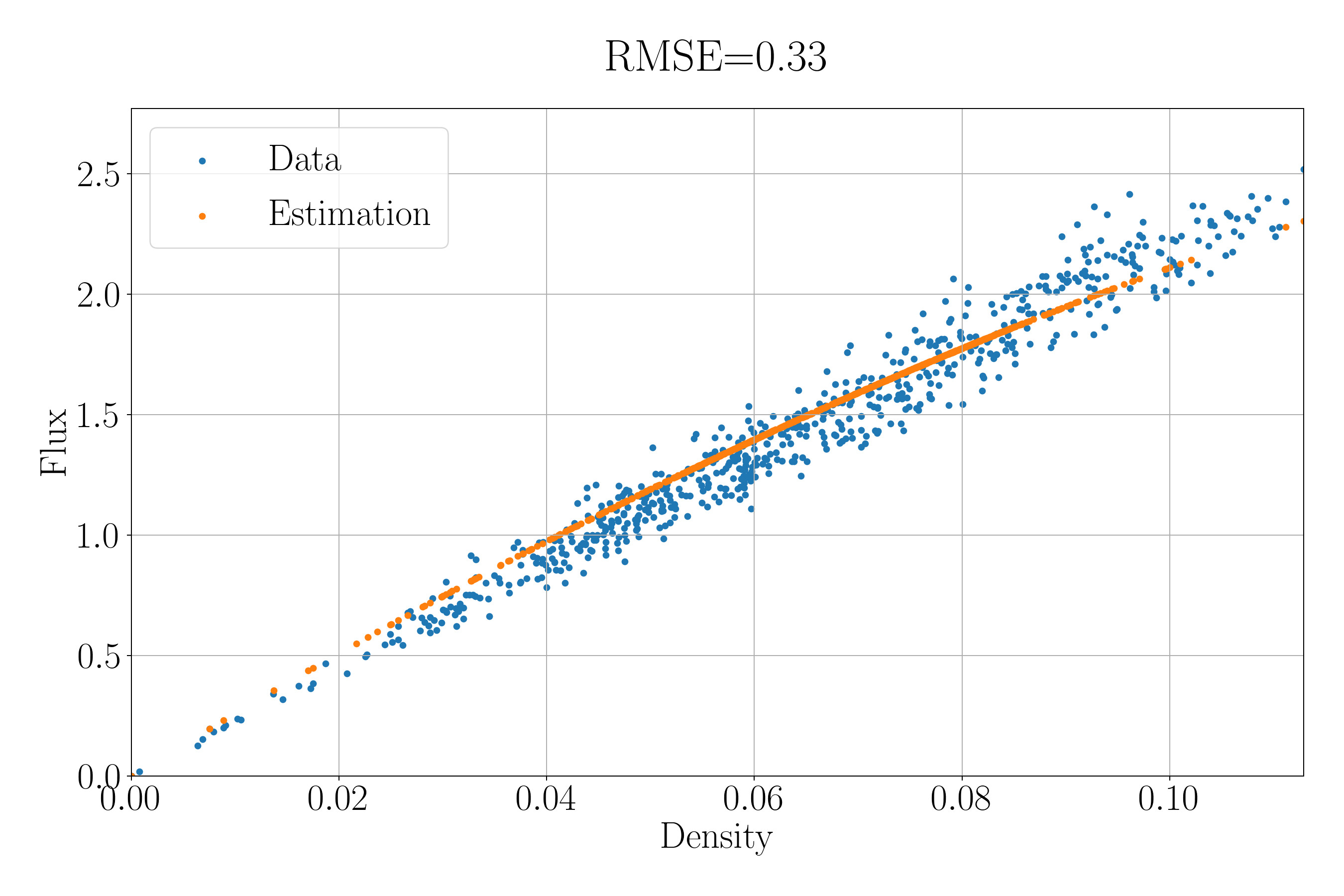}
\caption{Dataset~1.}
\end{subfigure}%
\hfill%
\begin{subfigure}{0.4\textwidth}
\includegraphics[width=\textwidth]{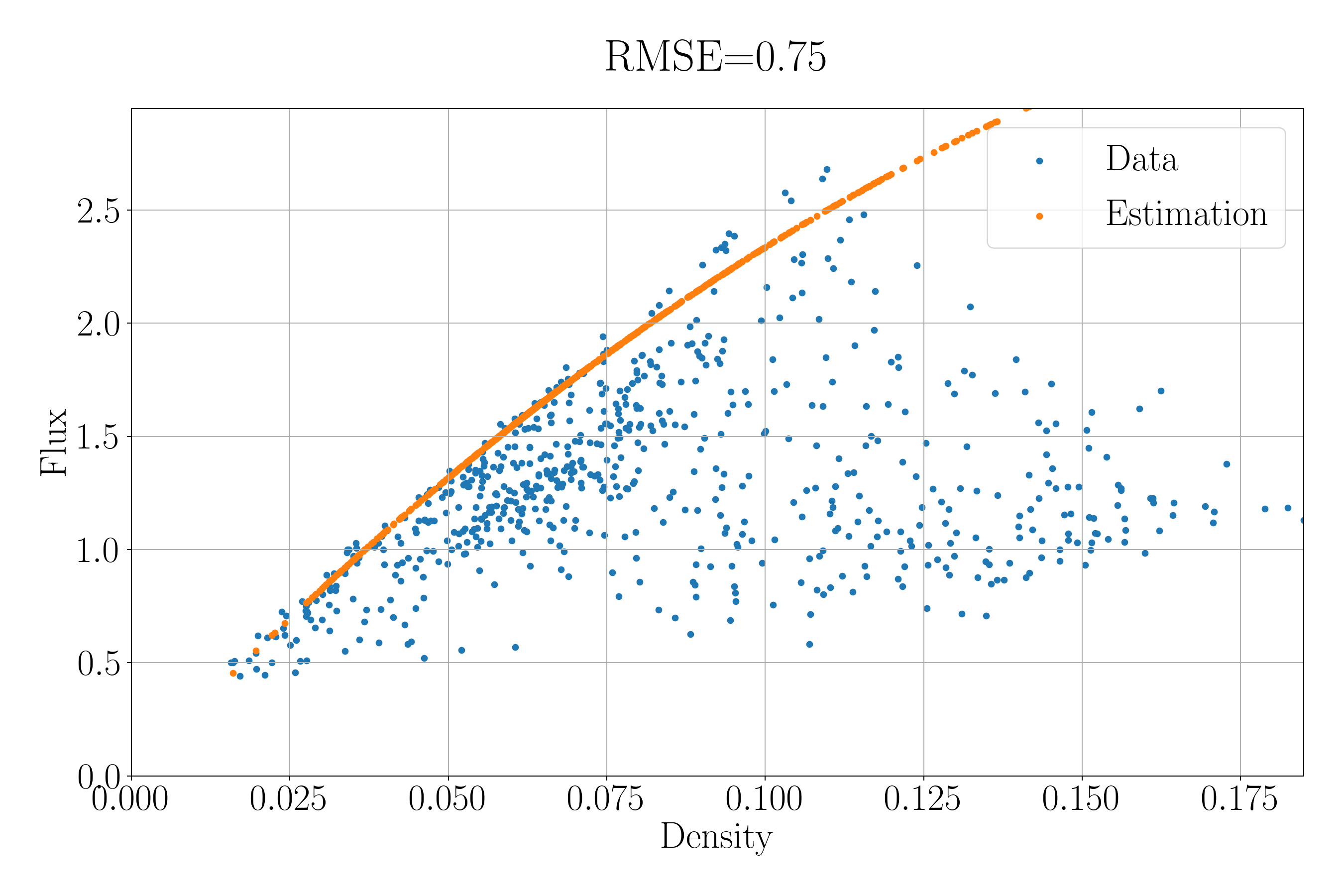}
\caption{Dataset~2.}
\end{subfigure}%
\hfill
\caption{Comparison between the fundamental diagram of the two datasets and their estimations through the TRM.}
\label{fig:fd_ct}
\end{figure}


\subsubsection{Varying parameter case}

Starting from the two datasets introduced in \Cref{sec:real_dat_ct}, we use the same least-square minimization approach to derive the values of the now varying parameter $v_m$. We consider three cases:
\begin{itemize}
\item the parameter $v_m$ is time-dependent only, meaning that the scaling parameters of the finite volume recurrence satisfy that for any $n \in\bi 0, N_t -1\ei$, there exists $C^n\in (0, 1/2)$ such that for any $j\in \bi 0, N_x\ei$, $C_{j}^n=C^n$. Hence, the actual number of parameters to be estimated in this case is $N_t$.
\item the parameter $v_m$ is space-dependent only, meaning that the scaling parameters of the finite volume recurrence satisfy that for any  $j\in \bi 0, N_x\ei$, there exists $C^n\in (0, 1/2)$ such that for any $n \in\bi 0, N_t-1 \ei$, $C_{j}^n=C_j$. Hence, the actual number of parameters to be estimated in this case is $N_x$.
\item the parameter $v_m$ is space-time-dependent, and hence, the actual number of parameters to be estimated in this case is $(N_x+1)N_t$.
\end{itemize}
The estimations are carried out while considering half of the columns of the density matrices (hence $I_c=\lbrace 2,4,6,8\rbrace$) and the parameter $\lambda$ are set so that the overall RMSE between the estimated densities and the whole density matrix is minimized. A TRM with 3 space subdivisions (and 15 time subdivisions) is used as a finite volume scheme, which is the scheme used for the robustness study in the constant case (cf. \Cref{sec:real_dat_ct}, results in \Cref{fig:real_ct_1,fig:real_ct_2}). The results are presented in \Cref{fig:real_t} for the time-dependent case, \Cref{fig:real_s} for the space-dependent case and \Cref{fig:real_st} for the space-time-dependent case.

\begin{figure}
\begin{minipage}{0.24\textwidth}
\begin{subfigure}[c]{\textwidth}
\includegraphics[width=\textwidth]{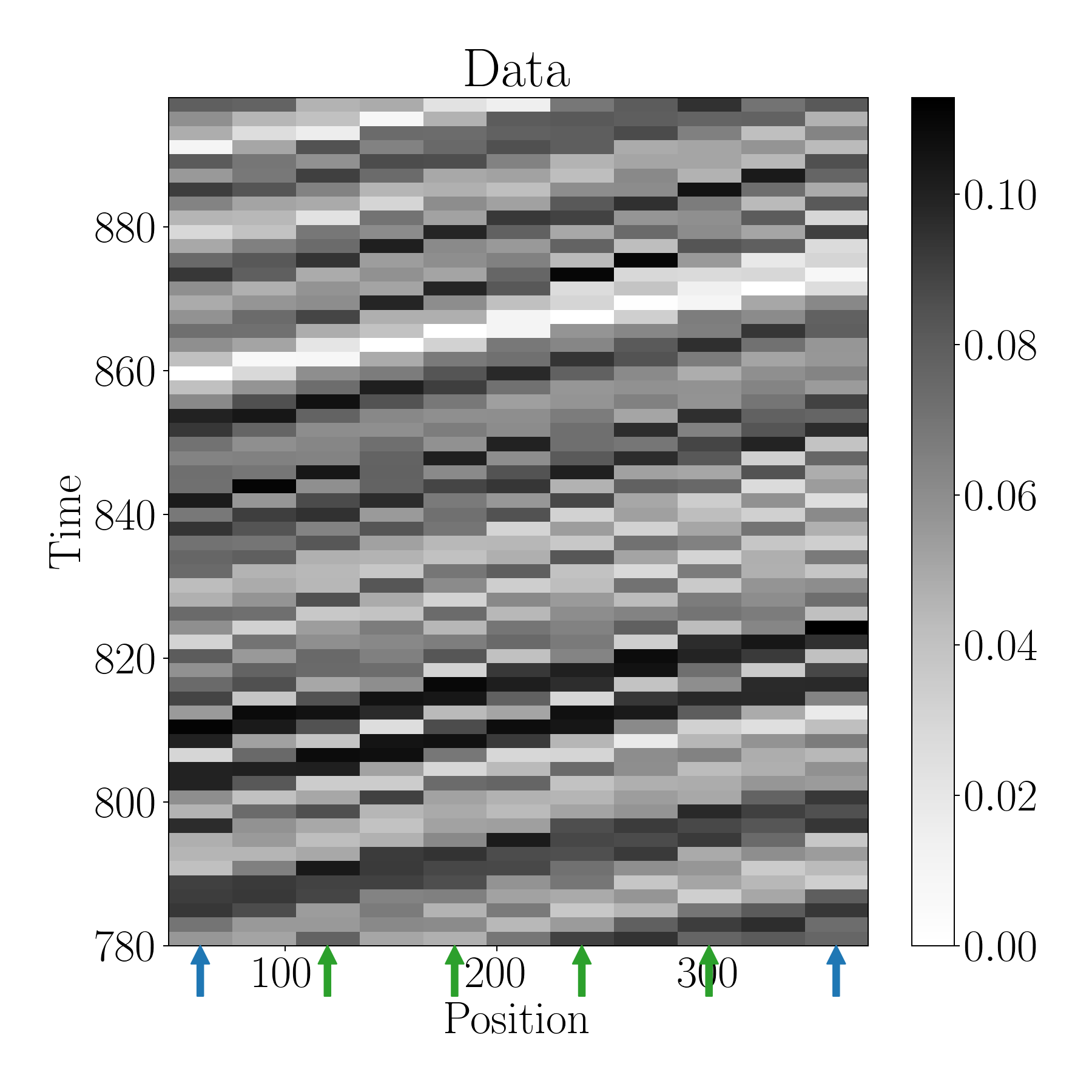}
\caption{Dataset~1.}
\end{subfigure}
\begin{subfigure}[c]{\textwidth}
\includegraphics[width=\textwidth]{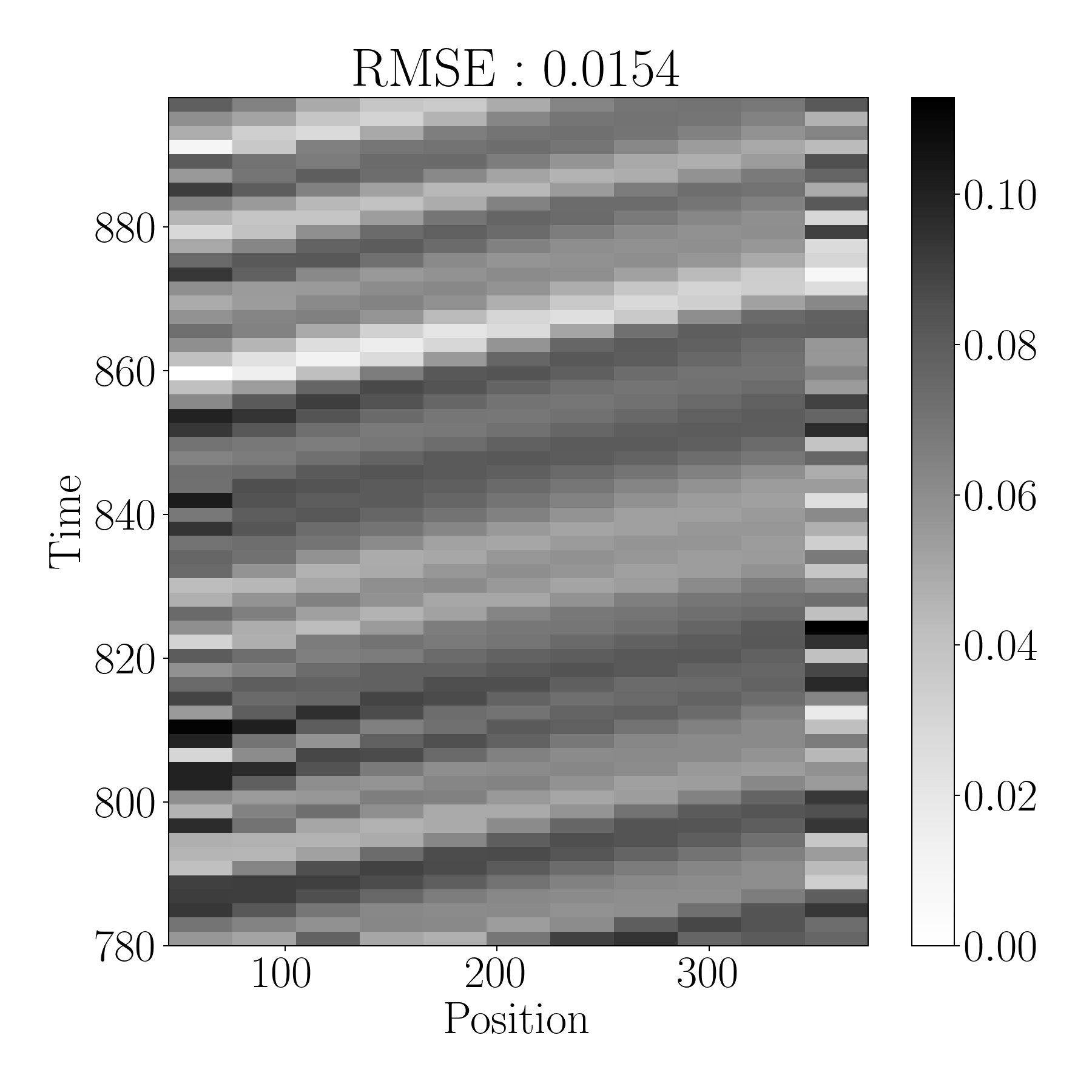}
\caption{Estimated densities for Dataset~1.}
\end{subfigure}
\end{minipage}%
\begin{minipage}{0.249\textwidth}
\begin{subfigure}[c]{\textwidth}
\includegraphics[width=\textwidth]{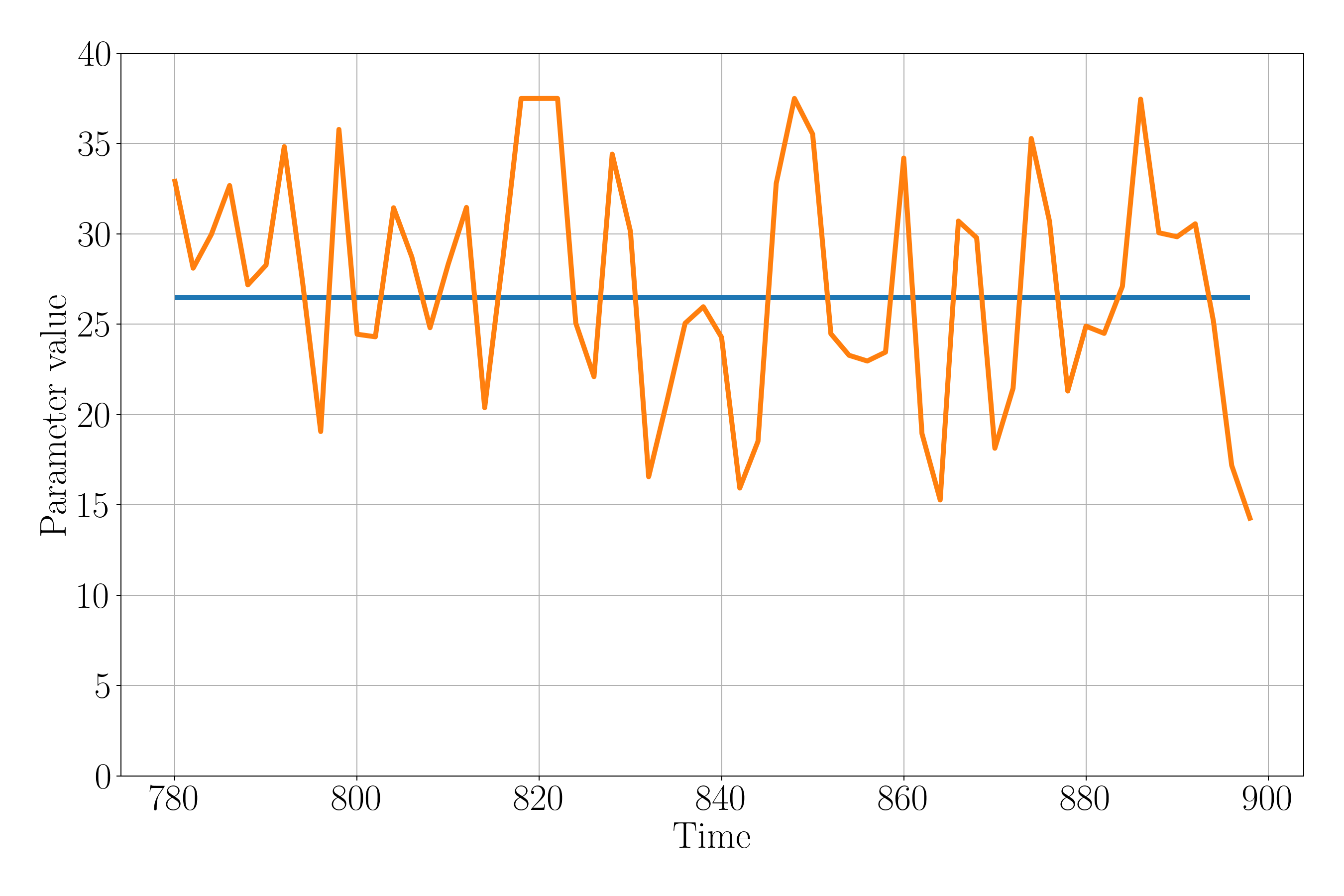}
\caption{Estimated parameters for Dataset~1.}
\end{subfigure}
\end{minipage}%
\hfill
\begin{minipage}{0.24\textwidth}
\begin{subfigure}[c]{\textwidth}
\includegraphics[width=\textwidth]{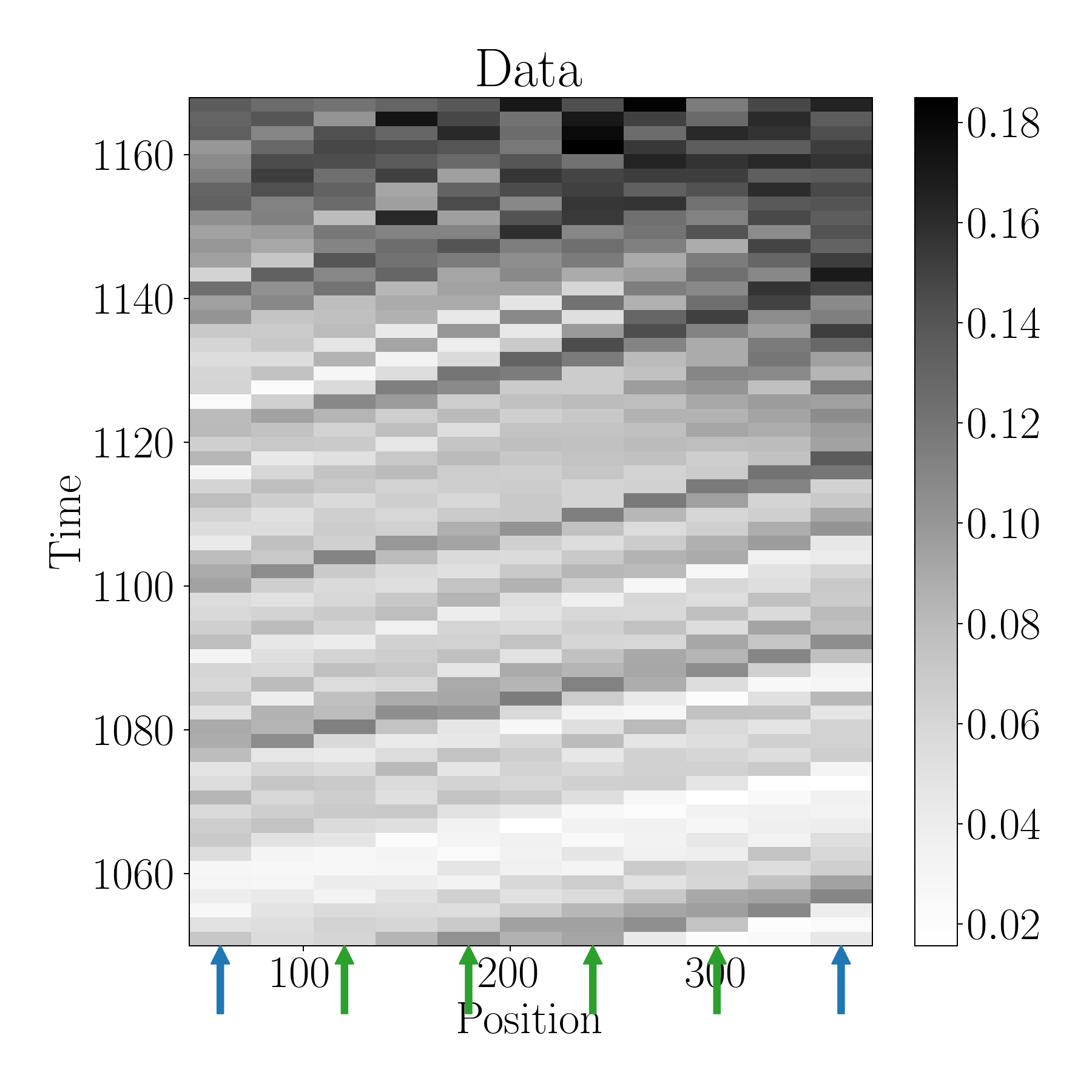}
\caption{Dataset~2.}
\end{subfigure}
\begin{subfigure}[c]{\textwidth}
\includegraphics[width=\textwidth]{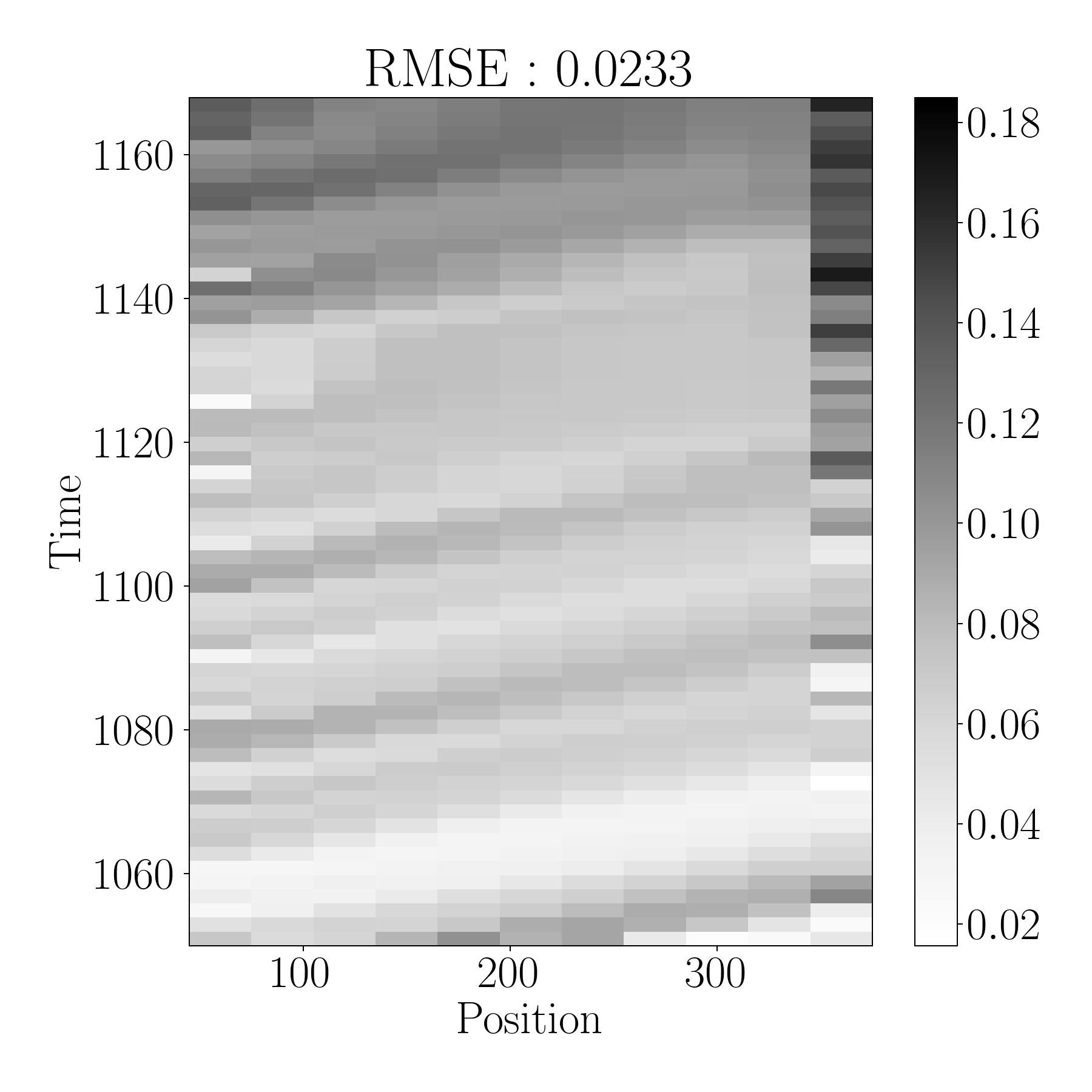}
\caption{Estimated densities for Dataset~2.}
\end{subfigure}
\end{minipage}%
\begin{minipage}{0.249\textwidth}
\begin{subfigure}[c]{\textwidth}
\includegraphics[width=\textwidth]{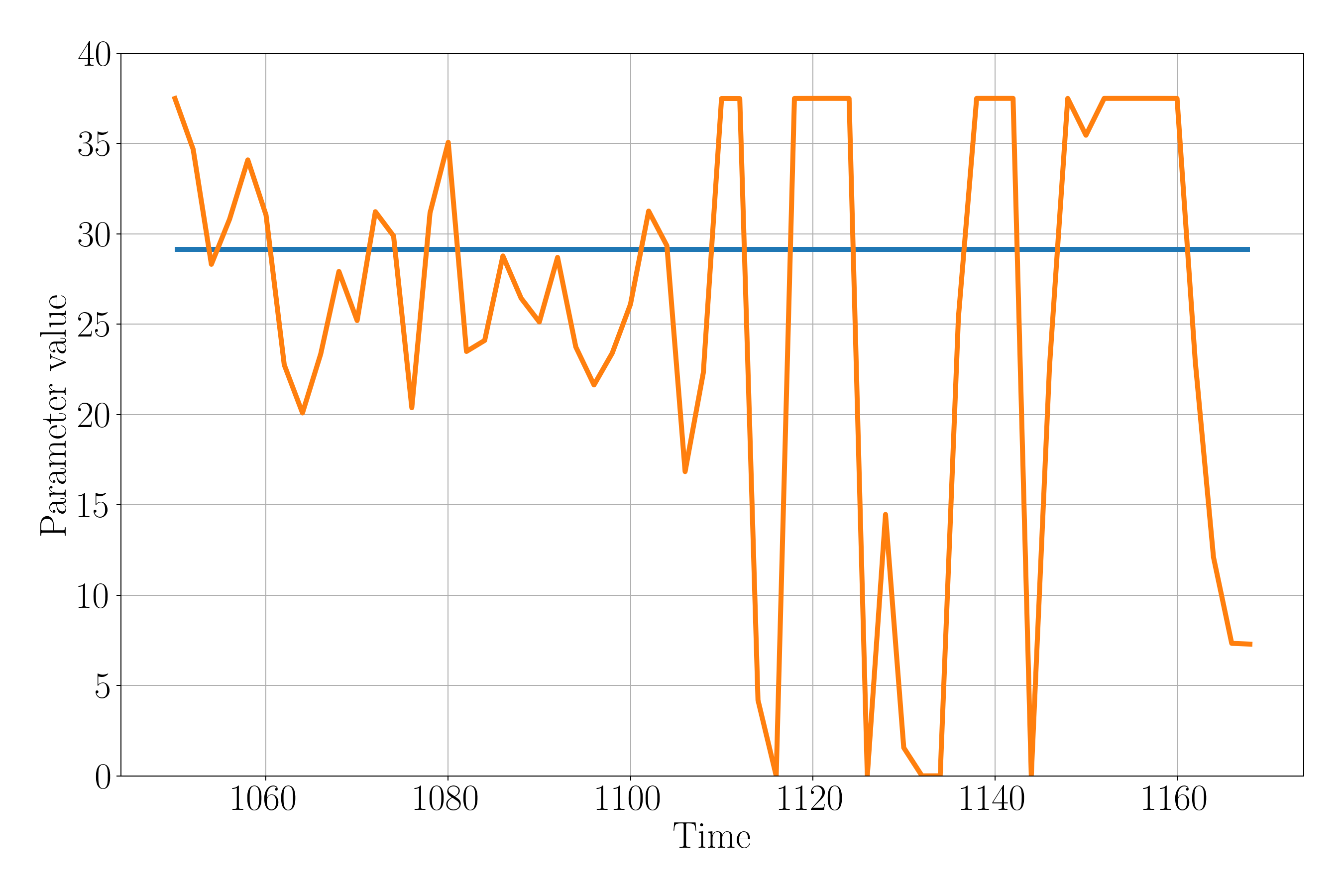}
\caption{Estimated parameters for Dataset~2.}
\end{subfigure}
\end{minipage}
\caption{Estimated densities and parameters for each dataset when time dependent parameters are considered. The plots (c) and (f) represent the evolution through time of the estimated parameter.The blue line represents the estimated parameter in the constant case.}
\label{fig:real_t}
\end{figure}

\begin{figure}
\begin{minipage}{0.24\textwidth}
\begin{subfigure}[c]{\textwidth}
\includegraphics[width=\textwidth]{fig/Dmat_Lin.png}
\caption{Dataset~1.}
\end{subfigure}
\begin{subfigure}[c]{\textwidth}
\includegraphics[width=\textwidth]{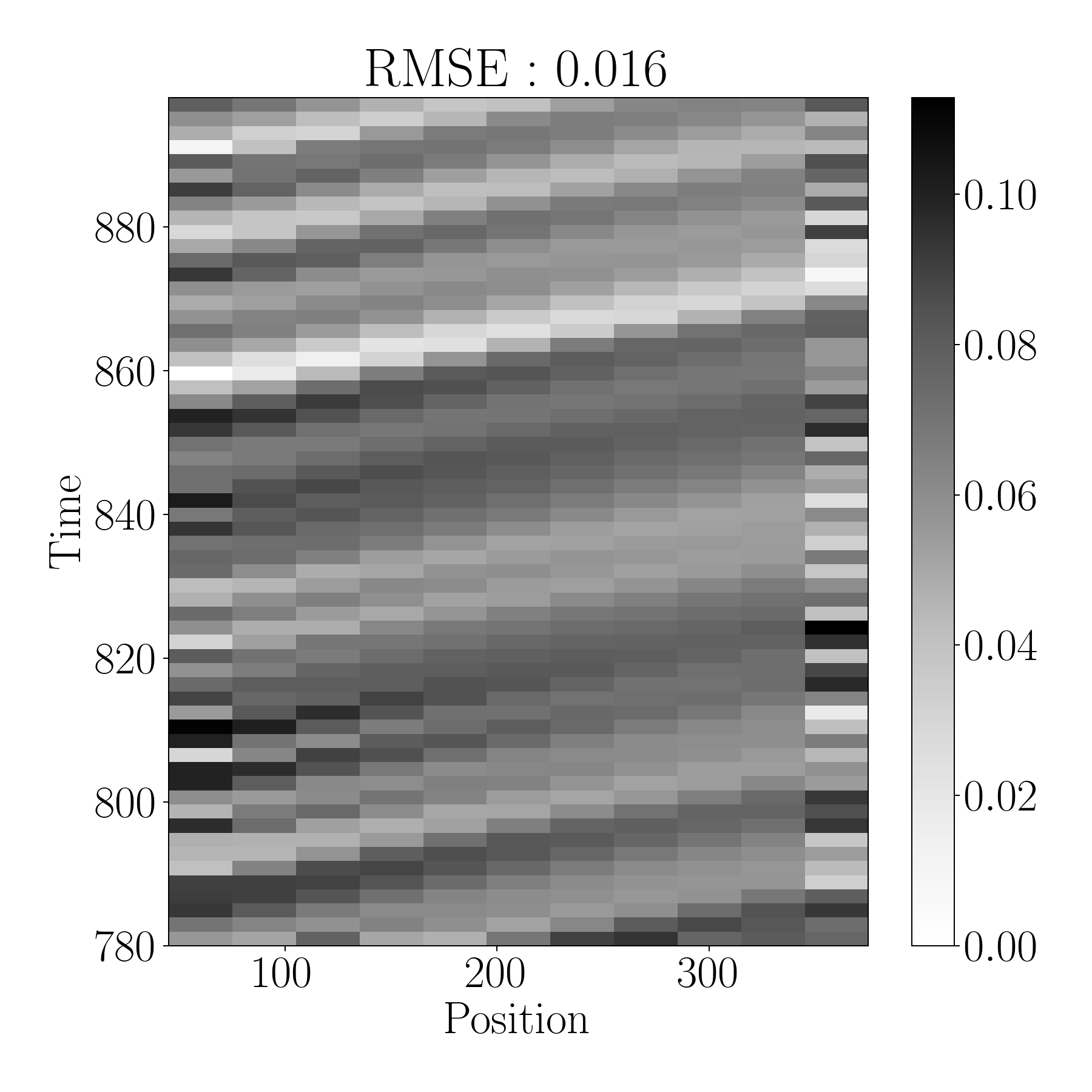}
\caption{Estimated densities for Dataset~1.}
\end{subfigure}
\end{minipage}%
\begin{minipage}{0.249\textwidth}
\begin{subfigure}[c]{\textwidth}
\includegraphics[width=\textwidth]{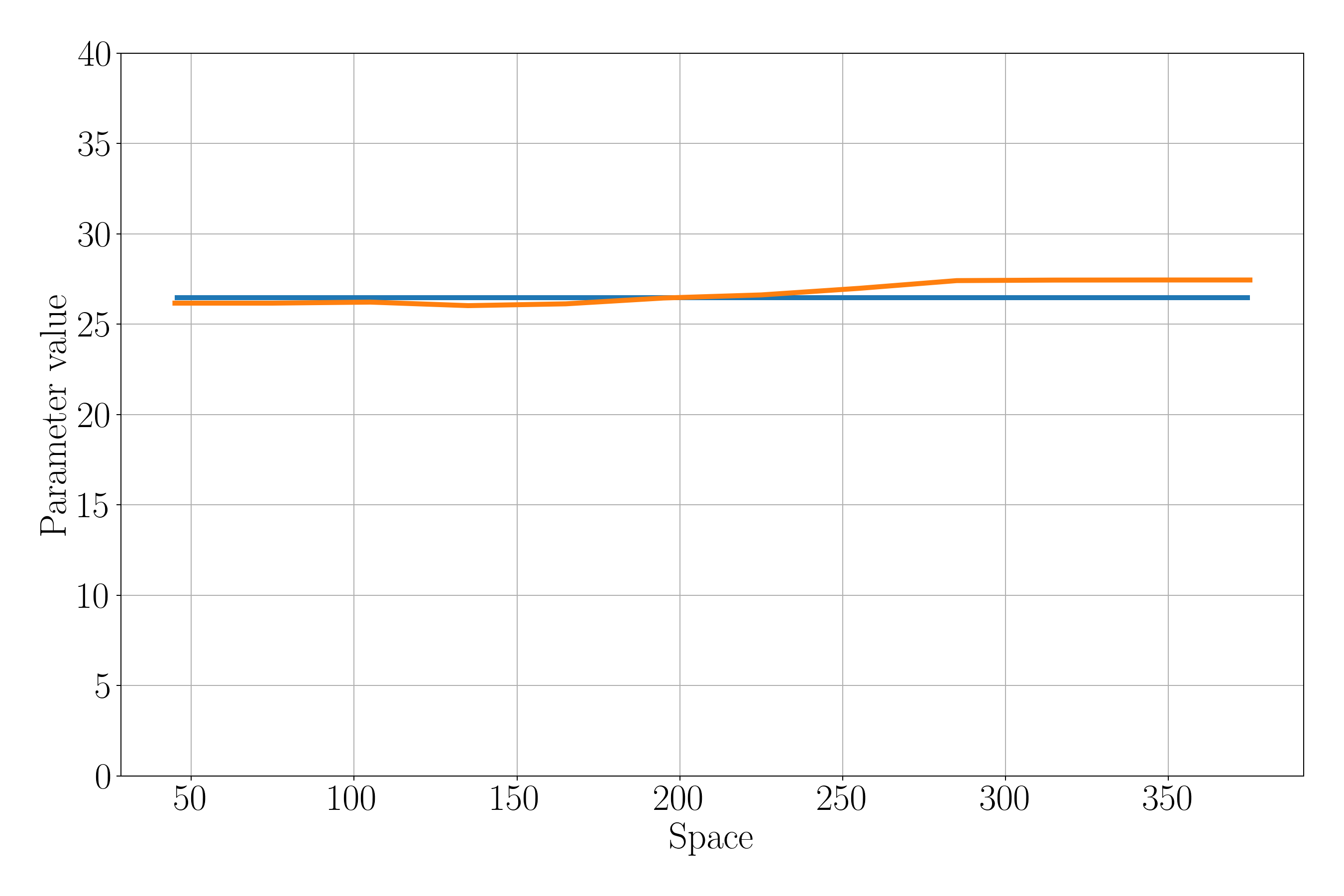}
\caption{Estimated parameters for Dataset~1.}
\end{subfigure}
\end{minipage}%
\hfill
\begin{minipage}{0.24\textwidth}
\begin{subfigure}[c]{\textwidth}
\includegraphics[width=\textwidth]{fig/Dmat_both.png}
\caption{Dataset~2.}
\end{subfigure}
\begin{subfigure}[c]{\textwidth}
\includegraphics[width=\textwidth]{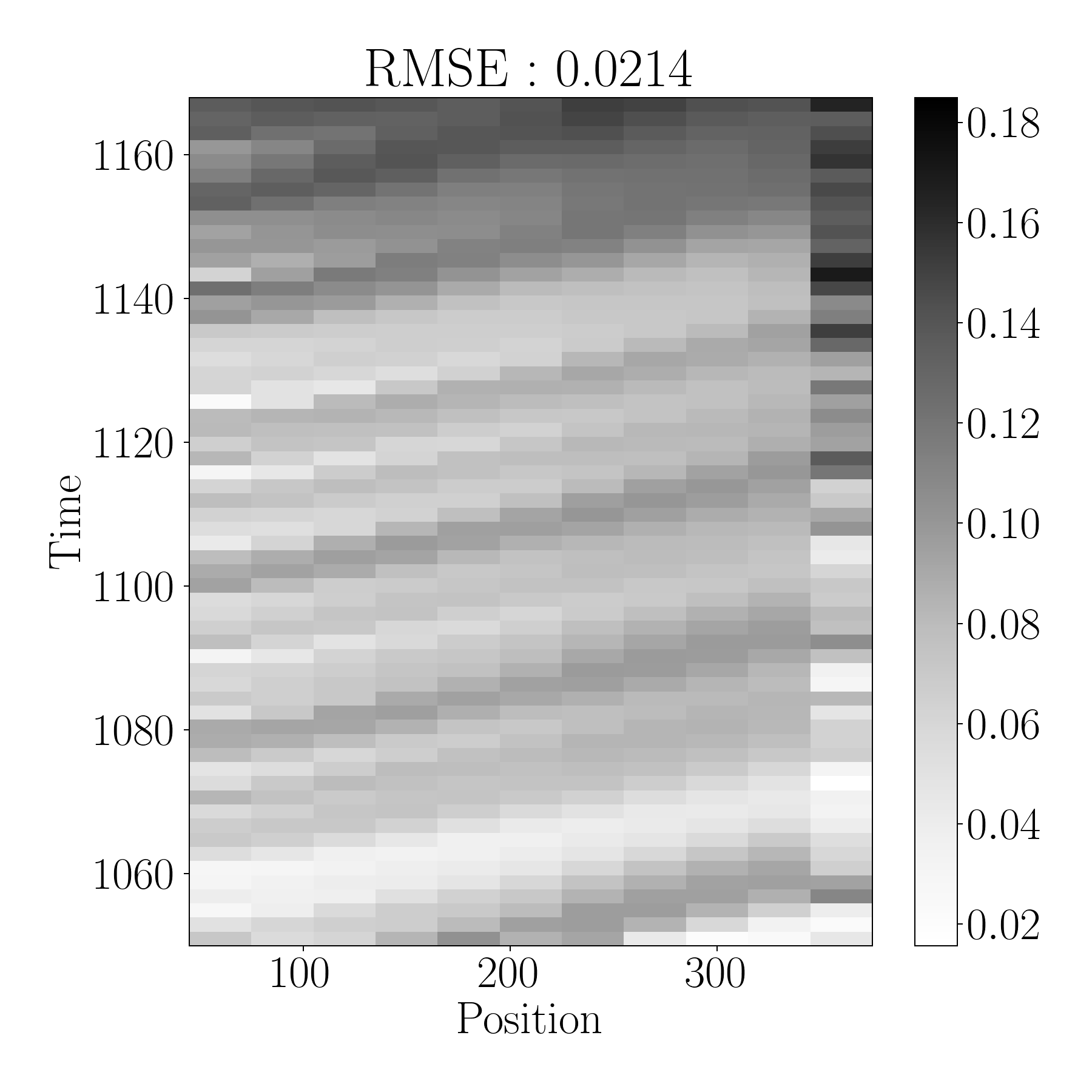}
\caption{Estimated densities for Dataset~2.}
\end{subfigure}
\end{minipage}%
\begin{minipage}{0.249\textwidth}
\begin{subfigure}[c]{\textwidth}
\includegraphics[width=\textwidth]{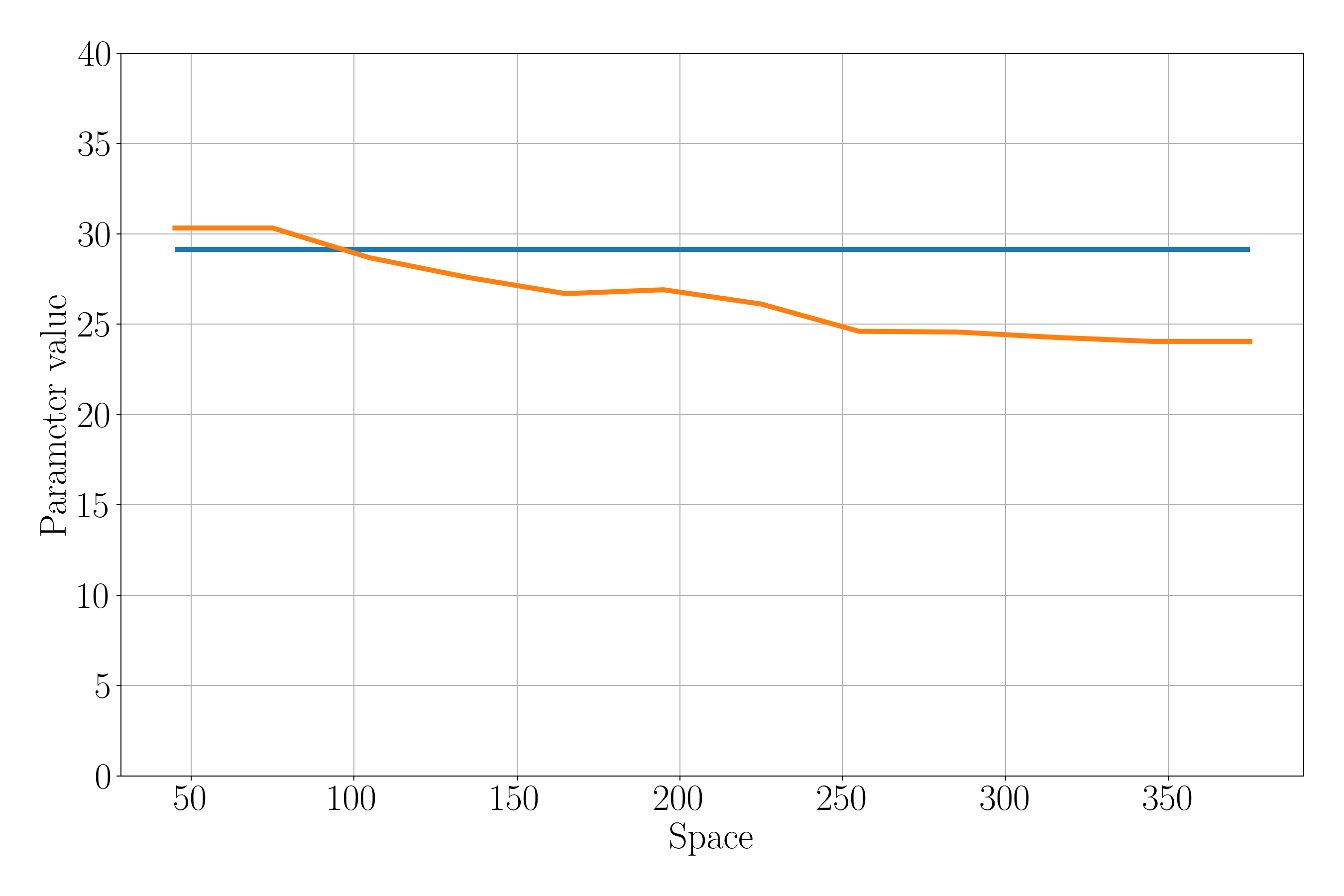}
\caption{Estimated parameters for Dataset~2.}
\end{subfigure}
\end{minipage}
\caption{Estimated densities and parameters for each dataset when space dependent parameters are considered. The plots (c) and (f) represent the evolution in space of the estimated parameter.The blue line represents the estimated parameter in the constant case.}
\label{fig:real_s}
\end{figure}

\begin{figure}
\begin{minipage}{0.24\textwidth}
\begin{subfigure}[c]{\textwidth}
\includegraphics[width=\textwidth]{fig/Dmat_Lin.png}
\caption{Dataset~1.}
\end{subfigure}
\begin{subfigure}[c]{\textwidth}
\includegraphics[width=\textwidth]{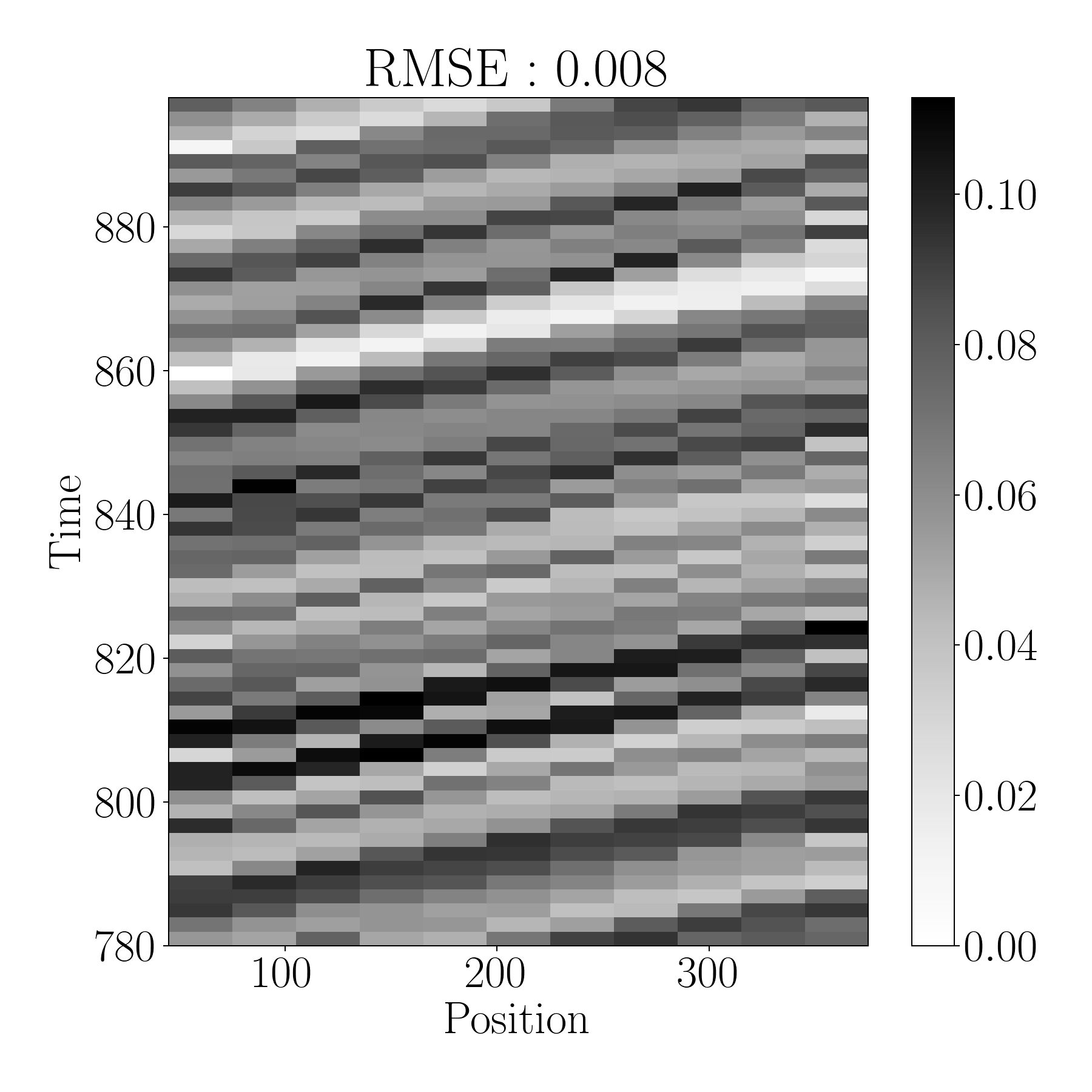}
\caption{Estimated densities for Dataset~1.}
\end{subfigure}
\end{minipage}%
\begin{minipage}{0.249\textwidth}
\begin{subfigure}[c]{\textwidth}
\centering
\foreach \n in {0,...,11}{
	\includegraphics[height=0.047\textheight]{fig/Dmat_Lin_flux_st_\n.png}%
	\vspace{-0.6ex}\\
}
\vspace{-1ex}
\includegraphics[width=\textwidth]{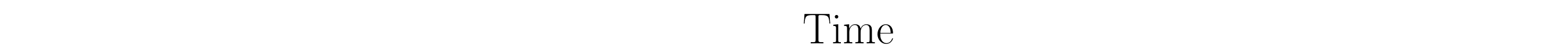}
\caption{Estimated parameters for Dataset~1.}
\end{subfigure}
\end{minipage}%
\hfill
\begin{minipage}{0.24\textwidth}
\begin{subfigure}[c]{\textwidth}
\includegraphics[width=\textwidth]{fig/Dmat_both.png}
\caption{Dataset~2.}
\end{subfigure}
\begin{subfigure}[c]{\textwidth}
\includegraphics[width=\textwidth]{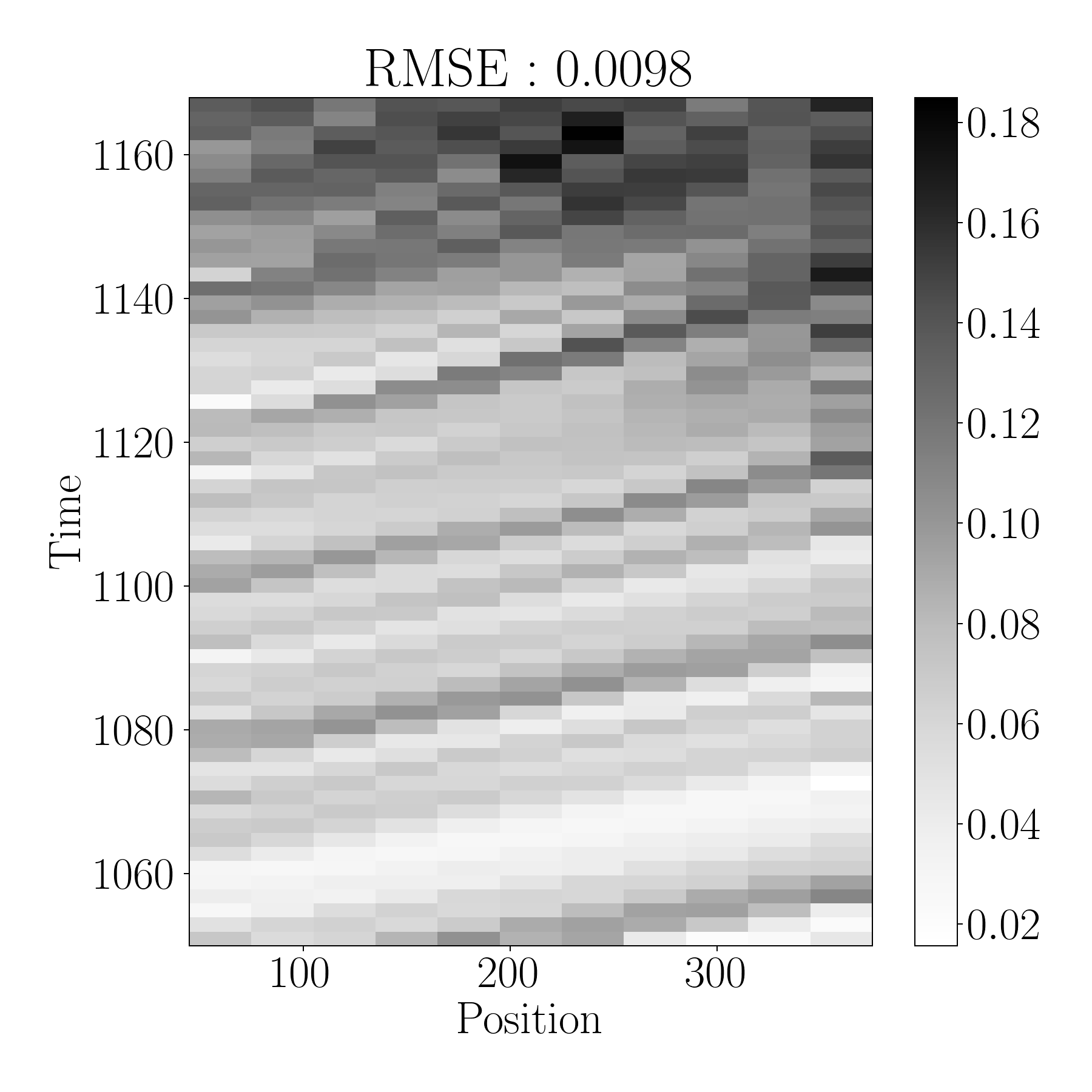}
\caption{Estimated densities for Dataset~2.}
\end{subfigure}
\end{minipage}%
\begin{minipage}{0.249\textwidth}
\begin{subfigure}[c]{\textwidth}
\centering
\foreach \n in {0,...,11}{
	\includegraphics[height=0.047\textheight]{fig/Dmat_both_flux_st_\n.png}%
	\vspace{-0.6ex}\\
}
\vspace{-1ex}
\includegraphics[width=\textwidth]{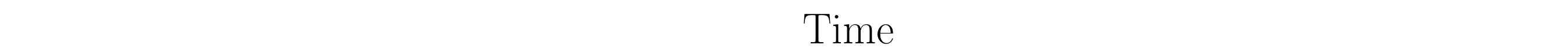}
\caption{Estimated parameters for Dataset~2.}
\end{subfigure}
\end{minipage}
\caption{Estimated densities and parameters for each dataset when space-time dependent parameters are considered. The plots (c) and (f) represent, for each location, the evolution through time of the estimated parameter. The blue line represents the estimated parameter in the constant case.}
\label{fig:real_st}
\end{figure}

Considering first the results for Dataset~1 (free flow conditions), we can see that the varying parameters estimated in each case stay close and vary around the value estimated under the assumption that the parameter is constant. This is coherent with the conclusions drawn in \Cref{sec:real_dat_ct}: in free flow conditions, the LWR model with a constant parameter  is an adequate choice of model. Comparing now the RMSE of the estimated densities, we see that the time-dependent and space-dependent parameters yield very similar values and density profiles as in the constant case (cf. \Cref{fig:real_ct_1}). However, a significant decrease of the RMSE is observed when a space-time-dependent parameter is considered. Hence, adding small perturbations of the parameters in space and time seems to yield more realistic density profiles and in particular the small scale variations of the density that are not observed in the constant case (due to the smoothness of the estimation). 

Considering now the results for Dataset~2 (free flow and congested conditions), we can see that the varying parameters estimated in each case do not stick around the value estimated in the constant case anymore. 

In the time-dependent case, we observe two regimes. The first regime spans until $t=1100\, \mathrm{s}$, and has the parameter varying around and close to the parameter estimated in the constant case, thus hinting at free flow conditions. The second regime starts at $t=1100\, \mathrm{s}$ and has the parameter displaying sharp variations between very low and very large values: such behavior can  be interpreted as the model trying to accommodate the congested conditions by intermittently  stopping or letting all the vehicles go in order to create congested cells. In the space-dependent case, the estimated densities globally decrease across space: this can be seen as an attempt from the model to create congestion by having cells with higher transfer rates upstream (which  will tend to let vehicles flow easily), and then gradually decreasing these rates as we go down the road, so that vehicles can accumulate downstream. In both these cases however, the resulting RMSE of the estimated densities is lower but still of the same order as the one from the constant case.

In the space-time-dependent case, the estimated parameters at all locations show the same trend: they start close to the value estimated in the constant case and after some time globally decrease with time. Besides, this drop in parameter value occurs at increasing times as we go from the right-most cell to the left-most cell. Hence, the model seems to account for congestion by gradually reducing the transfer rates between the compartments, going from right to left. In this case, the resulting RMSE of the estimated densities is significantly reduced compared to the constant case and the estimated densities display a realistic profile, which also recreates the congestion observed in the data.  

Note that, following the link established between the reaction rates and the parameters of the continuous traffic flow models, the gradual decrease of reaction rates observed for Dataset~2 in the time-dependent and space-time-dependent cases can be interpreted as a gradual drop in road capacity. This observation is corroborated by looking at the actual trajectories corresponding to this dataset and shown in \Cref{fig:traj_both}. Indeed, overtaking between vehicles can be observed from trajectory crossings. These overtakings mechanically decrease the overall capacity of the road as less lanes are free. As on can see, these overtaking happen more and more frequently as time passes, and start to appear downhill on the road. The same observations were made when looking at the space-time dependent reaction rates.

\begin{figure}
\centering
\includegraphics[width=0.8\textwidth]{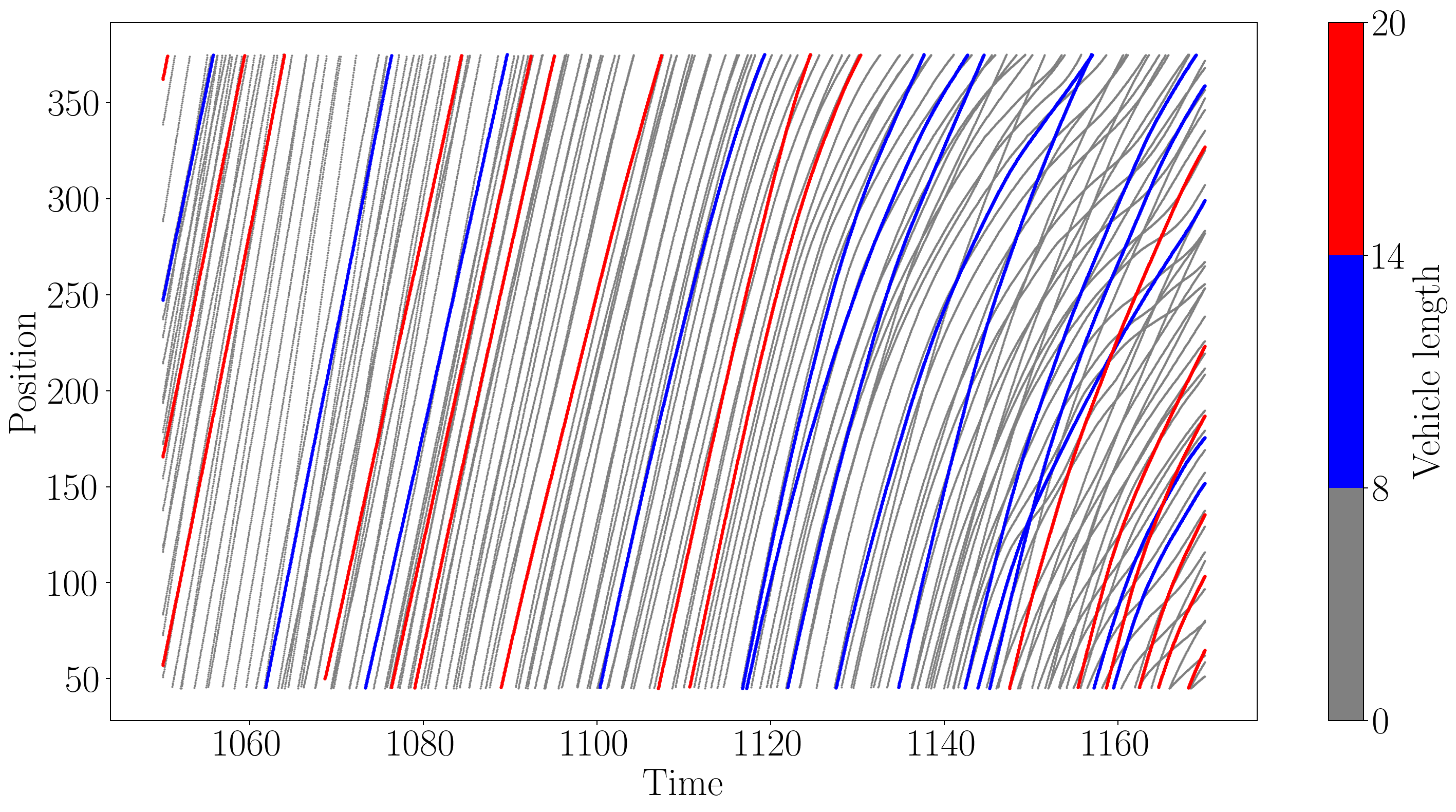}
\caption{Trajectories of Dataset~2. Each line corresponds to the trajectory of a given vehicle, the line color is linked to the length of the vehicle.\label{fig:traj_both}}
\end{figure}

Finally, we compare the three choices of parametrizations considered in this section in terms of their ability to recreate a fundamental diagram similar to the one associated with the density data. In particular, flux estimates can be derived from the density estimates by once again applying the quadratic flux-density relation~\eqref{eq:f}, but using now the varying parameter $v_m$: to compute the flux estimate of the $j$-th cell at time $t_n$, $v_m$ is taken as the average of the parameter estimates at both boundaries of the $j$-th cell, at time $t_n$. We obtain the fundamental diagrams shown in \Cref{fig:fd_var}. The time-dependent estimates show for both datasets a fundamental diagram which is more scattered than the one from the data, and yield higher RMSE than in the constant case. On the other end, the space-dependent estimates produce fundamental diagrams that are similar to the one obtained in the constant case, but with slightly more dispersion. In both cases however, the fundamental diagrams consist in superposition of quadratic functions and seem to fail to reproduce the scattering observed in Dataset~2 (and due to congested conditions). This goal is however achieved with the space-time-dependent estimates which yield fundamental diagrams that nicely overlap the ones from the data, and significantly lower RMSE compared to the constant case.

\begin{figure}
\centering
\begin{subfigure}{0.4\textwidth}
\includegraphics[width=\textwidth]{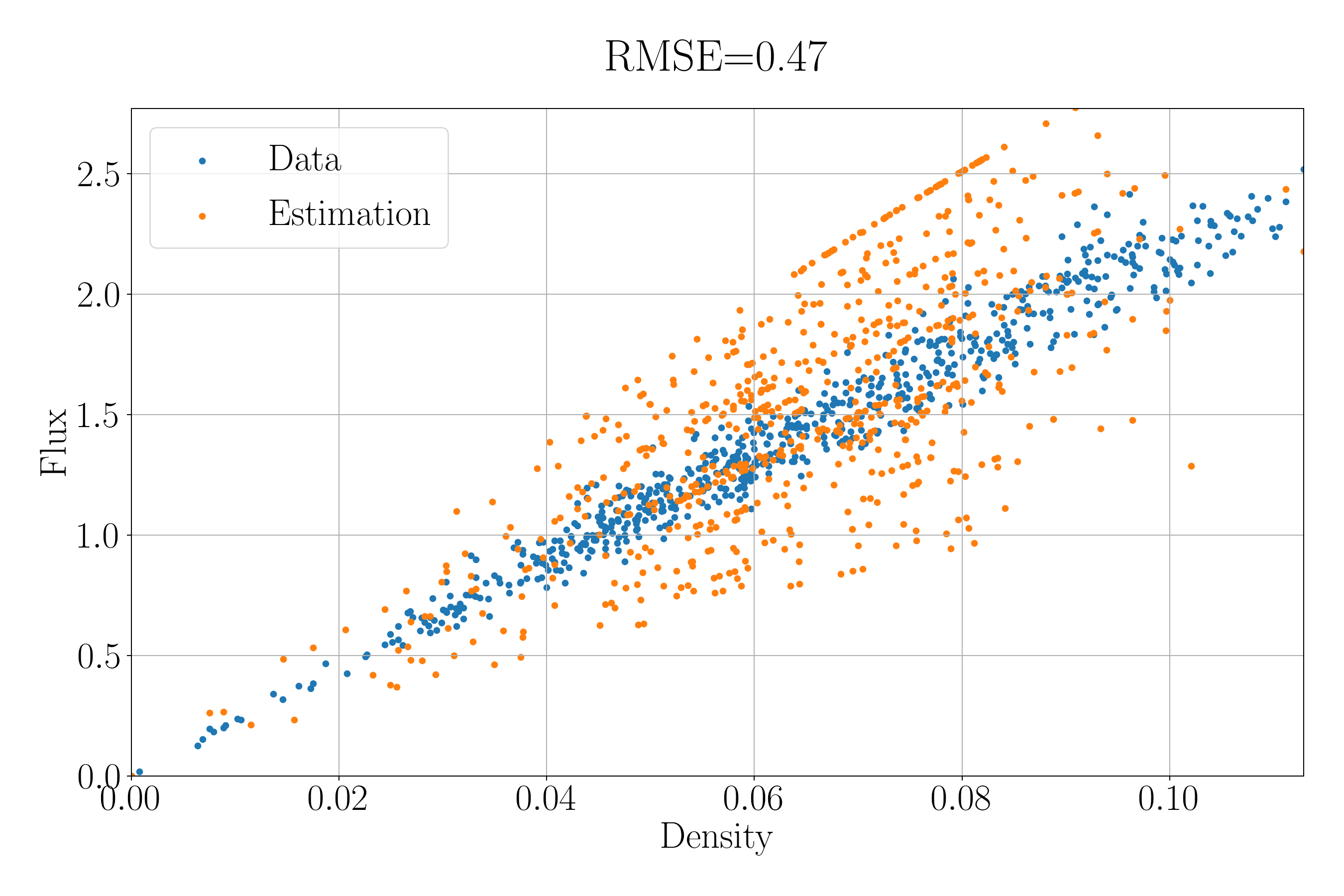}
\caption{Dataset~1: Time-dependent.}
\end{subfigure}%
\hspace{6ex}%
\begin{subfigure}{0.4\textwidth}
\includegraphics[width=\textwidth]{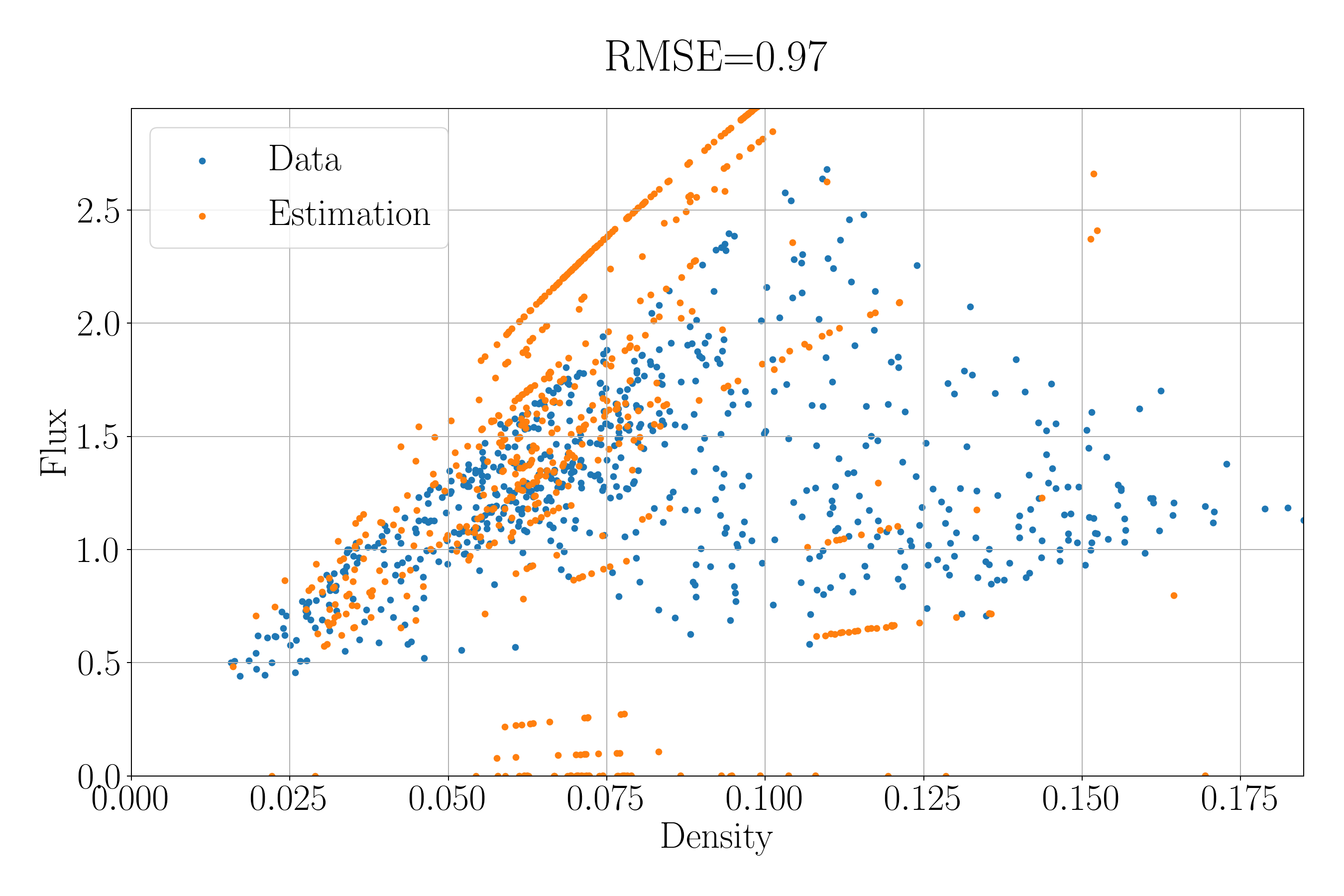}
\caption{Dataset~2: Time-dependent.}
\end{subfigure}%
 \\
\begin{subfigure}{0.4\textwidth}
\includegraphics[width=\textwidth]{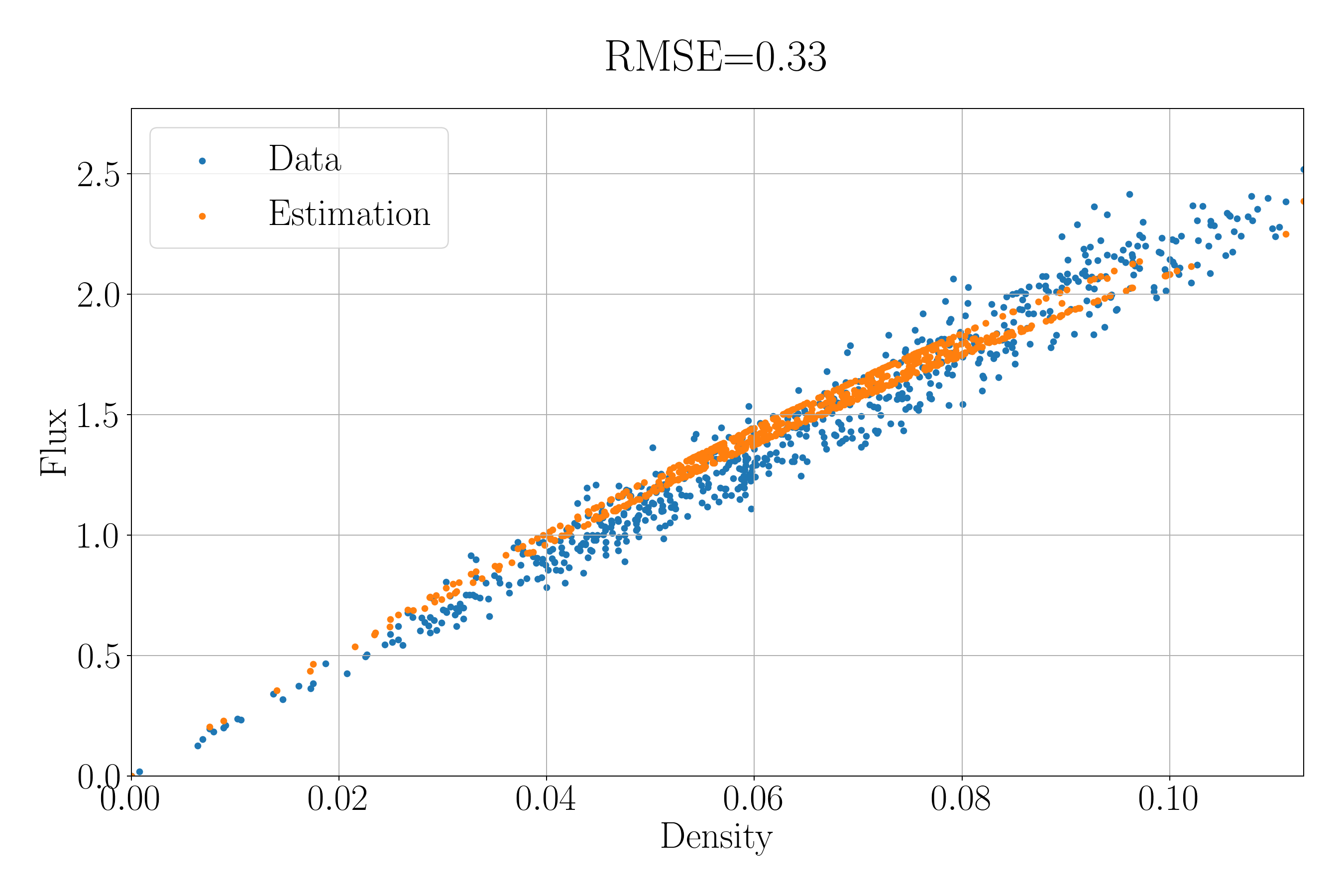}
\caption{Dataset~1: Space-dependent.}
\end{subfigure}%
\hspace{6ex}%
\begin{subfigure}{0.4\textwidth}
\includegraphics[width=\textwidth]{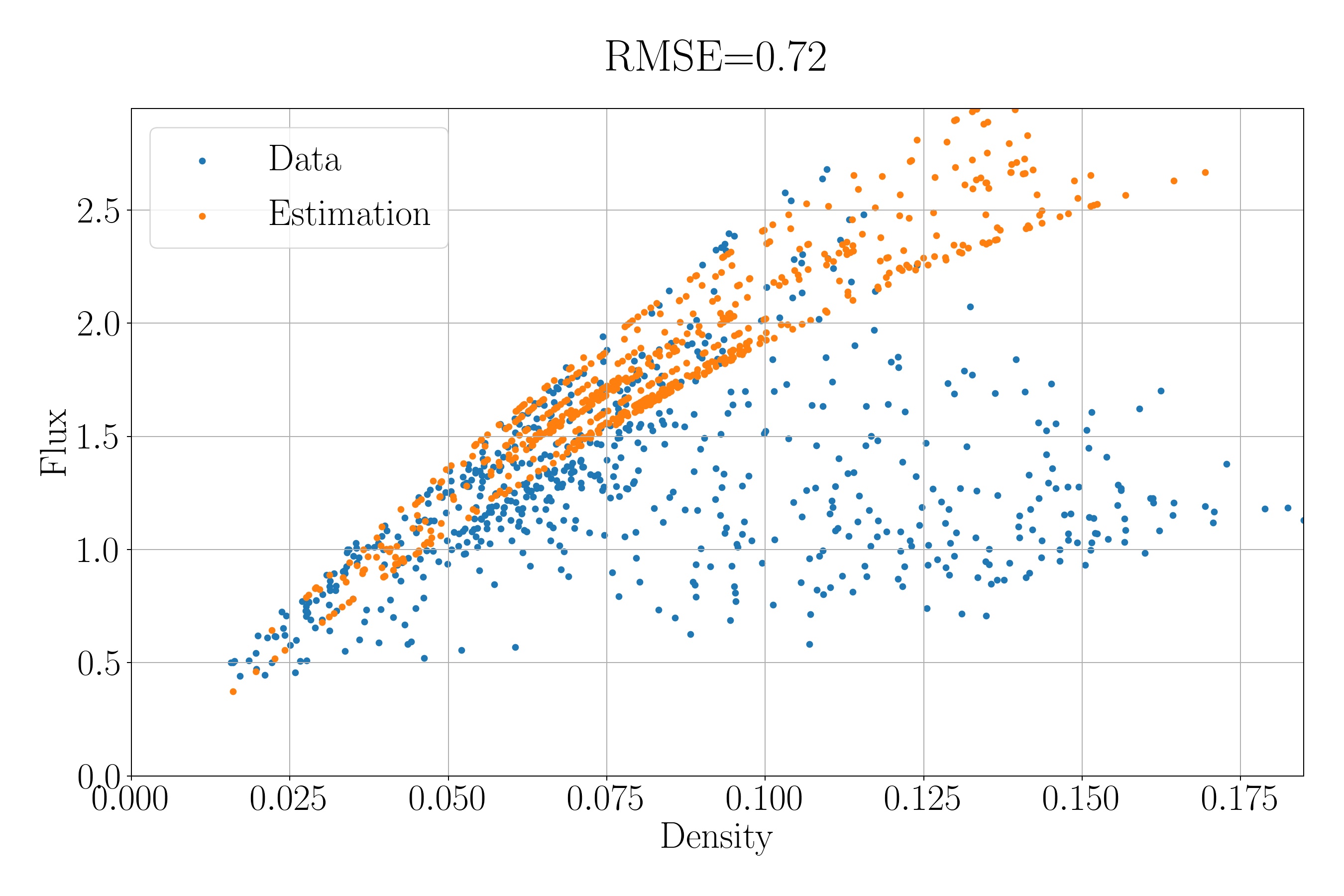}
\caption{Dataset~2: Space-dependent.}
\end{subfigure}%
\\
\begin{subfigure}{0.4\textwidth}
\includegraphics[width=\textwidth]{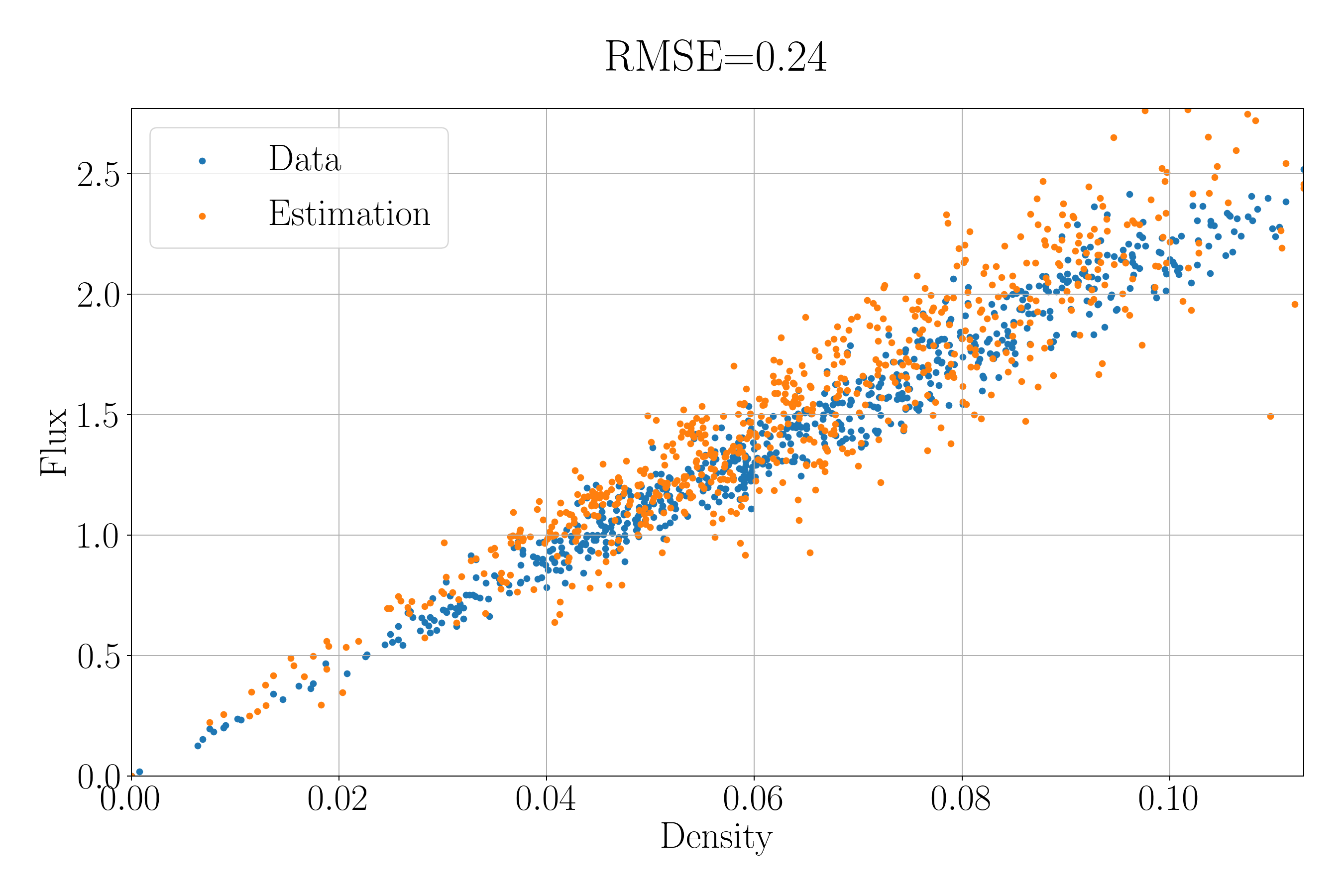}
\caption{Dataset~1: Space-time-dependent.}
\end{subfigure}%
\hspace{6ex}%
\begin{subfigure}{0.4\textwidth}
\includegraphics[width=\textwidth]{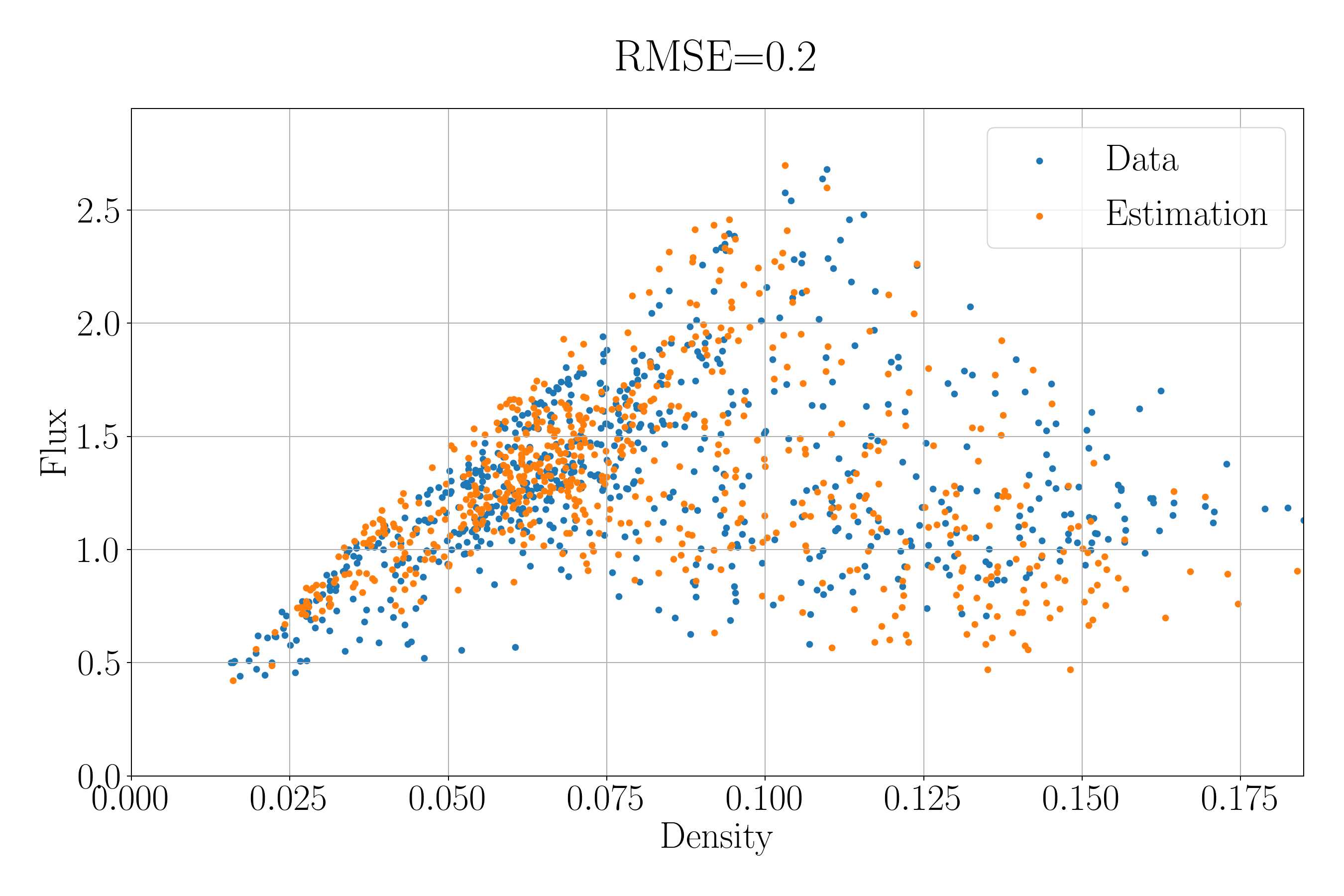}
\caption{Dataset~2: Space-time-dependent.}
\end{subfigure}%
\caption{Comparison between the fundamental diagram of the two datasets and their estimations through the TRM with time-dependent, space-dependent and space-time dependent parameters.}
\label{fig:fd_var}
\end{figure}

In conclusion, the recourse to space-time dependent parameters provides more flexibility to the LWR model in a physically sound manner, thus allowing it to recreate real-word density and fundamental diagrams:
\begin{itemize}
\item On the one hand, allowing the reaction rates between compartments in the TRM to vary in space and time locally creates conditions that give rise to congestion or sharp changes in the density.
\item On the other hand, realistic fundamental diagrams are obtained even though a quadratic relation between flux and density is assumed, by allowing the shape of the relation to change over space and time. Hence, the change of behavior in the fundamental diagram usually interpreted as a capacity drop now becomes a transfer rate drop. Besides, complex point patterns in the diagram can be recreated since in theory each point of the diagram belongs to its own quadratic function.
\end{itemize}


\section{Conclusion}

The main motivation of this work is to assess the validity of a LWR traffic flow model to model measurements obtained from trajectory data, and propose extensions of this model to improve it. We answer these questions by comparing continuous models and measurements using a discrete dynamical system defined from a particular discretization of the PDE of the continuous model. This discretization is formulated as a chemical reaction network where road cells are interpreted as compartments, the transfer of vehicles from one cell to the other is seen as a chemical reaction between adjacent compartment and the density of vehicles is seen as a concentration of reactant. Several degrees of flexibility on the parameters of this system, which basically consist of the reaction rates between the compartments, are considered: These rates are taken equal to the same constant value or allowed to depend on time and/or space. We then interpret generalized density measurements coming from trajectory data as observations of the states of the discrete dynamical system at consecutive times, and derive optimal reaction rates for the system by minimizing the discrepancy between the output of the system and the state measurements. 

The use of constant reaction rates proves to be enough to reproduce the patterns observed in the density and flux data in free flow conditions but not in mixed conditions where congestion appears. This motivates us to recommend the use of the more flexible models, and in particular the model with space-time dependent reaction rates. This last model proved to perform well both in free flow and mixed conditions as it mimicked the patterns observed in the density data as well as the fundamental diagrams.  
Recall that the discrete dynamical system can be seen as a particular finite volume discretization of the LWR model with the flux of vehicles depending quadratically on the density. The reaction rates of the system then simply set the shape of this relation (meaning here the maximal value of the flux function). Our numerical experiments hence showed that allowing the shape of this quadratic relation to vary through time, space or even better both, allowed the LWR model to better recreate specific patterns observed in real-world data, such as the appearance of congestion (compared to when a fixed shape is considered). 

Direct extensions of the approach presented in this paper are possible. On the one hand, working on networks would be straightforward since the proposed kinetic system can be generalized to this setting by simply dropping the assumptions that the compartments are ordered as chain (which makes sense for a single road)  and allowing them to be linked to more than 2 other compartments (thus mimicking the junctions of the network). On the other hand, the use of conventional detector data could be considered, since it would simply come down to the assumption that measurements are only available in some compartments (those where sensors are located), similarly as what was assumed in the robustness tests done in \Cref{sec:ident}. Finally, a link between the proposed discrete dynamical system and artificial neural networks (ANN) was not
exploited in this paper but paves the way to exciting outlooks


\bibliographystyle{abbrvnat}
\bibliography{biblio_clean}

\begin{thebibliography}{25}
\providecommand{\natexlab}[1]{#1}
\providecommand{\url}[1]{\texttt{#1}}
\expandafter\ifx\csname urlstyle\endcsname\relax
  \providecommand{\doi}[1]{doi: #1}\else
  \providecommand{\doi}{doi: \begingroup \urlstyle{rm}\Url}\fi

\bibitem[Barth et~al.(2018)Barth, Herbin, and Ohlberger]{barth2018finite}
T.~Barth, R.~Herbin, and M.~Ohlberger.
\newblock Finite volume methods: Foundation and analysis.
\newblock \emph{Encyclopedia of Computational Mechanics Second Edition}, pages
  1--60, 2018.

\bibitem[Chainais-Hillairet and Champier(2001)]{chainais2001finite}
C.~Chainais-Hillairet and S.~Champier.
\newblock Finite volume schemes for nonhomogeneous scalar conservation laws:
  error estimate.
\newblock \emph{Numerische Mathematik}, 88\penalty0 (4):\penalty0 607--639,
  2001.

\bibitem[Chen and Karlsen(2005)]{chen2005quasilinear}
G.-Q. Chen and K.~H. Karlsen.
\newblock Quasilinear anisotropic degenerate parabolic equations with
  time-space dependent diffusion coefficients.
\newblock \emph{Communications on Pure \& Applied Analysis}, 4\penalty0
  (2):\penalty0 241--266, 2005.

\bibitem[Daganzo(1994)]{ctm_daganzo}
C.~Daganzo.
\newblock The cell transmission model: A dynamic representation of highway
  traffic consistent with the hydrodynamic theory.
\newblock \emph{Transportation Research Part B: Methodological}, 28\penalty0
  (4):\penalty0 269--287, August 1994.

\bibitem[Delle~Monache et~al.(2017)Delle~Monache, Piccoli, and
  Rossi]{delle2017traffic}
M.~L. Delle~Monache, B.~Piccoli, and F.~Rossi.
\newblock Traffic regulation via controlled speed limit.
\newblock \emph{SIAM Journal on Control and Optimization}, 55\penalty0
  (5):\penalty0 2936--2958, 2017.

\bibitem[Edie(1963)]{edie1963discussion}
L.~C. Edie.
\newblock Discussion of traffic stream measurements and definitions.
\newblock In \emph{Proceedings of the Second International Symposium on the
  Theory of Traffic Flow, London}, pages 139--154. Port of New York Authority,
  1963.

\bibitem[Eymard et~al.(2000)Eymard, Gallou{\"e}t, and Herbin]{eymard2000finite}
R.~Eymard, T.~Gallou{\"e}t, and R.~Herbin.
\newblock Finite volume methods.
\newblock \emph{Handbook of Numerical Analysis}, 7:\penalty0 713--1018, 2000.

\bibitem[Fan and Seibold(2013)]{fan2013data}
S.~Fan and B.~Seibold.
\newblock Data-fitted first-order traffic models and their second-order
  generalizations: Comparison by trajectory and sensor data.
\newblock \emph{Transportation Research Record}, 2391\penalty0 (1):\penalty0
  32--43, 2013.

\bibitem[Feinberg(2019)]{Feinberg2019}
M.~Feinberg.
\newblock \emph{Foundations of Chemical Reaction Network theory}.
\newblock Springer, 2019.

\bibitem[Garavello et~al.(2016)Garavello, Han, and Piccoli]{Piccoli2006}
M.~Garavello, K.~Han, and B.~Piccoli.
\newblock \emph{Models for Vehicular Traffic on Networks}.
\newblock American Institute of Mathematical Sciences (AIMS, 2016.

\bibitem[Goatin et~al.(2016)Goatin, G{\"o}ttlich, and Kolb]{goatin2016speed}
P.~Goatin, S.~G{\"o}ttlich, and O.~Kolb.
\newblock Speed limit and ramp meter control for traffic flow networks.
\newblock \emph{Engineering Optimization}, 48\penalty0 (7):\penalty0
  1121--1144, 2016.

\bibitem[Karafyllis and Papageorgiou(2019)]{karafyllis2019feedback}
I.~Karafyllis and M.~Papageorgiou.
\newblock Feedback control of scalar conservation laws with application to
  density control in freeways by means of variable speed limits.
\newblock \emph{Automatica}, 105:\penalty0 228--236, 2019.

\bibitem[Karlsen and Towers(2004)]{karlsen2004convergence}
K.~H. Karlsen and J.~D. Towers.
\newblock Convergence of the {L}ax--{F}riedrichs scheme and stability for
  conservation laws with a discontinuous space-time dependent flux.
\newblock \emph{Chinese Annals of Mathematics}, 25\penalty0 (03):\penalty0
  287--318, 2004.

\bibitem[Kessel(2019)]{Book_Kessel}
F.~Kessel.
\newblock \emph{Traffic flow modeling}.
\newblock Springer, 2019.

\bibitem[Krajewski et~al.(2018)Krajewski, Bock, Kloeker, and
  Eckstein]{highDdataset}
R.~Krajewski, J.~Bock, L.~Kloeker, and L.~Eckstein.
\newblock The {highD} dataset: A drone dataset of naturalistic vehicle
  trajectories on {G}erman highways for validation of highly automated driving
  systems.
\newblock In \emph{2018 21st International Conference on Intelligent
  Transportation Systems (ITSC)}, pages 2118--2125, 2018.
\newblock \doi{10.1109/ITSC.2018.8569552}.

\bibitem[Leduc et~al.(2008)]{leduc2008road}
G.~Leduc et~al.
\newblock Road traffic data: Collection methods and applications.
\newblock \emph{Working Papers on Energy, Transport and Climate Change},
  1\penalty0 (55):\penalty0 1--55, 2008.

\bibitem[LeVeque(2002)]{leveque2002finite}
R.~J. LeVeque.
\newblock \emph{Finite Volume Methods for Hyperbolic Problems}, volume~31.
\newblock Cambridge University Press, 2002.

\bibitem[Lighthill and Whitham(1955)]{lighthill1955kinematic}
M.~J. Lighthill and G.~B. Whitham.
\newblock On kinematic waves {II}. {A} theory of traffic flow on long crowded
  roads.
\newblock \emph{Proceedings of the Royal Society of London. Series A.
  Mathematical and Physical Sciences}, 229\penalty0 (1178):\penalty0 317--345,
  1955.

\bibitem[Lipták et~al.(2021)Lipták, Pereira, Kulcsár, Kovács, and
  Szederkényi]{liptak2020traffic}
G.~Lipták, M.~Pereira, B.~Kulcsár, M.~Kovács, and G.~Szederkényi.
\newblock Traffic reaction model.
\newblock \emph{arXiv:2101.10190}, 2021.

\bibitem[Lu and Skabardonis(2007)]{lu2007freeway}
X.-Y. Lu and A.~Skabardonis.
\newblock Freeway traffic shockwave analysis: exploring the {NGSIM} trajectory
  data.
\newblock In \emph{86th Annual Meeting of the Transportation Research Board,
  Washington, DC}, 2007.

\bibitem[Nocedal and Wright(2006)]{nocedal2006numerical}
J.~Nocedal and S.~Wright.
\newblock \emph{Numerical Optimization}.
\newblock Springer, 2006.

\bibitem[Piccoli and Rascle(2013)]{Ambrosio2009}
B.~Piccoli and M.~Rascle.
\newblock \emph{Modeling and Optimization of Flows on Network}, volume 2062 of
  \emph{Lecture notes in Mathematics}.
\newblock Springer, 2013.

\bibitem[Richards(1956)]{richards}
P.~Richards.
\newblock Shock waves on the highway.
\newblock \emph{Operations Research}, 4\penalty0 (1):\penalty0 42--51, 1956.

\bibitem[Ruder(2016)]{ruder2016overview}
S.~Ruder.
\newblock An overview of gradient descent optimization algorithms.
\newblock \emph{arXiv:1609.04747}, 2016.

\bibitem[Treiber et~al.(2013)Treiber, Kesting, and Thiemann]{Book_Treiber2013}
M.~Treiber, A.~Kesting, and A.~Thiemann.
\newblock \emph{Traffic Flow Dynamics: Data, Models and Simulation}.
\newblock Springer, 2013.

\end{thebibliography}
\addcontentsline{toc}{section}{References}

\appendix

\begin{center}
\begin{LARGE}
\textbf{APPENDIX}

\end{LARGE}
\end{center}

\section{Some examples of finite volume schemes}
\label{sec:fvs}

The Lax--Friedrich (LxF) scheme is defined for the choice of numerical flux $F=F_L$ with
\begin{equation*}
F_L(u,v \pv v_m)=\frac{f(u\pv v_m)+f(v\pv v_m)}{2}+\frac{\Delta x}{2\Delta t}(u-v) \peq
\end{equation*}
The Godunov scheme is given by the choice $F=F_G$ with 
\begin{equation*}
F_G(u,v \pv v_m)=\begin{cases}
\min\limits_{w\in[u,v]} f(w\pv v_m) & \text{if } u\le v \\
\max\limits_{w\in[v,u]} f(w\pv v_m) & \text{if } v\le u
\end{cases} \peq
\end{equation*}

In the particular case where $f$ is defined by~\eqref{eq:f}, note that the recurrence relation of the Godunov scheme can be rewritten as 
\begin{equation}
\widehat{U}_j^{i+1}=\widehat{U}_{j}^i+\frac{\Delta t}{\Delta x}v_m\left[\tilde F_G(\widehat{U}_{j-1}^i, \widehat{U}_j^i)-\tilde F_G(\widehat{U}_{j}^i, \widehat{U}_{j+1}^i) \right], \quad j\in\mathbb{Z}, i\in\mathbb{N}
\veq
\label{eq:rec_fv_god}
\end{equation}
where $\tilde{F}_G$ is a normalized numerical flux (in the sense that it does not depend on the parameter $v_m$ anymore) given by
\begin{equation*}
\tilde F_G(u,v)=
\begin{cases}
\min\limits_{w\in[u,v]} w(1-w) & \text{if } u\le v \\
\max\limits_{w\in[v,u]} w(1-w) & \text{if } v\le u
\end{cases} \peq
\end{equation*}
Similarly, for the Lax--Friedrichs scheme, we can write
\begin{equation}
\widehat{U}_j^{i+1}=\frac{\widehat{U}_{j-1}^i+\widehat{U}_{j+1}^i}{2}+\frac{\Delta t}{\Delta x}v_m\left[\tilde F_L(\widehat{U}_{j-1}^i, \widehat{U}_j^i)-\tilde F_L(\widehat{U}_{j}^i, \widehat{U}_{j+1}^i ) \right], \quad j\in\mathbb{Z}, i\in\mathbb{N}
\label{eq:rec_fv_lxf}
\end{equation}
for the normalized numerical flux $\tilde F_L$  defined by
\begin{equation*}
\tilde F_L(u,v )= \frac{u(1-u)+v(1-v)}{2} \peq
\end{equation*}

\section{Minimization problems in the multilevel approach}
\label{sec:adpt_pbm}

On the one hand, in the constant parameter case, the minimization problem can be reformulated as 
\begin{equation*}
\theta^*=\argmin_{\theta \in\R} L_{[P_t,P_x]}(\theta) \veq
\end{equation*}
where the cost function $L_{[P_t,P_x]}$ is now defined by 
\begin{equation}
L_{[P_t,P_x]}(\theta)=\frac{1}{2}\sum_{i=1}^{N_t-1}\sum_{j=1}^{N_x-2} \left( \frac{1}{P_x}\sum_{k=0}^{P_x-1}\widehat{U}_{k+jP_x}^{iP_t}(C(\theta)) - U_j^i\right)^2, \quad \theta \in\R \veq
\label{eq:costf_b}
\end{equation}
with the same mapping $C : \R \rightarrow (0,1/2)$ defined by~\eqref{eq:corresp}. The optimal value $v_m^*$ of the parameter of PDE~\eqref{eq:pde_u} is obtained by
\begin{equation*}
v_m^*=\frac{\widehat{\Delta x}}{\widehat{\Delta t}}C(\theta^*)=\frac{P_t}{P_x}\frac{{\Delta x}}{{\Delta t}}C(\theta^*) \peq
\end{equation*}

On the other hand, in the varying parameter case, we adopt the following changes:
\begin{itemize}
\item the boundary and initial conditions are set in the same way;
\item the recurrence relation of the scheme, now defined on the subdivided grid, takes the form
\begin{equation*}
\widehat{U}_k^{m+1} =\widehat{U}_k^m+  \widehat C_k^m\widehat{U}_{k-1}^m\left(1-\widehat{U}_k^m\right)- \widehat C_{k+1}^m\widehat{U}_{k}^m\left(1-\widehat{U}_{k+1}^m\right), \quad k\in\bi P_x,\; (N_x-1)P_x-1\ei, \quad m \in \bi 0,\; P_t(N_t-1)\ei \veq
\end{equation*}
where the coefficients $\lbrace \widehat C_k^m : k\in \bi P_x, (N_x-1)P_x\ei, m \in\bi 0, P_t(N_t-1)\ei\rbrace$ are defined through a bilinear interpolation (in space and time) of the coefficients  $\bm C=\lbrace C_j^n : j\in \bi 0, N_x\ei, n \in\bi 0, N_t-1\ei\rbrace$ defined in the case where no subdivision is introduced. In particular, for $j\in \bi 0, N_x-1\ei$, $n \in\bi 0, N_t-2\ei$, we have:
\begin{equation*}
\widehat C_{q+jP_x}^{l+n P_t}=\begin{pmatrix}
1-l/P_t \\
l/P_t
\end{pmatrix}^T
\begin{pmatrix}
C_j^n & C_{j+1}^n \\
C_j^{n+1} & C_{j+1}^{n+1}
\end{pmatrix}
\begin{pmatrix}
1-q/P_x \\
q/P_x
\end{pmatrix}, \quad q\in\bi 0, P_x-1\ei, \quad l\in\bi 0, P_t\ei  \; ;
\end{equation*}
\item the coefficients $\bm C=\lbrace C_j^n : j\in \bi 0, N_x\ei, n \in\bi 0, N_t-1\ei\rbrace$ are determined by minimizing (without constraints)   a cost function $L_{[P_t,P_x]}$ given as the sum of a least-square cost and a regularization term $R(\bm C)$:
\begin{equation*}
L_{[P_t,P_x]}(\bm \theta)= 
\frac{1}{2}\sum_{i=1}^{N_t-1}\sum_{j_c\in I_c} \left( \frac{1}{P_x}\sum_{k=0}^{P_x-1}\widehat{U}_{k+j_cP_x}^{iP_t}(\bm C(\bm\theta)) - U_{j_c}^i\right)^2 
+\lambda R(\bm C(\bm\theta)), \quad \bm \theta \in \R^{(N_x+1)N_t} \veq
\end{equation*}
where  $\bm C(\bm\theta)\in (0,1/2)^{(N_x+1)N_t}$ is obtained by applying \eqref{eq:corresp} to each entry of $\bm \theta \in \R^{(N_x+1)N_t}$, $I_c$ denotes the set of observed columns in the density matrix (excluding the boundary columns) and $\lambda>0$ is a hyperparameter weighting the least-square minimization of the regularization. 
\end{itemize}

In both cases, the minimization can once gain be tackled using the conjugate gradient algorithm, since the same rules can be applied to derive an explicit expression of the gradient of the cost function (cf. \Cref{sec:grad_comp}).

\section{Gradient computation for cost minimization}
\label{sec:grad_comp}

Let $L$ denote the least-square cost function defined by~\eqref{eq:costf}.
Starting then from the recurrence relation~\eqref{eq:rec}, we see that any $i\in\bi 1, N_t\ei$, $j\in\bi 1, N_x-2\ei$, the finite volume approximation $\widehat{U}_j^i(C)$ can be expressed as a composition of the functions $\mathcal{H}_l^k$, for $k<i$ and $l\in \bi 1, N_x-2\ei$. Assuming that the functions $h$ and $\tilde{F}$ are smooth with respect to their arguments, the gradient $\nabla L$ of $L$ with respect to $\theta$ can actually be computed using the chain rule of derivation, giving the expression given in the next proposition.
\begin{prop}
Let $L$ be the cost function defined in~\eqref{eq:costf} and assume that the mappings $\mathcal{H}^i$, $i\ge 0$, defined through~\eqref{eq:transfo} and~\eqref{eq:rec}  are smooth with respect to their arguments. Then, the gradient of $L$ is given by
\begin{equation}
\frac{\partial L}{\partial \theta}(\theta)
=\frac{\partial C}{\partial \theta}(\theta)\cdot\sum_{i=1}^{N_t-1} \J_{\widehat{\bm U}^i}( C(\theta))^T \left(\widehat{\bm U}^i(C(\theta))-\bm U^i \right), \quad \theta \in\R \veq
\label{eq:gradl}
\end{equation}
where 
\begin{equation*}
\frac{\partial C}{\partial \theta}(\theta)=\frac{1}{2}\lgt(\theta)(1-\lgt(\theta)), \quad \theta \in\R \veq
\end{equation*}
and for $i\in\bi 1, N_t-1\ei$, and $ C\in (0,1/2)$, $\J_{\widehat{\bm U}^i}( C)\in\R^{N_x\times N_C}$ denotes the Jacobian matrix of the mapping $ C \mapsto \widehat{\bm U}^i(C)$, which can be computed through the recurrence relation
\begin{equation}
\left\lbrace\begin{aligned}
&\J_{\widehat{\bm U}^0}(C) = \bm 0 \\
&\J_{\widehat{\bm U}^{i+1}}(C)=
 \J_{\mathcal{H}^{i}}(\widehat{\bm U}^i)\cdot\J_{\widehat{\bm U}^i}(C)
 + \J_{\mathcal{H}^{i}}(C)
 , \quad i\ge 0
\end{aligned}\right.
\label{eq:rec_jac}
\end{equation}
with $\J_{\mathcal{H}^{i}}(\widehat{\bm U}^i)\in\R^{N_x\times N_x}$  being the Jacobian matrix of the mapping $\widehat{\bm U}^i \mapsto \mathcal{H}^i(\widehat{\bm U}^i, C \pv \bm U)$ and $ \J_{\mathcal{H}^{i}}(C)\in\R^{N_x\times N_C}$ being the Jacobian matrix of the mapping  $C\mapsto \mathcal{H}^i(\widehat{\bm U}^i, C \pv \bm U)$.
\label{prop:grad_expr}
\end{prop}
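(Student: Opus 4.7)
The plan is to prove Proposition \ref{prop:grad_expr} by a straightforward but careful application of the multivariate chain rule. First I would write $L(\theta)$ as the composition $\theta \mapsto C(\theta) \mapsto (\widehat{\bm U}^1(C), \dots, \widehat{\bm U}^{N_t-1}(C)) \mapsto L$. Since $L$ only depends on $\theta$ through $C(\theta)$, the chain rule immediately gives $\partial L/\partial\theta = (\partial C/\partial\theta)\cdot(\partial L/\partial C)$, so the task decomposes into (i) computing the scalar factor $\partial C/\partial\theta$ and (ii) computing the vector $\partial L/\partial C \in \R^{N_C}$.

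For (i), from $C(\theta) = \tfrac{1}{2}\lgt(\theta)$ and the standard identity $\lgt'(\theta) = \lgt(\theta)(1-\lgt(\theta))$, the required formula $\partial C/\partial\theta = \tfrac{1}{2}\lgt(\theta)(1-\lgt(\theta))$ follows at once. For (ii), differentiating the quadratic cost~\eqref{eq:costf} term by term yields
\begin{equation*}
\frac{\partial L}{\partial C}(\theta) = \sum_{i=1}^{N_t-1}\sum_{j=1}^{N_x-2}\bigl(\widehat{U}_j^i(C)-U_j^i\bigr)\,\frac{\partial \widehat{U}_j^i}{\partial C}(C).
\end{equation*}
Since $\widehat{U}_0^i$ and $\widehat{U}_{N_x-1}^i$ are pinned to the data by the boundary conditions~\eqref{eq:bc}, the corresponding rows of $\J_{\widehat{\bm U}^i}(C)$ vanish and the inner double sum can be rewritten without loss as the full matrix-vector product $\J_{\widehat{\bm U}^i}(C)^{T}(\widehat{\bm U}^i(C)-\bm U^i)$. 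Summing over $i$ and multiplying by $\partial C/\partial\theta$ delivers~\eqref{eq:gradl}.

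It remains to justify the recurrence~\eqref{eq:rec_jac} for the Jacobians. The base case is immediate: the initial condition~\eqref{eq:ic} fixes $\widehat{\bm U}^0 = \bm U^0$ independently of $C$, hence $\J_{\widehat{\bm U}^0}(C)=\bm 0$. For the inductive step, I would differentiate the identity $\widehat{\bm U}^{i+1}(C) = \mathcal{H}^i(\widehat{\bm U}^i(C),C;\bm U)$ with respect to $C$. Viewing $\mathcal{H}^i$ as a function of two independent arguments, the multivariate chain rule gives exactly
\begin{equation*}
\J_{\widehat{\bm U}^{i+1}}(C) = \J_{\mathcal{H}^i}(\widehat{\bm U}^i)\cdot \J_{\widehat{\bm U}^i}(C) + \J_{\mathcal{H}^i}(C),
\end{equation*}
where the first term captures the indirect dependence of $\widehat{\bm U}^{i+1}$ on $C$ through $\widehat{\bm U}^i$, and the second the explicit dependence through the numerical flux in~\eqref{eq:transfo}. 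The smoothness assumption on $\mathcal{H}^i$ ensures that both partial Jacobians exist, which is exactly what is needed to invoke the chain rule.

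The proof is essentially mechanical once the composition structure is laid bare; the only place to be careful is the bookkeeping of which rows/columns of the Jacobians correspond to boundary cells (where the maps $\mathcal{H}_j^i$ are constant in both arguments, producing zero rows) versus interior cells (where $h$ and $\tilde F$ contribute). I do not anticipate any substantive obstacle, only the notational care required to match the indexing conventions used in~\eqref{eq:transfo} and~\eqref{eq:costf}.
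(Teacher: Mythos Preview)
Your proposal is correct and follows essentially the same approach as the paper: both proofs reduce to applying the chain rule to~\eqref{eq:costf} for~\eqref{eq:gradl} and to~\eqref{eq:rec} for the recurrence~\eqref{eq:rec_jac}, with the base case coming from the $C$-independence of the initial condition~\eqref{eq:ic}. Your version is simply more explicit than the paper's two-line proof, in particular the observation that the boundary rows of $\J_{\widehat{\bm U}^i}(C)$ vanish so the inner sum over $j\in\bi 1,N_x-2\ei$ extends harmlessly to the full matrix-vector product.
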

\begin{proof}
Applying the chain rule to~\eqref{eq:costf} yields exactly~\eqref{eq:gradl}. Then, applying the chain rule to~\eqref{eq:rec} gives the recurrence relation in~\eqref{eq:rec_jac}. The fact that $\J_{\bm U^0}(C) = \bm 0$ follows from the fact that $\bm U^0$ does not depend on $C$ (but is defined in~\eqref{eq:ic} using the data $\bm U$).
\end{proof}

The explicit expression of the Jacobian matrices $ \J_{\mathcal{H}^{i}}(\widehat{\bm U}^i)$ and $\J_{\mathcal{H}^{i}}(\bm C)$ in \Cref{prop:grad_expr} depends on the choice of the numerical scheme to compute the approximations in $\widehat{\bm U}$. This scheme should only involve smooth functions as assumed at the beginning of this section. This is in particular the case for the TRM and Lax--Friedrichs scheme, and the corresponding Jacobian matrices are given in \Cref{sec:jac}.

Using these expressions, \Cref{alg:grad_fp} provides a first way to compute the gradient vector~\eqref{eq:gradl}. This algorithm can be referred to as a Forward-Propagation algorithm: we visit each \q{time} $i$ sequentially from $0$ to the final time to compute the gradient. The finite volume approximations $\widehat{\bm U}^1, \dots, \widehat{\bm U}^{N_t-1}$ are computed on-the-fly, thus saving some storage space. On the other hand, note that each iteration requires matrix-matrix multiplications. Even though the Jacobian matrices involved in these products are sparse, the stored matrix $\bm G$ will fill up as $i$ grows, rendering the computational and storage costs of each iteration more and more expensive. This could become cumbersome in some applications where the size of this matrix, which is $N_t\times N_x$, is large. 

Inspired by the theory around the fitting of neural networks we propose a Back-Propagation algorithm which allows us to compute this same gradient while only requiring matrix-vector products, thus keeping the computational and storage costs in check. This algorithm is based on the next result.

\begin{corol}
The gradient defined in \Cref{prop:grad_expr} satisfies
\begin{equation}
\begin{aligned}
\frac{\partial L}{\partial  \theta}(\theta)=
\frac{\partial C}{\partial \theta}(\theta)\cdot\sum_{i=0}^{N_t-2} \J_{\mathcal{H}^{i}}( C(\theta))^T\bm\delta^{i+1}(C(\theta)) \veq
\end{aligned}
\label{eq:grad_bp}
\end{equation}
where for $C\in (0,1/2)$, the sequence $(\bm\delta^i(C))_{i\in\bi 1, N_t-1\ei}$ is defined by the recurrence
\begin{equation}
\left\lbrace
\begin{aligned}
&\bm\delta^{N_t-1}(C) = \left(\widehat{\bm U}^{N_t-1}(C)-{\bm U}^{N_t-1} \right)\\
&\bm\delta^{i}(C) = \left(\widehat{\bm U}^{i}(C)-{\bm U}^{i} \right) + \J_{\mathcal{H}^{i}}(\widehat{\bm U}^i(C))^T\bm \delta^{i+1}(C), \quad i\in\bi 0, N_t-2\ei \peq
\end{aligned}
\right.
\label{eq:rec_grad_bp}
\end{equation}
\label{prop:grad_bp}
\end{corol}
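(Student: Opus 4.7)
The plan is to derive the backward recurrence by unrolling the forward Jacobian recurrence~\eqref{eq:rec_jac}, substituting the closed form into expression~\eqref{eq:gradl}, swapping the order of summation, and finally identifying the inner sum as the sequence $(\bm\delta^i)$ satisfying~\eqref{eq:rec_grad_bp}.

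First, I would iterate~\eqref{eq:rec_jac} starting from $\J_{\widehat{\bm U}^0}(C)=\bm 0$ to obtain the explicit formula
\begin{equation*}
\J_{\widehat{\bm U}^i}(C) \;=\; \sum_{k=0}^{i-1} A^i_k(C)\, \J_{\mathcal{H}^k}(C), \qquad i\in\bi 1, N_t-1\ei,
\end{equation*}
where the matrix $A^i_k(C)\in\R^{N_x\times N_x}$ is the ordered product
\begin{equation*}
A^i_k(C) \;=\; \J_{\mathcal{H}^{i-1}}(\widehat{\bm U}^{i-1}(C))\, \J_{\mathcal{H}^{i-2}}(\widehat{\bm U}^{i-2}(C))\cdots \J_{\mathcal{H}^{k+1}}(\widehat{\bm U}^{k+1}(C)),
\end{equation*}
with the convention $A^i_{i-1}(C)=I_{N_x}$ (empty product). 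The validity of this formula is checked by a direct induction on $i$ using~\eqref{eq:rec_jac}.

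Next, I would substitute this closed form into~\eqref{eq:gradl}, transpose, and interchange the two summations (which are finite), yielding
\begin{equation*}
\sum_{i=1}^{N_t-1} \J_{\widehat{\bm U}^i}(C)^T \bigl(\widehat{\bm U}^i(C)-\bm U^i\bigr)
\;=\; \sum_{k=0}^{N_t-2} \J_{\mathcal{H}^k}(C)^T \underbrace{\sum_{i=k+1}^{N_t-1} A^i_k(C)^T \bigl(\widehat{\bm U}^i(C)-\bm U^i\bigr)}_{=:\,\bm\delta^{k+1}(C)}.
\end{equation*}
Reindexing with $i=k+1\in\bi 1, N_t-1\ei$ in the outer sum gives exactly the right-hand side of~\eqref{eq:grad_bp}, modulo verifying that $\bm\delta^{k+1}(C)$ as defined here coincides with the sequence described by the backward recurrence~\eqref{eq:rec_grad_bp}.

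For the final identification, I would check the base case $\bm\delta^{N_t-1}(C)=\widehat{\bm U}^{N_t-1}(C)-\bm U^{N_t-1}$, which follows since the defining sum contains a single term with $A^{N_t-1}_{N_t-2}(C)=I_{N_x}$. For the recursion, the key observation is the factorization $A^i_{j-1}(C) = A^i_{j}(C)\,\J_{\mathcal{H}^{j}}(\widehat{\bm U}^{j}(C))$ for $i>j$, together with $A^j_{j-1}(C)=I_{N_x}$. Splitting off the $i=j$ term in the definition of $\bm\delta^{j}(C)$ and factoring $\J_{\mathcal{H}^{j}}(\widehat{\bm U}^{j}(C))^T$ out of the remaining terms then yields~\eqref{eq:rec_grad_bp}. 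The main bookkeeping obstacle is simply the index shift between $\bm\delta^{k+1}$ in the sum and $\bm\delta^{j}$ in the recurrence, together with the non-commutativity of the matrix product $A^i_k(C)$, which forces one to be careful about the order in which Jacobians are composed on the left versus on the right.
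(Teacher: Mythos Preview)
Your argument is correct. You unroll the forward Jacobian recurrence to a closed form involving ordered products $A^i_k$, substitute into~\eqref{eq:gradl}, swap the (finite) summations, and then verify that the inner sum obeys the backward recurrence~\eqref{eq:rec_grad_bp}. Each step is sound, including the careful handling of the transpose of the non-commutative product.

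The paper, however, takes a different and somewhat slicker route. Rather than unrolling anything, it starts from the claimed expression $\sum_{l} \J_{\mathcal{H}^{l}}(C)^T\bm\delta^{l+1}$ and shows directly that it equals $\sum_{l}\J_{\widehat{\bm U}^l}(C)^T(\widehat{\bm U}^l-\bm U^l)$. The key trick is to use the one-step forward recurrence~\eqref{eq:rec_jac} in the form $\J_{\mathcal{H}^{l}}(C)=\J_{\widehat{\bm U}^{l+1}}(C)-\J_{\mathcal{H}^{l}}(\widehat{\bm U}^l)\J_{\widehat{\bm U}^l}(C)$, substitute, and then use the one-step backward recurrence~\eqref{eq:rec_grad_bp} in the form $\J_{\mathcal{H}^{l}}(\widehat{\bm U}^l)^T\bm\delta^{l+1}=\bm\delta^l-(\widehat{\bm U}^l-\bm U^l)$. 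This produces a telescoping sum and the identity drops out in a few lines, with no explicit products of Jacobians ever appearing. Your approach is more constructive---it actually \emph{derives} the backward variables rather than verifying a formula handed down from above---whereas the paper's summation-by-parts argument is shorter and avoids the index bookkeeping associated with the products $A^i_k$.
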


\begin{proof}
See \Cref{sec:bp_grad}.
\end{proof}

\Cref{prop:grad_bp} provides an alternative expression for computing the gradient function, which in turn yields \Cref{alg:grad_bp}. This last algorithm can be referred to as a Back-Propagation algorithm: we visit each \q{time} $i$ sequentially from the latest to the initial time to compute the gradient.  Consequently, it is no longer possible to compute the density vectors on-the-fly: they must be computed and stored beforehand. Once this is done, each iteration of \Cref{alg:grad_bp} requires the same computational and storage costs, those associated with products between some sparse matrices and vectors. Hence, these costs are much less influenced by the size of the problem, assuming that there is enough storage space for the density vectors.

\begin{minipage}[c]{0.46\textwidth}
  \vspace{0pt}  
\begin{algorithm}[H]
\SetAlgoLined
\KwIn{Parameter $\theta$, Density matrix $\bm U$.}
\KwOut{Gradient $\displaystyle g=\frac{\partial L}{\partial \theta}(\theta)$ of the cost function~\eqref{eq:costf}.}
\dotfill\\
\medskip
Set $\bm G= \bm 0$
and $g = 0$ \;
Set $\widehat{\bm U}^0=\bm U^0$ \;
\For{$i=0, \dots, N_t-2$}{
$\bm G =
 \J_{\mathcal{H}^{i}}(\widehat{\bm U}^i)\cdot \bm G 
 + \J_{\mathcal{H}^{i}}( C(\theta))$ \;
 $\widehat{\bm U}^{i+1}= \mathcal{H}^i(\widehat{\bm U}^i,  C(\theta) \pv \bm U)$ \; 
$g =  g+ \bm G^T \left(\widehat{\bm U}^{i+1}-{\bm U}^{i+1} \right)$ \;
}
$\displaystyle g=\frac{\partial C}{\partial \theta}(\theta)\cdot g$ \;
\KwRet{$g$.}
\caption{Compute the gradient of the cost function (Forward-Propagation).}
\label{alg:grad_fp}
\end{algorithm}
\end{minipage}\hfill
\begin{minipage}[c]{0.46\textwidth}
  \vspace{0pt}  
\begin{algorithm}[H]
\SetAlgoLined
\KwIn{Parameter $\theta$, Density matrix $\bm U$.}
\KwOut{Gradient $\displaystyle g=\frac{\partial L}{\partial \theta}(\theta)$ of the cost function~\eqref{eq:costf}.}
\dotfill\\
\medskip

Set $\widehat{\bm U}^0=\bm U^0$ \;
\For{$i=0, \dots, N_t-2$}{
$\widehat{\bm U}^{i+1}= \mathcal{H}^i(\widehat{\bm U}^i,  C(\theta) \pv \bm U)$ \; 
}
Set $\bm \delta=\left(\widehat{\bm U}^{N_t-1}-{\bm U}^{N_t-1} \right)$,  $g= \J_{\mathcal{H}^{N_t-2}}(\bm C)^T\bm \delta$\;
\For{$l=N_t-2, \dots, 1$}{
$\displaystyle \bm \delta = \left(\widehat{\bm U}^{l}-{\bm U}^{l} \right) + \J_{\mathcal{H}^{l}}(\widehat{\bm U}^{l})^T\bm \delta$\; 
$\displaystyle g = g+ \J_{\mathcal{H}^{l-1}}(C(\theta))^T\bm\delta $\; 
}
$\displaystyle g=\frac{\partial C}{\partial \theta}(\theta)\cdot g$ \;
\KwRet{$\bm g$}
\caption{Compute the gradient of the cost function (Back-Propagation).}
\label{alg:grad_bp}
\end{algorithm}
\end{minipage}

The minimization problem \eqref{eq:min_pbm2} can then be solved using \Cref{alg:opt}. In this algorithm, convergence is understood as the fulfillment of some numerical criterion based on the value of the gradient of the cost function or on the value of the cost function (or both), and chosen by the user. A typical choice is declaring convergence once the norm of the gradient vector is below some predefined threshold. Algorithms allowing to compute descent directions for various gradient descent algorithms can be found in \citep{ruder2016overview}. We can for instance cite the steepest gradient method for which the descent direction $d$ is computed from the gradient $g$ as 
\begin{equation*}
d=-\alpha \cdot g
\end{equation*}
for some fixed step size $\alpha>0$. The (Polak--Ribi\`{e}re) Conjugate gradient algorithm on the other hand uses, at the $t$-th iteration of the process, the  descent direction $d^{(t)}$ given by
\begin{equation*}
d^{(t)}=-g^{(t)} +\left(\frac{\left(g^{(t)}\right)^T\left(g^{(t)}-g^{(t-1)}\right)}{\left(g^{(t-1)}\right)^Tg^{(t-1)}}\right)d^{(t-1)} \veq
\end{equation*}
where $g^{(t)}$ denotes the gradient of the cost function at the $t$-th iteration \citep{nocedal2006numerical}.
This last algorithm is the one used in the numerical applications of this paper.

\begin{algorithm}
\SetAlgoLined
\KwIn{Density matrix $\bm U$ defined from a discretization with step sizes $\Delta t, \Delta x$\;  Initial value $\theta \in\R$\; A routine $\text{grad}$ to compute gradients (Algorithm \ref{alg:grad_fp} or \ref{alg:grad_bp})\; A routine $\text{dir}$ to compute descent directions (Steepest gradient method, conjugate gradient algorithm,...).}
\KwOut{Parameter $C^*$ of the minimization problem~\eqref{eq:min_pbm2}.}
\dotfill\\
\medskip
\While{Convergence is not achieved}{
Compute the gradient  $\displaystyle g=\frac{\partial L}{\partial \theta}(\theta)$ of the cost function~\eqref{eq:costf} with respect to the unrestricted parameters $\theta$:
\begin{equation*}
g=\text{grad}\left(\theta, \bm U\right) \; ;
\end{equation*}

Compute the descent direction from the gradient: $d=\text{dir}(g)$\;

Update the parameter: 
\begin{equation*}
\theta= \theta + d \; ;
\end{equation*}
}
\KwRet{$C^*=C(\theta)$.}
\caption{Find the optimal control parameter of the discrete dynamical system associated with a density matrix (constant case).}
\label{alg:opt}
\end{algorithm}

\begin{prop}
Let $L_{[P_t,P_x]}$ be the cost function defined in~\eqref{eq:costf_b} and assume that the mappings $\mathcal{H}^i$, $i\ge 0$, defined through~\eqref{eq:transfo} and~\eqref{eq:rec}  are smooth with respect to their arguments. Then, following the notations from \Cref{prop:grad_expr}, the gradient of $L$ is given by
\begin{equation}
\frac{\partial L_{[P_t,P_x]}}{\partial \theta}(\theta)
=\frac{\partial C}{\partial \theta}(\theta)\cdot\sum_{i=1}^{N_t-1} \J_{\widehat{\bm U}^{iP_t}}( C(\theta))^T \bm M_{P_x}^T\left(\bm M_{P_x}\widehat{\bm U}^i(C(\theta))-\bm U^i \right), \quad \theta \in\R \veq
\label{eq:gradl_b}
\end{equation}
where $\bm M_{P_x}\in\R^{N_x\times(P_xM_x)}$ is the averaging matrix defined by
\begin{equation*}
[\bm M_{P_x}]_{j,m}=
\begin{cases}
1/P_x & \text{if } m=l+jP_x \text{ with } l\in\bi 0, P_x-1\ei \\
0 & \text{otherwise}
\end{cases}
\end{equation*}
and the Jacobian matrices $\widehat{\bm U}^\iota(C(\theta))$, $\iota \in\bi 0, (N_t-1)P_t\ei$, are once gain obtained through the recurrence~\eqref{eq:rec_jac}.

Equivalently, this gradient can be obtained by
\begin{equation}
\begin{aligned}
\frac{\partial L_{[P_t,P_x]}}{\partial  \theta}(\theta)=
\frac{\partial C}{\partial \theta}(\theta)\cdot\sum_{\iota=0}^{(N_t-1)P_t-1} \J_{\mathcal{H}^{\iota}}( C(\theta))^T\bm\delta^{\iota+1}(C(\theta)) \veq
\end{aligned}
\label{eq:grad_bp_b}
\end{equation}
where for $C\in (0,1/2)$, the sequence $(\bm\delta^\iota(C))_{\iota\in\bi 0, (N_t-1)P_t\ei}$ is defined by the recurrence
\begin{equation}
\left\lbrace
\begin{aligned}
&\bm\delta^{(N_t-1)P_t}(C) = \bm M_{P_x}^T\left(\bm M_{P_x}\widehat{\bm U}^{N_t-1}(C)-{\bm U}^{N_t-1} \right)\\
&\bm\delta^{\iota}(C) = \ind_{(\iota/P_t)\in\bi 0, N_t-2\ei}\bm M_{P_x}^T\left(\bm M_{P_x}\widehat{\bm U}^{\iota/P_t}(C)-{\bm U}^{\iota/P_t} \right) + \J_{\mathcal{H}^{\iota}}(\widehat{\bm U}^\iota(C))^T\bm \delta^{\iota+1}(C), \quad \iota\in\bi 0, (N_t-1)P_t-1\ei
\end{aligned}
\right.
\label{eq:rec_grad_bp_b}
\end{equation}
with $\ind_{A}$ denoting the indicator function of a proposition $A$.

\label{prop:grad_b}
\end{prop}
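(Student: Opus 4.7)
The plan is to extend the proof strategies of \Cref{prop:grad_expr} and \Cref{prop:grad_bp} to the multilevel setting by observing that the only new ingredient is the averaging matrix $\bm M_{P_x}$, which relates the fine-grid approximations $\widehat{\bm U}^\iota$ to the coarse-grid data $\bm U^i$.

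First, I would rewrite the cost function~\eqref{eq:costf_b} in matrix form. By definition of $\bm M_{P_x}$, the inner average $\frac{1}{P_x}\sum_{k=0}^{P_x-1}\widehat{U}_{k+jP_x}^{iP_t}(C(\theta))$ is precisely the $j$-th entry of $\bm M_{P_x}\widehat{\bm U}^{iP_t}(C(\theta))$. Moreover, the boundary conditions~\eqref{eq:bc_b} evaluated at $l=0$ impose $\widehat{U}_{k+jP_x}^{iP_t}=U_j^i$ for $j\in\{0,N_x-1\}$ and all $k$, so the boundary entries of $\bm M_{P_x}\widehat{\bm U}^{iP_t}-\bm U^i$ vanish. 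I can therefore drop the restriction $j\in\bi 1,N_x-2\ei$ and obtain the compact form $L_{[P_t,P_x]}(\theta)=\tfrac{1}{2}\sum_{i=1}^{N_t-1}\|\bm M_{P_x}\widehat{\bm U}^{iP_t}(C(\theta))-\bm U^i\|^2$. Formula~\eqref{eq:gradl_b} then follows directly from the chain rule applied to this expression, since $\theta \mapsto \widehat{\bm U}^{iP_t}(C(\theta))$ factors through $C$. The Jacobians $\J_{\widehat{\bm U}^\iota}(C)$ still satisfy the recurrence~\eqref{eq:rec_jac} established in \Cref{prop:grad_expr}, with $\J_{\widehat{\bm U}^0}(C)=\bm 0$ because~\eqref{eq:ic_b} does not depend on $C$.

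For the back-propagation form~\eqref{eq:grad_bp_b}, I would mimic the proof of \Cref{prop:grad_bp}. Unrolling~\eqref{eq:rec_jac} gives, for $\iota\ge 1$, $\J_{\widehat{\bm U}^\iota}(C)=\sum_{\kappa=0}^{\iota-1}\bigl(\prod_{k=\iota-1}^{\kappa+1}\J_{\mathcal{H}^k}(\widehat{\bm U}^k)\bigr)\J_{\mathcal{H}^\kappa}(C)$. Transposing, substituting into~\eqref{eq:gradl_b} with $\iota=iP_t$, and swapping the order of summation from $(i,\kappa)$ to $(\kappa,i)$ yields the desired outer sum over $\kappa=0,\dots,(N_t-1)P_t-1$ of terms $\J_{\mathcal{H}^\kappa}(C)^T\bm\delta^{\kappa+1}(C)$, where
\begin{equation*}
\bm\delta^{\kappa+1}(C)=\sum_{i=\lceil(\kappa+1)/P_t\rceil}^{N_t-1}\Bigl(\prod_{k=\kappa+1}^{iP_t-1}\J_{\mathcal{H}^k}(\widehat{\bm U}^k(C))^T\Bigr)\bm M_{P_x}^T\bigl(\bm M_{P_x}\widehat{\bm U}^{iP_t}(C)-\bm U^i\bigr) \peq
\end{equation*}
It then remains to verify that this closed form satisfies the backward recurrence~\eqref{eq:rec_grad_bp_b}. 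At the terminal index $\kappa+1=(N_t-1)P_t$ only the $i=N_t-1$ term survives with empty product, giving $\bm\delta^{(N_t-1)P_t}=\bm M_{P_x}^T(\bm M_{P_x}\widehat{\bm U}^{N_t-1}-\bm U^{N_t-1})$. For smaller $\iota$, the terms with $iP_t\ge\iota+1$ factor out a single $\J_{\mathcal{H}^\iota}(\widehat{\bm U}^\iota)^T$ applied to $\bm\delta^{\iota+1}$, and whenever $\iota=iP_t$ for some $i$ a new data contribution is picked up, exactly matching the indicator $\ind_{(\iota/P_t)\in\bi 0,N_t-2\ei}$; that this indicator also includes $\iota=0$ is inconsequential because $\widehat{\bm U}^0=\bm U^0$ by~\eqref{eq:ic_b}, so the $\iota=0$ data term vanishes.

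The main obstacle I anticipate is the bookkeeping in the last step: carefully aligning the coarse data times $iP_t$ with the fine scheme times $\iota$, justifying the index shift in the summation swap, and handling the boundary cases of the indicator. The rest is essentially a routine transposition of the argument already carried out for \Cref{prop:grad_bp}.
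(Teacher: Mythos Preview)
Your proposal is correct. The first part, rewriting $L_{[P_t,P_x]}$ as $\tfrac{1}{2}\sum_i\|\bm M_{P_x}\widehat{\bm U}^{iP_t}-\bm U^i\|^2$ and applying the chain rule, is exactly the paper's argument: the paper simply observes that $L_{[P_t,P_x]}$ is obtained from the single-level cost $L$ of~\eqref{eq:costf} by the substitution $\widehat{\bm U}\mapsto\bm M_{P_x}\widehat{\bm U}$ (together with the passage to the fine time grid), so that both~\eqref{eq:gradl_b} and~\eqref{eq:grad_bp_b} follow immediately from \Cref{prop:grad_expr} and \Cref{prop:grad_bp} by this same replacement.

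Where you differ is in the back-propagation half: rather than invoking \Cref{prop:grad_bp} and carrying the substitution through, you re-derive~\eqref{eq:grad_bp_b} from scratch by unrolling the Jacobian recurrence, swapping the order of summation, and then verifying the backward recurrence~\eqref{eq:rec_grad_bp_b} directly. This is a valid and self-contained route, and it has the advantage of making the origin of the indicator term and the alignment between coarse times $iP_t$ and fine times $\iota$ fully explicit. The paper's route is shorter because the telescoping argument has already been done once in the proof of \Cref{prop:grad_bp}; repeating it here is not needed once one accepts that the only change is the insertion of $\bm M_{P_x}^T\bm M_{P_x}$ in front of the residual vectors. One small imprecision: when you argue the $\iota=0$ data term vanishes, the correct statement is $\bm M_{P_x}\widehat{\bm U}^0=\bm U^0$ (from~\eqref{eq:ic_b}), not $\widehat{\bm U}^0=\bm U^0$, since the two vectors live on grids of different sizes.
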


\begin{proof}
This result is a direct consequence of \Cref{prop:grad_expr} and \Cref{prop:grad_bp} after noting that replacing, in the expression~\eqref{eq:costf} of the cost function $L$, the approximation matrix $\widehat{\bm U}$  by the averaged approximations $\bm M_{P_x}\widehat{\bm U}$ yields the expression~\eqref{eq:costf_b} of the cost function $L_{[P_t,P_x]}$. The chain rule then yields the result.
\end{proof}

Then, \Cref{alg:opt} can be used to solve the minimization problem, after adjusting the algorithms for computing gradients according to the previous proposition, thus yielding \Cref{alg:grad_fp_b,alg:grad_bp_b}.

\begin{minipage}[c]{0.46\textwidth}
  \vspace{0pt}  
\begin{algorithm}[H]
\SetAlgoLined
\KwIn{Parameter $\theta$, Density matrix $\bm U$, Subdivision parameters $P_t, P_x$.}
\KwOut{Gradient $\displaystyle g=\frac{\partial L}{\partial \theta}(\theta)$ of the cost function~\eqref{eq:costf_b}.}
\dotfill\\
\medskip
Set $\bm G= \bm 0$\;
Set $g = 0$ \;
Set $\widehat{\bm U}^0=P_x\bm M_{P_x}^T\bm U^0$ \;
\For{$i=0, \dots, (N_t-1)P_t-1$}{
$\bm G =
 \J_{\mathcal{H}^{i}}(\widehat{\bm U}^i)\cdot \bm G 
 + \J_{\mathcal{H}^{i}}( C(\theta))$ \;
 $\widehat{\bm U}^{i+1}= \mathcal{H}^i(\widehat{\bm U}^i,  C(\theta) \pv \bm U)$ \; 
$g =  g+ \bm G^T \bm M_{P_x}^T\left(\bm M_{P_x}\widehat{\bm U}^{i+1}-{\bm U}^{i+1} \right)$ \;
}
$\displaystyle g=\frac{\partial C}{\partial \theta}(\theta)\cdot g$ \;
\KwRet{$g$.}
\caption{Compute the gradient of the cost function (Forward-Propagation).}
\label{alg:grad_fp_b}
\end{algorithm}
\end{minipage}\hfill
\begin{minipage}[c]{0.46\textwidth}
  \vspace{0pt}  
\begin{algorithm}[H]
\SetAlgoLined
\KwIn{Parameter $\theta$, Density matrix $\bm U$, Subdivision parameters $P_t, P_x$.}
\KwOut{Gradient $\displaystyle g=\frac{\partial L}{\partial \theta}(\theta)$ of the cost function~\eqref{eq:costf_b}.}
\dotfill\\
\medskip
Set $\widehat{\bm U}^0=P_x\bm M_{P_x}^T\bm U^0$ \;
\For{$i=0, \dots, (N_t-1)P_t-1$}{
$\widehat{\bm U}^{i+1}= \mathcal{H}^i(\widehat{\bm U}^i,  C(\theta) \pv \bm U)$ \; 
}
Set $\bm \delta=\bm M_{P_x}^T\left(\bm M_{P_x}\widehat{\bm U}^{N_t-1}-{\bm U}^{N_t-1} \right)$ \;
Set $g= \J_{\mathcal{H}^{N_t-2}}(\bm C)^T\bm \delta$\;
\For{$l=(N_t-1)P_t-1, \dots, 1$}{
$\displaystyle \bm \delta =  \J_{\mathcal{H}^{l}}(\widehat{\bm U}^{l})^T\bm \delta$\; 
\If{$(l/P_t) \in\lbrace 0, \dots, N_t-2\rbrace$}{
$\displaystyle \bm \delta = \bm\delta + \bm M_{P_x}^T\left(\bm M_{P_x}\widehat{\bm U}^{l/P_t}-{\bm U}^{l/P_t} \right)$\; 
}
$\displaystyle g = g+ \J_{\mathcal{H}^{l-1}}(C(\theta))^T\bm\delta $\; 
}
$\displaystyle g=\frac{\partial C}{\partial \theta}(\theta)\cdot g$ \;
\KwRet{$\bm g$}
\caption{Compute the gradient of the cost function (Back-Propagation).}
\label{alg:grad_bp_b}
\end{algorithm}
\end{minipage}

\begin{corol}
Let $\tilde L_{[P_t,P_x]}$ be the cost function defined by~\eqref{eq:costf_t} and associated with a set $I_c\subset \bi 1, N_x-2\ei$ of observed columns. Assume that the mappings $\mathcal{H}^i$, $i\ge 0$, defined through~\eqref{eq:transfo} and~\eqref{eq:rec}  are smooth with respect to their arguments. 

Then, following the notations from \Cref{prop:grad_b}, the gradient of $\tilde L_{[P_t,P_x]}$ is given by \Cref{eq:gradl_b} (or equivalently by~\eqref{eq:grad_bp_b})
after replacing the matrix $\bm M_{P_x}$ by the matrix $\tilde{\bm M}_{P_x}$ defined by
\begin{equation*}
[\tilde{\bm M}_{P_x}]_{j,m}=
\begin{cases}
1/P_x & \text{if } j=j_c\in I_c \text{ and } m=l+j_cP_x   \text{ with } l\in\bi 0, P_x-1\ei  \\
0 & \text{otherwise}
\end{cases} \peq
\end{equation*}

\label{prop:grad_t}
\end{corol}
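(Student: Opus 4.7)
The plan is to recognize that $\tilde L_{[P_t,P_x]}$ has the same structure as $L_{[P_t,P_x]}$ once the averaging operator is adapted to the observation set $I_c$. First I would verify, directly from the definition of $\tilde{\bm M}_{P_x}$, that for any $\bm w \in \R^{P_xN_x}$ the vector $\tilde{\bm M}_{P_x} \bm w$ equals $\frac{1}{P_x}\sum_{k=0}^{P_x-1} w_{k+j_cP_x}$ at indices $j_c \in I_c$ and vanishes elsewhere. Introducing the auxiliary target $\tilde{\bm U}^i \in \R^{N_x}$ with $[\tilde{\bm U}^i]_{j_c} = U_{j_c}^i$ for $j_c \in I_c$ and zero otherwise, the cost~\eqref{eq:costf_t} can then be rewritten as
\begin{equation*}
\tilde L_{[P_t,P_x]}(\theta) = \frac{1}{2}\sum_{i=1}^{N_t-1} \bigl\|\tilde{\bm M}_{P_x} \widehat{\bm U}^{iP_t}(C(\theta)) - \tilde{\bm U}^i\bigr\|^2,
\end{equation*}
so that the only differences with~\eqref{eq:costf_b} are the substitutions $\bm M_{P_x} \to \tilde{\bm M}_{P_x}$ and $\bm U^i \to \tilde{\bm U}^i$.

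From there, I would observe that the derivation behind \Cref{prop:grad_b} is purely structural: it applies the chain rule through the dependence of $\widehat{\bm U}^{iP_t}$ on $C$ and then the dependence of $C$ on $\theta$, and invokes the recurrence~\eqref{eq:rec_jac} for the Jacobian $\J_{\widehat{\bm U}^{\iota}}(C)$, or equivalently the backward recurrence~\eqref{eq:rec_grad_bp_b} obtained by matching the two sides of the resulting telescoping sum. None of these steps rely on any property of $\bm M_{P_x}$ beyond its being a constant linear map, so repeating the same computation with $\tilde{\bm M}_{P_x}$ and $\tilde{\bm U}^i$ produces the analogous formulas~\eqref{eq:gradl_b} and~\eqref{eq:grad_bp_b}.

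To arrive at the statement of the corollary, which displays $\bm U^i$ rather than $\tilde{\bm U}^i$, the final step is to notice that $\tilde{\bm M}_{P_x}^T \bm v$ depends on $\bm v \in \R^{N_x}$ only through the entries indexed by $I_c$, since the rows of $\tilde{\bm M}_{P_x}$ indexed by $j \notin I_c$ are identically zero (hence so are the corresponding columns of $\tilde{\bm M}_{P_x}^T$). Because $\tilde{\bm U}^i$ and $\bm U^i$ coincide on $I_c$, this yields $\tilde{\bm M}_{P_x}^T \bm U^i = \tilde{\bm M}_{P_x}^T \tilde{\bm U}^i$, allowing $\bm U^i$ to be written in place of $\tilde{\bm U}^i$ in the final expressions, thus matching the claimed form of the gradient. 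There is no real obstacle here: the entire argument is a bookkeeping reduction to \Cref{prop:grad_b}, and the only point that requires care is verifying that the zero rows of $\tilde{\bm M}_{P_x}$ absorb the arbitrary (unobserved) entries of $\tilde{\bm U}^i$ so that the gradient can be expressed compactly in terms of the original data vector $\bm U^i$ without any explicit restriction to $I_c$.
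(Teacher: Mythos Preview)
Your proposal is correct and follows essentially the same reduction to \Cref{prop:grad_b} as the paper. The only cosmetic difference is that the paper observes that substituting $\tilde{\bm M}_{P_x}$ for $\bm M_{P_x}$ while keeping $\bm U^i$ yields $\tilde L_{[P_t,P_x]}$ plus a $\theta$-independent constant (so the gradients coincide directly), whereas you introduce the auxiliary target $\tilde{\bm U}^i$ and then remove it at the end via the zero rows of $\tilde{\bm M}_{P_x}^T$; both bookkeeping choices lead to the same conclusion.
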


\begin{proof}
This result is a direct consequence of \Cref{prop:grad_expr} \Cref{prop:grad_b} after noting that replacing, in the expression~\eqref{eq:costf_b} of the cost function $L_{[P_t,P_x]}$, the matrix $\bm M_{P_x}$  by the matrix $\tilde{\bm M}_{P_x}\widehat{\bm U}$ yields an expression equal to the sum of the cost function $\tilde L_{[P_t,P_x]}$ (given in~\eqref{eq:costf_t}) and a term that does not depend on the parameter $\theta$ (but only on the entries of the density matrix $\bm U$). Hence, the gradient of this expression (with respect to the parameters) will be the same as the gradient of $\tilde L_{[P_t,P_x]}$, which gives the result.
\end{proof}

Consequently, the gradient of the cost function~\eqref{eq:costf_t} can be computed using either \Cref{alg:grad_fp_b} or \Cref{alg:grad_bp_b} and accounting for the modification described in \Cref{prop:grad_t}. Hence, gradient-based optimization can once again be considered to minimize this cost function.

\section{Jacobian matrices and gradient computations}

\subsection{Jacobian matrices for the TRM and LxF}
\label{sec:jac}

We derive here, for the TRM and LxF schemes, the expression of the Jacobian matrices needed to compute the gradient of the cost functions considered in this work.

Since the initial condition vector $\bm U^0$, given by~\eqref{eq:ic}, does not depend on $\bm C$, $\J_{\bm U^0}(\bm C)=0$.

For the TRM, \Cref{eq:transfo} is used to derive the expression of the remaining Jacobian matrices involved in the recurrence relation~\eqref{eq:rec_fv_trm}: They are sparse matrices, whose non-zero entries are given by
\begin{align}
\left[\J_{\mathcal{H}^{i}}(\bm U^i)\right]_{j,k}
&=\begin{cases}
C_j^i(1-U_j^i) & \text{if } k=j-1 \\
1-C_j^iU_{j-1}^i+C_{j+1}^i(1-U_{j+1}^i) & \text{if } k=j \\
C_{j+1}^iU_j^i & \text{if } k=j+1
\end{cases}, \quad j\in \bi 1, N_x-2\ei, \quad i\in\bi 0, N_t-2\ei \veq \\
\left[\J_{\mathcal{H}^{i}}(\bm C)\right]_{j,k}
&=\begin{cases}
U_{j-1}^i(1-U_j^i) & \text{if } k=j \\
-U_{j}^i(1-U_{j+1}^i) & \text{if } k=j+1 \\
\end{cases}, \quad j\in \bi 1, N_x-2\ei, \quad i\in\bi 0, N_t-2\ei \peq
\end{align}
Similarly, we get for the LxF scheme
\begin{align}
\left[\J_{\mathcal{H}^{i}}(\bm U^i)\right]_{j,k}
&=\begin{cases}
1/2+C_j^i(1-2U_{j-1}^i)/2 & \text{if } k=j-1 \\
C_j^i(1-2U_{j}^i)/2 -C_{j+1}^i(1-2U_{j}^i)/2 & \text{if } k=j \\
1/2-C_{j+1}^i(1-2U_{j+1}^i)/2 & \text{if } k=j+1
\end{cases}, \quad j\in \bi 1, N_x-2\ei, \quad i\in\bi 0, N_t-2\ei \veq \\
\left[\J_{\mathcal{H}^{i}}(\bm C)\right]_{j,k}
&=\begin{cases}
U_{j-1}^i(1-U_{j-1}^i)+U_{j}^i(1-U_{j}^i) & \text{if } k=j \\
U_{j}^i(1-U_{j}^i)+U_{j+1}^i(1-U_{j+1}^i) & \text{if } k=j+1 \\
\end{cases}, \quad j\in \bi 1, N_x-2\ei, \quad i\in\bi 0, N_t-2\ei \peq
\end{align}

\subsection{Back-propagated gradient}
\label{sec:bp_grad}

We present here the proof of \Cref{prop:grad_bp}.

\begin{proof}
Using \Cref{eq:rec_grad_bp}, we have
\begin{equation*}
\begin{aligned}
\sum_{l=0}^{N_t-2} \J_{\mathcal{H}^{l}}(\bm C)^T\bm\delta^{l+1}
&= 
\sum_{l=0}^{N_t-2} 
	\left(\J_{\bm U^{l+1}}(\bm C)-\J_{\mathcal{H}^{l}}(\bm U^l)\J_{\bm U^l}(\bm C)\right)^T\bm\delta^{l+1} \\
&= \sum_{l=0}^{N_t-2}\J_{\bm U^{l+1}}(\bm C)^T\bm\delta^{l+1}-\sum_{l=0}^{N_t-2}\J_{\bm U^l}(\bm C)^T\J_{\mathcal{H}^{l}}(\bm U^l)^T\bm\delta^{l+1} \\
&= \sum_{l=0}^{N_t-2}\J_{\bm U^{l+1}}(\bm C)^T\bm\delta^{l+1}-\sum_{l=1}^{N_t-2}\J_{\bm U^l}(\bm C)^T\J_{\mathcal{H}^{l}}(\bm U^l)^T\bm\delta^{l+1} \veq
\end{aligned}
\end{equation*}
since $\J_{\bm U^{0}}(\bm C)=\bm 0$.
The definition of the sequence $(\bm \delta^l)_{1\le l\le N_t-1}$ in \Cref{eq:rec_grad_bp} then gives
\begin{equation*}
\begin{aligned}
\sum_{l=0}^{N_t-2} \J_{\mathcal{H}^{l}}(\bm C)^T\bm\delta^{l+1}
&= \sum_{l=0}^{N_t-2}\J_{\bm U^{l+1}}(\bm C)^T\bm\delta^{l+1}-\sum_{l=1}^{N_t-2}\J_{\bm U^l}(\bm C)^T(\bm\delta^l-\bm d^l) \\
&= \J_{\bm U^{1}}(\bm C)^T\bm\delta^{1}+\sum_{l=1}^{N_t-2}\left(\J_{\bm U^{l+1}}(\bm C)^T\bm\delta^{l+1}-\J_{\bm U^l}(\bm C)^T\bm\delta^l\right)+\sum_{l=1}^{N_t-2}\J_{\bm U^l}(\bm C)^T\bm d^l \\
&= \J_{\bm U^{1}}(\bm C)^T\bm\delta^{1}+ \J_{\bm U^{N_t-1}}(\bm C)^T\bm\delta^{N_t-1}-\J_{\bm U^{1}}(\bm C)^T\bm\delta^{1} +\sum_{l=1}^{N_t-2}\J_{\bm U^l}(\bm C)^T\bm d^l \\
&= \J_{\bm U^{N_t-1}}(\bm C)^T\bm\delta^{N_t-1}+\sum_{l=1}^{N_t-2}\J_{\bm U^l}(\bm C)^T\bm d^l \peq
\end{aligned}
\end{equation*}
Finally, since from \Cref{eq:rec_grad_bp}, $\bm\delta^{N_t-1}=\bm d^{N_t-1}$, we get
 \begin{equation*}
\sum_{l=0}^{N_t-2} \J_{\mathcal{H}^{l}}(\bm C)^T\bm\delta^{l+1}
=\sum_{l=1}^{N_t-1} \J_{\bm U^l}(\bm C)^T\bm d^l=\frac{\partial L}{\partial \bm C}(\bm C)\veq
\end{equation*}
according to \Cref{eq:gradl}.
\end{proof}

\end{document}